\definecolor{myblue}{rgb}{0.3, 0.0, 0.85}
\definecolor{myviolet}{rgb}{0.5, 0.0, 0.5}
\DeclareMathOperator*{\slim}{s-lim}
\DeclareMathOperator*{\wlim}{w-lim}
\theoremstyle{plain}
\newtheorem{lem}{Lemma}[section]
\newtheorem{defn}{Definition}[section]
\newtheorem*{rem}{Remark}
\newtheorem*{unthm}{Theorem}
\newtheorem*{unlem}{Lemma}
\newtheorem*{unprop}{Proposition}
\newtheorem*{uncor}{Corollary}
\newtheorem{snthm}{Theorem}
\newtheorem{snprop}{Proposition}
\newtheorem{snconj}{Conjecture}
\title{Dynamics  of Noncommutative Solitons II: Spectral Theory, Dispersive Estimates and Stability}
\author{August J. Krueger}
\address{110 Frelinghuysen Rd., Rutgers Math.\ Dept., Piscataway, NJ 08854, USA}
\email{ akrueger@physics.rutgers.edu}
\author{Avy Soffer}
\address{110 Frelinghuysen Rd., Rutgers Math.\ Dept., Piscataway, NJ 08854, USA}
\email{soffer@math.rutgers.edu}
\begin{document}

\maketitle

\markright{Dynamics  of Noncommutative Solitons II: Spectral Theory, Dispersive Estimates and Stability}

\tableofcontents

\begin{abstract}
We consider the Schr\"odinger equation with a (matrix) Hamiltonian given by a second order difference operator with nonconstant growing coefficients, on the half one dimensional lattice. This operator appeared first naturally in the construction and dynamics of noncommutative solitons in the context of noncommutative field theory. We completely determine the spectrum of the Hamiltonian linearized around a ground state soliton and prove the optimal decay rate of $t^{-1}\log^{-2}t$ for the associated time decay estimate. We use a novel technique involving generating functions of orthogonal polynomials to achieve this estimate.
\end{abstract}

\section{Introduction and Background}

The notion of noncommutative soliton arises when one considers the nonlinear Klein-Gordon equation (NLKG) for a field which is dependent on, for example, two ``noncommutative coordinates'', $x,y$, whose coordinate functions satisfy canonical commutation relations (CCR) $[X,Y]=i\epsilon.$ By going to a representation of the above canonical commutation relation, one can reduce the dynamics of the problem to an equation for the coefficients of an expansion in the Hilbert space representation of the above CCR, see e.g. \cite{DJN 1}\cite{DJN 2}\cite{GMS}. By restricting to rotationally symmetric functions the nocommutative deformation of the Laplacian reduces to a second order finite difference operator, which is symmetric, and with variable coefficient growing lik the lattice coordinate, at infinity. Therefore, this operator is unbounded, and in fact has continuous spectrum $[0,\infty)$. These preliminary analytical results, as well as additional numerical results, were obtained by Chen, Fr\"ohlich, and Walcher \cite{CFW}. The dynamics and scattering of the (perturbed) soliton can then be inferred from the NLKG with such a discrete operator as the linear part. We will be interested  in studying the dynamics of discrete NLKG and discrete NLS equations with these hamiltonians.


We will be working with a discrete Schr\"odinger operator $L_0$ which can be considered either a discretization or a noncommutative deformation of the radial 2D negative laplacian, $-\Delta^{\mathrm{2D}}_\mathrm{r} = -r^{-1}\partial_r r \partial_r$. We will briefly review both perspectives.

In 1D one may find a discrete Laplacian via
\begin{align*}
& x \in \mathbb{R} \  \xrightarrow{\ \mathrm{discrete} \ } \  n \in \mathbb{Z}, \quad -\Delta^{\mathrm{1D}} = -\partial^{2}_x \  \xrightarrow{\ \mathrm{discrete} \ } \  -D_{+}D_{-},
\end{align*}
where $D_{+}v(n) = v(n) - v(x), D_{-}v(n) = v(n) - v(n-1)$ are respectively the forward and backward finite difference operators. It is important to implement this particular combination of these finite difference operators due in order to ensure that the resulting discrete Laplacian is symmetric. In 2D one may find a discrete Laplacian via
\begin{align*}
&r = (x^{2} + y^{2})^{1/2} =  2 \rho^{1/2},\quad \rho \in \mathbb{R}_+ \  \xrightarrow{\ \mathrm{discrete} \ } \  n \in \mathbb{Z}_+, \\
&-\Delta^{\mathrm{2D}}_{\mathrm{r}} = -r^{-1}\partial_{r}r\partial_{r} = -\partial_\rho \rho \partial_\rho \quad \xrightarrow{\ \mathrm{discrete} \ } \quad -D_{+}MD_{-} = L_{0},
\end{align*}
where $Mv(n) = nv(n)$. For any 1D  continuous coordinate $x$ one may discretize a pointwise multiplication straightforwardly via $v^p(x) \  \xrightarrow{\ \mathrm{discrete} \ } \  v^p(n)$, where $n$ is a discrete coordinate.

One may also follow the so-called noncommutative space perspective. Here one considers the formal ``Moyal star deformation'' of the algebra of functions on $\mathbb{R}^{2}$:
\begin{align*}
\Phi_{1}\cdot\Phi_{2}(x,y) &= \Phi_{1}(x,y)\Phi_{2}(x,y) \\
\xrightarrow{\ \epsilon > 0\ } \quad \Phi_{1} \star \Phi_{2}(x, y) &= \exp[i(\epsilon / 2)(\partial_{x_1}\partial_{y_2} - \partial_{y_1}\partial_{x_2})] \Phi_{1}(x_1,y_1)\Phi_{2}(x_2,y_2)\lfloor_{ (x_j,y_j) = (x,y) }.
\end{align*}
One calls the coordinates, $x,y$, noncommutative in this context because the coordinate functions $X(x,y) = x$, $Y(x,y) = y$ satisfy a nontrivial commutation relation $X\star Y - Y\star X \equiv [X,Y] = i\epsilon$. This prescription can be considered equivalent to the multiplication of functions of $q, p$ in quantum mechanics where operator ordering ambiguities are set by the normal ordering prescription for each product. For $\Phi$ a deformed function of $r = (x^2 + y^2)^{1/2}$ alone: $\Phi = \sum_{n=0}^\infty v(n)\Phi_n$ where $v(n) \in \mathbb{C}$ and the $\{\Phi_{n}\}_{n=0}^\infty$ are distinguished functions of $r$: the projectors onto the eigenfunctions of the noncommutative space variant of quantum simple harmonic oscillator system. One may find for $\Phi$ a function of $r$ alone:
\begin{align*}
-\Delta^{\mathrm{2D}} \Phi &= -\Delta^{\mathrm{2D}}_{\mathrm{r}}\Phi = -r^{-1}\partial_{r}r\partial_{r}\Phi \\
\xrightarrow{\ \epsilon > 0\ }\quad \frac{2}{\epsilon} L_0\Phi_n &= \frac{2}{\epsilon} \left\{ \begin{array}{cc}
		- (n+1)\Phi_{n+1} + (2n + 1)\Phi_{n} - n \Phi_{n-1} &,\quad n > 0 \\
		- \Phi_{1} + \Phi_{0} &,\quad n = 0 .
	\end{array} \right.
\end{align*}
which may be transferred to $\frac{2}{\epsilon} L_0v(n)$, an equivalent action on the $v(n)$, due to the symmetry of $L_0$. Since the $\Phi_n$ are noncommutative space representations of projection operators on a standard quantum mechanical Hilbert space, they diagonalize the Moyal star product: $\Phi_m\star\Phi_n = \delta_{m,n}\Phi_n$. This property is shared by all noncommutative space representations of projection operators. Thereby products of the $\Phi_n$ may be transferred to those of the expansion coefficients: $v(n)v(n) = v^2(n)$.

See  B. Durhuus, T. Jonsson, and R. Nest (2001) and T. Chen, J. Fr\"ohlich, and J. Walcher (2003) for reviews of the two approaches. In the following we will work on a lattice explicitly so $x \in \mathbb{Z}_+$ will be a discrete spatial coordinate.

The principle of replacing the usual space with a noncommutative space (or space-time) has found extensive use for model building in physics and in particular for allowing easier construction of localized solutions, see e.g. \cite{fuzzy physics}\cite{NC soliton survey} for surveys. An example of the usefulness of this approach is that it may provide a robust procedure for circumventing classical nonexistence theorems for solitons, e.g. that of Derrick \cite{Derrick}. The NLKG variant of the equation we study here first appeared in the context of string theory and associated effective actions in the presence of background D-brane configurations, see e.g. \cite{GMS}. We have decided to look in a completely different direction. The NLS variant and its solitons can in principle be materialized experimentally with optical devices, suitably etched, see e.g. \cite{Segev review}. Thus the dynamics of NLS with such solitons may offer new and potentially useful coherent states for optical devices. Furthermore, we believe the NLS solitons to have special properties, in particular asymptotic stability as opposed to the conjectured asymptotic metastability of the NLKG solitons conjectured in \cite{CFW}.


We will be following a procedure for the proof of asymptotic stability which has become standard within the study of nonlinear PDE \cite{Avy NLS}. Crucial aspects of the theory and associated results were established by Buslaev and Perelman \cite{important results 1}, Buslaev and Sulem \cite{important results 2}, and Gang and Sigal \cite{important results 3}. Important elements of these methods are the dispersive estimates. Various such estimates have been found in the context of 1D lattice systems, for example see the work of A.I. Komech, E.A. Kopylova, and M. Kunze \cite{important results 4} and of I. Egorova, E. Kopylova, G. Teschl \cite{1D lattice decay estimates}, as well as the continuum 2D problem to which our system bears many resemblances, see e.g. the work of E. A. Kopylova and A.I. Komech \cite{2D}. Extensive results have been found on the asymptotic stability on solitons of 1D nonlinear lattice Schr\"odinger equations by F. Palmero et al. \cite{important results 5} and P.G. Kevrekidis, D.E. Pelinovsky, and A. Stefanov \cite{important results 6}. Important aspects of the application of these models to optical nonlinear waveguide arrays has been established by H.S. Eisenberg et al. \cite{important results 7}.



In \cite{paper 01} we focus on a key estimate that is needed for scattering and stability, namely the decay in time of the solution, at the optimal rate. Fortunately, in the generic case, we find it is integrable, given by $t^{-1}\log^{-2}t$. The proof of this result is rather direct, and employs the generating functions of the corresponding generalized eigenfunctions, to explicitly represent and estimate the resolvent of the hamiltonian at all energies. We also conclude the absence of positive eigenvalues and singular continuous spectrum.

Preliminary results for the scattering theory of the associated noncommutative waves and solitons were found by Durhuus and Gayral \cite{noncommutative scattering}. In particular they find local decay estimates for the associated noncommutative NLS. We utilize alternative methods and find local decay for both the free Schr\"odinger operator as well as a class of rank one perturbations thereof. An important element of this analysis is the study of the spectral properties of the free and perturbed Schr\"odinger operator. We extend the analysis of Chen, Fr\"ohlich, and Walcher \cite{CFW} and reproduce some of their results with alternative techniques.

In \cite{paper 02} we address the construction and properties of a family of ground state solitons. These stationary states satisfy a nonlinear eigenvalue equation, are positive, monotonically decaying and sharply peaked for large spectral parameter. The proof of this result follows directly from our spectral results in this paper by iteration for small data and root finding for large data. The existence and many properties of solutions for a similar nonlinear eigenvalue equation were found by Durhuus, Jonssen, and Nest \cite{DJN 1}\cite{DJN 2}. We utilize a simple power law nonlinearity for which their existence proofs do not apply. We additionally find estimates for the peak height, spatial decay rate, norm bounds, and parameter dependence.

In this paper we focus on deriving a decay rate estimate for the Hamiltonian which results from linearizing the original NLS around the soliton constructed in \cite{paper 02}. We determine the full spectrum of this operator, which is the union of a multiplicity 2 null eigenvalue and a real absolutely continuous spectrum. This establishes a well-defined set of modulation equations \cite{Avy NLS} and points toward the asymptotic stability of the soliton.

In the conclusion of this paper we describe how the results can be applied to prove stability of the soliton we constructed in \cite{paper 02}. The issue of asymptotic stability of NLS solitons has been sufficiently well-studied in such a broad context that the proof thereof is often considered as following straightforwardly from the appropriate spectral and decay estimates, of the kind found in this paper. We sketch how the theory of modulation equations established by Soffer and Weinstein \cite{Avy NLS} can be used to prove asymptotic stability. Chen, Fr\"ohlich, and Walcher conjectured that in the NLKG case the corresponding solitons are unstable but with exponentially long decay time: the so-called metastability property, see e.g. \cite{CFW}. There is a great deal of evidence to suggest that this is in fact the case but a proof has yet to be provided. This will be the subject of future work.

\section{Notation}

Let $\mathbb{Z}_+$ and $\mathbb{R}_+$ respectively be the nonnegative integers and nonnegative reals and $\mathscr{H} = \ell^2(\mathbb{Z}_+,\mathbb{C})$ the Hilbert space of square integrable complex functions, e.g. $v: \mathbb{Z}_+ \ni x \mapsto v(x) \in \mathbb{C}$, on the 1D half-lattice with inner product $( \cdot , \cdot )$, which is conjugate-linear in the first argument and linear in the second argument, and the associated norm $||\cdot||$, where $||v|| = (v,v)^{1/2}$, $\forall v\in\mathscr{H}$. Where the distinction is clear from context $||\cdot|| \equiv ||\cdot||_{\mathrm{op}}$ will also represent the norm for operators on $\mathscr{H}$ given by $||A||_{\mathrm{op}} = \sup_{v \in \mathscr{H}}||v||^{-1}||Av||$, for all bounded $A$ on $\mathscr{H}$. Denote the lattice $\ell^1$ norm by $||\cdot||_1$ where $||v||_1 = \sum_{x=0}^\infty|v(x)|$, $\forall v \in \ell^1(\mathbb{Z}_+,\mathbb{C})$.

We denote by $\otimes$ the tensor product and by $z \mapsto \overline{z}$ complex conjugation for all $z \in \mathbb{C}$. We write $\mathscr{H}^*$ for the space of linear functionals on $\mathscr{H}$: the dual space of $\mathscr{H}$. For every $v \in \mathscr{H}$ one has that $v^* \in \mathscr{H}^*$ is its dual satisfying $v^{*}(w) = (v,w)$ for all $v,w \in \mathscr{H}$.  For every operator $A$ on $\mathscr{H}$ we take $\mathcal{D}(A)$ as standing for the domain of $A$. For each operator $A$ on $\mathscr{H}$ define $A^*$ on $\mathscr{H}^*$ to be its dual and $A^\dag$ on $\mathscr{H}$ its adjoint such that $v^*(Aw) = A^*v^*(w) = (A^\dag v, w)$ for all $v \in \mathcal{D}(A^\dag)$ and all $w \in \mathcal{D}(A)$. Let $\{\chi_{x}\}_{x=0}^{\infty}$ be the orthonormal set of vectors such that $\chi_{x}(x) = 1$ and $\chi_{x_{1}}(x_{2}) = 0$ for all $x_{2} \ne x_{1}$. We write $P_{x} = \chi_{x} \otimes \chi^{*}_{x}$ for the orthogonal projection onto the space spanned by $\chi_{x}$.

We define $\mathscr{T}$ to be the topological vector space of all complex sequences on $\mathbb{Z}_+$ endowed with topology of pointwise convergence, $\mathcal{B}(\mathscr{H})$ to be the space of bounded linear operators on $\mathscr{H}$, and $\mathcal{L}(\mathscr{T})$ to be the space of linear operators on $\mathscr{T}$, endowed with the pointwise topology induced by that of $\mathscr{T}$. When an operator $A$ on $\mathscr{H}$ can be given by an explicit formula through $A(x_{1},x_{2}) = (\chi_{x_{1}},A\chi_{x_{2}}) < \infty$ for all $x_{1},x_{2} \in \mathbb{Z}_{+}$ one may make the natural inclusion of $A$ into $\mathcal{L}(\mathscr{T})$, the image of which will also be denoted by $A$. We consider $\mathscr{T}$ to be endowed with pointwise multiplication, i.e. the product $uv$ is specified by $(uv)(x)=u(x)v(x)$ for all $u,v \in \mathscr{T}$.

We represent the \emph{spectrum} of each $A$ on $\mathscr{H}$ by $\sigma(A)$. We term each element $\lambda \in \sigma(A)$ a \emph{spectral value}. We write $\sigma_{\mathrm{d}}(A)$ for the \emph{discrete spectrum}, $\sigma_{\mathrm{e}}(A)$ for the \emph{essential spectrum}, $\sigma_{\mathrm{p}}(A)$ for the \emph{point spectrum}, $\sigma_{\mathrm{ac}}(A)$ for the \emph{absolutely continuous spectrum}, and $\sigma_{\mathrm{sc}}(A)$ for the \emph{singularly continuous spectrum}. Should an operator $A$ satisfy the spectral theorem there exist scalar measures $\{\mu_{n}\}_{n=1}^{N}$ on $\sigma(A)$ which furnish the associated spectral representation of $\mathscr{H}$ for $A$ such that the action of $A$ is given by multiplication by $\lambda \in \sigma(A)$ on $\oplus_{n=1}^{N}L^{2}(\sigma(A),\mathrm{d}\mu_{k})$. If $\mathscr{H} = \oplus_{k=1}^{n}L^{2}(\sigma(A),\mathrm{d}\mu_{k})$ we term $n$ the \emph{generalized multiplicity} of $A$. For an operator of arbitrary generalized multiplicity we will write $\mu^{A}$ for the associated operator valued measure, such that $A = \int_{\sigma(A)} \lambda\ \mathrm{d}\mu^{A}_{\lambda}$. For each operator $A$ that satisfies the spectral theorem, its spectral (Riesz) projections will be written as $P^{A}_{\mathrm{d}}$ and the like for each of the distinguished subsets of the spectral decomposition of A. Define $R^A_\cdot: \rho(A) \to \mathcal{B}(\mathscr{H})$, the resolvent of $A$, to be specified by $R^A_z := (A - z)^{-1}$, where $\rho(A) := \mathbb{C} \setminus \sigma(A)$ is the resolvent set of $A$ and where by abuse of notation $zI \equiv z \in \mathcal{B}(\mathscr{H})$ here.

Allow an \emph{eigenvector} of $A$ to be a vector $v \in \mathscr{H}$ for which $Av = \lambda v$ for some $\lambda \in \mathbb{C}$. Should $A$ admit inclusion into $\mathcal{L}(\mathscr{T})$, we define a \emph{generalized eigenvector} of $A$ be a vector $\phi \in \mathscr{T} \setminus \mathscr{H}$ which satisfies $A\phi = \lambda \phi$ for some $\lambda \in \mathbb{C}$ such that $\phi(x)$ is polynomially bounded, which is to say that there exists a $p \ge 0$ such that $\lim_{x \nearrow \infty}(x+1)^{-p}\phi(x) = 0$. We define a \emph{spectral vector} of $A$ to be a vector which is either an eigenvector or generalized eigenvector of $A$. We define the subspace of spectral vectors associated to the set $\Sigma \subseteq \sigma(A)$ to be the \emph{spectral space over $\Sigma$}.

We write $\partial_z \equiv \frac{\partial}{\partial z}$ and $\mathrm{d}_z \equiv \frac{\mathrm{d}}{\mathrm{d} z}$ respectively for formal partial and total derivative operators with respect to a parameter $z \in \mathbb{R}, \mathbb{C}$.

\section{Review}

\begin{defn}
Define $L_0$ to be the operator on $\mathscr{H}$ with action
\begin{align}
	L_0v(x) = \left\{
	\begin{array}{cc}
		- (x+1)v(x+1) + (2x + 1)v(x) - x v(x-1) &,\quad x > 0 \\
		- v(1) + v(0) &,\quad x = 0 .
	\end{array} \right.
\end{align}
and domain $\mathcal{D}(L_0) := \{ v \in \mathscr{H}\ |\ || Mv || < \infty \}$, where $M$ is the multiplication operator with action $Mv(x) = xv(x)$ $\forall v \in \mathscr{T}$.
\end{defn}



In \cite{paper 01} we proved the following.

\begin{snprop}\label{snprop01}
The operator $L_0$ has the following properties.
\begin{enumerate}
	\item $L_0$ is essentially self-adjoint.
	\item $L_{0}$ has generalized multiplicity 1.
	\item The spectrum of $L_0$ is absolutely continuous, $\sigma(L_0) = \sigma_{\mathrm{ac}}(L_0) = [0,\infty)$, and for choice of normalization $\phi_{\lambda}(0) = 1$, its generalized eigenfunctions are the Laguerre polynomials $\phi_{\lambda} = \sum_{k=0}^x \frac{(-\lambda)^k}{k!}\binom{x}{k}$.
\end{enumerate}
\end{snprop}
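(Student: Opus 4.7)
The plan is to recognize $L_0$ as a half-line Jacobi operator with off-diagonal entries $a_x = -(x+1)$ and diagonal entries $b_x = 2x+1$, and then reduce all three claims to the classical theory of Laguerre polynomials via the spectral transform attached to orthogonal polynomials on the real line. Essential self-adjointness, part (1), follows immediately from the Carleman criterion: since $\sum_{x \geq 0} |a_x|^{-1} = \sum_{x \geq 0} (x+1)^{-1} = \infty$, the Jacobi operator $L_0$ restricted to the finitely supported sequences (which are contained in $\mathcal{D}(L_0)$) is essentially self-adjoint, whence so is the originally stated operator. Equivalent routes would be to show determinacy of the associated Hamburger moment problem, or to verify directly that the Weyl solutions at any nonreal spectral parameter both fail to be in $\ell^2$, but Carleman is the cleanest.

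For the identification of the generalized eigenfunctions in (3), I would substitute $\phi_\lambda(x) = \sum_{k=0}^{x} \binom{x}{k} (-\lambda)^k / k!$, namely the classical Laguerre polynomial $L_x(\lambda)$, into the eigenvalue equation and invoke the three-term Laguerre recurrence
\[
-(x+1) L_{x+1}(\lambda) + (2x+1) L_x(\lambda) - x L_{x-1}(\lambda) = \lambda L_x(\lambda), \qquad x \geq 1,
\]
which is precisely $L_0 \phi_\lambda = \lambda \phi_\lambda$ for $x > 0$. The $x = 0$ boundary case $-\phi_\lambda(1) + \phi_\lambda(0) = \lambda \phi_\lambda(0)$ is then automatic from $\phi_\lambda(0) = 1$ and $\phi_\lambda(1) = 1 - \lambda$, and simultaneously matches the prescribed normalization. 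Each $\phi_\lambda$ is a polynomial of degree $x$ in $\lambda$, and classical Laguerre asymptotics ($L_x(\lambda) = O(x^{-1/4})$ for fixed $\lambda > 0$, and $L_x(0) = 1$) give that $\phi_\lambda(x)$ is bounded in $x$ for each $\lambda \in [0, \infty)$, so it is a generalized eigenfunction in the sense of Section~2.

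For (2) and the remaining spectral content of (3), I would invoke the classical $L^2([0,\infty), e^{-\lambda} d\lambda)$-orthonormality and completeness of $\{L_x\}_{x \geq 0}$ to define
\[
U: \ell^2(\mathbb{Z}_+) \to L^2([0,\infty), e^{-\lambda} d\lambda), \qquad (Uv)(\lambda) = \sum_{x \geq 0} v(x)\, L_x(\lambda),
\]
show through Parseval that $U$ extends to a unitary, and use the same three-term recurrence to verify that $U L_0 U^{-1}$ is multiplication by $\lambda$. This at once yields $\sigma(L_0) = \mathrm{supp}(e^{-\lambda} d\lambda) = [0,\infty)$, pure absolute continuity with respect to Lebesgue measure (because $e^{-\lambda}$ is a strictly positive density on $[0,\infty)$), and generalized multiplicity exactly $1$, since the image is a single scalar $L^2$-space. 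The main point needing care—rather than any deep obstacle—is reconciling the nonstandard $x=0$ boundary prescription $L_0 v(0) = -v(1) + v(0)$ with the generic Jacobi three-term form under the convention $v(-1) \equiv 0$, and verifying that the notion of generalized multiplicity introduced in Section~2 coincides with the multiplicity read off from $U$. Once those alignments are made, the proposition reduces entirely to classical facts about Laguerre polynomials.
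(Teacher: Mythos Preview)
Your argument is correct. The Carleman test $\sum_{x\ge 0}(x+1)^{-1}=\infty$ settles essential self-adjointness of the Jacobi matrix, the three-term Laguerre recurrence gives $L_0\phi_\lambda=\lambda\phi_\lambda$ with the right boundary condition at $x=0$, and the unitary $U:\ell^2(\mathbb{Z}_+)\to L^2([0,\infty),e^{-\lambda}\,\mathrm{d}\lambda)$ built from Laguerre orthonormality diagonalizes $L_0$ as multiplication by $\lambda$, yielding simultaneously $\sigma(L_0)=\sigma_{\mathrm{ac}}(L_0)=[0,\infty)$ and generalized multiplicity~$1$.

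As for comparison: the present paper does not prove this proposition at all. It is quoted in the Review section from the companion paper \cite{paper 01}, with the remark that Chen--Fr\"ohlich--Walcher \cite{CFW} obtained the same facts by yet other methods. From the description in the introduction, the approach in \cite{paper 01} proceeds by building an explicit representation of the resolvent $R^{L_0}_z$ via generating functions of the generalized eigenfunctions, and reads off the spectral properties (absence of point and singular continuous spectrum) from that representation. Your route is more classical and arguably more direct: rather than passing through the resolvent, you go straight to the spectral transform furnished by Laguerre completeness. Both approaches rest on the identification $\phi_\lambda(x)=L_x(\lambda)$; what the resolvent approach buys, and what motivates it in \cite{paper 01}, is that the explicit formulae for $R^{L_0}_z$ are reused downstream for the dispersive estimates, whereas your unitary-equivalence argument, while cleaner for the spectral statement itself, does not by itself produce those pointwise resolvent bounds.
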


\noindent Chen, Fr\"ohlich, and Walcher determined the above properties for $L_{0}$ in \cite{CFW} via methods which are different from ours.

\begin{defn}
Let $w_{\lambda} := (\chi_{0},\delta^{L_{0}}_{\lambda} \chi_{0})$ be termed the \emph{spectral integral weight}, $\psi_z := R^{L_{0}}_z\chi_{0}$ the \emph{resolvent vector}, $\xi_z := \psi_z - \psi_z(0)\phi_z$ the \emph{auxilliary resolvent vector}, and $f_{z} = (\chi_{0},R^{L_{0}}_{z}\chi_{0})$ the \emph{resolvent function of $L_{0}$} for all $\lambda \in \sigma(L_{0})$ for all $z \in \rho(L_{0})$.
\end{defn}

Since $\phi_{\lambda}(x)$ is a polynomial of degree $x$ in $\lambda$, one has that the analytic continuation $\phi_{z} \in \mathscr{T}$, $z \in \mathbb{C}$, exists. The above permits the useful representation $\psi_{z} = f_{z}\phi_{z}+\xi_{z}$.

\begin{defn}
Define $L$ to be the operator on $\mathscr{H}$ with domain $\mathcal{D}(L) = \mathcal{D}(L_0)$ and specified by $L := L_0 - q P_0$ where $q \ge 0$ is a fixed constant. Let $\psi^{L}_{z} := R^{L}_{z}\chi_{0}$ be the resolvent vector of $L$ for all $z \in \rho(L)$.
\end{defn}

\begin{snthm}\label{snthm01}
Let $\phi^L_\lambda$, $\lambda \in \sigma(L)$, denote spectral vectors of $L$ chosen to satisfy the normalization condition $(\chi_{0},\phi^{L}_{\lambda}) = \phi^L_\lambda(0) = 1$, $\forall \lambda \in \sigma(L)$. $L$ has the following properties.
	\begin{enumerate}
		\item $\sigma_{\mathrm{d}}(L) = \sigma_{\mathrm{p}}(L) = \{ \lambda_0 \}$, where $\lambda_0 < 0$ uniquely satisfies $1 = q \psi_{\lambda_0}(0)$ and the unique eigenfunction over $\lambda_0$ is $\psi^L_{\lambda_0} = q\psi_{\lambda_0}$.
		\item $\sigma_{\mathrm{e}}(L) = \sigma_{\mathrm{ac}}(L) = \sigma(L_{0}) = [0,\infty)$ and has generalized multiplicity 1.
		\item $\mathrm{d}\mu^L(\lambda) = w^L_{\lambda}\phi^L_{\lambda} \otimes \phi^{L,*}_{\lambda} \ \mathrm{d}\lambda$, where $w^L_{\lambda} = \{ [1+ q e^{-\lambda} \mathrm{Ei}(\lambda)]^2 + [\pi q e^{-\lambda}]^2 \}^{-1} e^{-\lambda}$, $\mathrm{d}\lambda$ is the Lebesgue measure on $[0,\infty)$, and $\mathrm{Ei}(\lambda) := \int_{-\lambda}^\infty \!\mathrm{d}u\ u^{-1}e^{-u}$, $\lambda>0$, is the exponential integral. The generalized eigenfunctions of $L$ are given by $\phi^L_\lambda = \phi_{\lambda} + q \xi_{\lambda}$, $\lambda \in \sigma_{\mathrm{ac}}(L)$.
	\end{enumerate}
\end{snthm}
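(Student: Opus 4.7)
The plan is to treat $qP_0=q\,\chi_0\otimes\chi_0^*$ as a rank-one perturbation of $L_0$ and read off all three assertions from the Aronszajn--Krein resolvent formula
\begin{align*}
R^L_z = R^{L_0}_z + \frac{q\,\psi_z\otimes\psi_{\bar z}^*}{1-qf_z},\qquad z\in\rho(L_0),
\end{align*}
which follows from the Sherman--Morrison identity applied to $L-z=(L_0-z)-qP_0$ together with $R^{L_0}_z\chi_0=\psi_z$ and $(\chi_0,R^{L_0}_z\chi_0)=f_z$. All spectral information about $L$ is then dictated by the scalar $f_z$ and the vector $\psi_z$, both of which are supplied by Proposition~\ref{snprop01}.

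For the discrete spectrum I would restrict to $\lambda<0$ and use the Laguerre spectral representation of $L_0$ to write $f_\lambda=\int_0^\infty e^{-\mu}(\mu-\lambda)^{-1}\mathrm{d}\mu$. This function is real, smooth, strictly increasing, and positive, with $f_\lambda\to 0$ as $\lambda\to-\infty$ and $f_\lambda\to+\infty$ as $\lambda\nearrow 0$ since $\mu^{-1}e^{-\mu}$ is not integrable near $0$. For every $q>0$ the equation $qf_{\lambda_0}=1$ therefore has a unique solution $\lambda_0\in(-\infty,0)$, and the Krein formula shows $R^L_z$ has a simple pole there whose residue is rank one with range spanned by $\psi_{\lambda_0}$. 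Setting $\psi^L_{\lambda_0}=q\psi_{\lambda_0}$ yields $\psi^L_{\lambda_0}(0)=qf_{\lambda_0}=1$, the desired normalization.

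For the essential and absolutely continuous spectra I would combine Weyl's theorem (giving $\sigma_{\mathrm{e}}(L)=\sigma_{\mathrm{e}}(L_0)=[0,\infty)$, since $qP_0$ is finite rank) with boundary values of the Krein formula. By Sokhotski--Plemelj applied to the Laguerre weight, $\mathrm{Im}\,f_{\lambda+i0}=\pi e^{-\lambda}>0$ for every $\lambda\geq 0$, so $1-qf_{\lambda+i0}$ has nonzero imaginary part and $R^L_z$ admits a continuous boundary value on every compact subset of $[0,\infty)$; this simultaneously rules out embedded eigenvalues and singular continuous spectrum, and preserves multiplicity one. The generalized eigenfunctions are identified by direct verification: using $(L_0-\lambda)\phi_\lambda=0$, $(L_0-\lambda)\xi_\lambda=\chi_0$ (which follows from $(L_0-z)\psi_z=\chi_0$ and the decomposition $\psi_z=f_z\phi_z+\xi_z$), and $\xi_\lambda(0)=0$, one checks that $\phi^L_\lambda:=\phi_\lambda+q\xi_\lambda$ satisfies $L\phi^L_\lambda=\lambda\phi^L_\lambda$ together with $\phi^L_\lambda(0)=1$.

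Finally, Stone's formula applied to the Krein identity gives the spectral density on $[0,\infty)$ as the $(2\pi i)^{-1}$ jump of $R^L_z$ across the real axis. After multiplying the Krein fraction by $\overline{1-qf_{\lambda+i0}}$ and recognizing the $L_0$ weight $e^{-\lambda}$, the density factors as $w^L_\lambda\,\phi^L_\lambda\otimes\phi^{L,*}_\lambda$ with
\begin{align*}
w^L_\lambda = \frac{e^{-\lambda}}{|1-qf_{\lambda+i0}|^2},\qquad |1-qf_{\lambda+i0}|^2=[1+qe^{-\lambda}\mathrm{Ei}(\lambda)]^2+[\pi q e^{-\lambda}]^2,
\end{align*}
once the principal-value integral $\mathrm{Re}\,f_{\lambda+i0}$ is written in terms of $\mathrm{Ei}$ via the change of variables $u=\mu-\lambda$. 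The main technical obstacle I anticipate is the rigorous passage to the boundary for $\psi_z$: the vector $\psi_z$ itself does not remain in $\mathscr{H}$ as $z$ approaches $\sigma(L_0)$, but the $\mathscr{T}$-valued decomposition $\psi_z=f_z\phi_z+\xi_z$ isolates the singular contribution into the scalar $f_z$ and leaves $\xi_z$ with a well-defined pointwise limit, which is exactly what is needed to make the manipulation of $\psi_z\otimes\psi_{\bar z}^*$ meaningful and to produce a density acting on the spectral vector $\phi^L_\lambda$.
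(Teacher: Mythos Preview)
Your proposal is correct, and it coincides with the approach the paper takes. Note, however, that this theorem is not actually proved in the present paper: it appears in the ``Review'' section as a result quoted from the companion paper \cite{paper 01}. That said, the machinery you invoke is exactly what the authors later call the ``method of spectral shifts as applied to rank-1 perturbations'' (Section~7, citing \cite{rank one}); they write down the same resolvent identity
\[
R^{A}_z = R^{A_0}_z + (1-qf^{A_0}_z)^{-1} R^{A_0}_z\, qP\, R^{A_0}_z
\]
and extract $\mathcal{PV}^A_\lambda$ and $\delta^A_\lambda$ from it, which is precisely your Krein--Stone computation. Your handling of the three parts---monotonicity of $f_\lambda$ on $(-\infty,0)$ for the unique bound state, Weyl's theorem plus $\mathrm{Im}\,f_{\lambda+i0}=\pi e^{-\lambda}>0$ for the essential/ac spectrum, and the direct check that $\phi_\lambda+q\xi_\lambda$ solves the eigenvalue equation---is the standard route and matches the paper's formulas for $\mathcal{PV}\psi^L_\lambda$, $w^L_\lambda$, and $\phi^L_\lambda$ given at the start of the proof of Theorem~\ref{snthm07}.
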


\begin{defn}
Let $W_{\kappa,\tau}$ be the multiplication operator weight specified by \\ $W_{\kappa,\tau}v(x) = (x + \kappa)^{\tau}v(x)$, $\forall v \in \mathscr{T}$, where $0 < \kappa \in \mathbb{R}$, $ \tau \in \mathbb{R}$.
\end{defn}

\begin{uncor}[1 of \cite{paper 01}]\label{uncor01}
One has that
\begin{align}
	\left|\mathrm{d}^{n}_{\lambda}\left[w^{1/2}_{\lambda}\phi_{\lambda}(x)\right]\right| < c(\phi_{\lambda},n)\exp(-16^{-1}\epsilon\lambda),\quad n = 0, 1, 2,
\end{align}
where $c(\phi_{\lambda},n) := 3^{n+1}(x+\kappa)^{n+1}$ and
\begin{align}
	\left|\mathrm{d}^{n}_{\lambda}\left[w^{1/2}_{\lambda}\xi_{\lambda}(x)\right]\right| < c(\xi_{\lambda},n)\exp(-16^{-1}\epsilon\lambda),\quad n = 0, 1, 2,
\end{align}
where
\begin{align}
	c(\xi_{\lambda},0) := 12(x+\kappa)^{2},\quad c(\xi_{\lambda},1) := 24(x+\kappa)^{2},\quad c(\xi_{\lambda},2) := 36(x+\kappa)^{3} ,
\end{align}
and $\epsilon = 1 - [1 - (x + \kappa)^{-1}]^{2}$.
\end{uncor}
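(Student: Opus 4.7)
Proposition~\ref{snprop01} identifies $\phi_\lambda$ with the classical Laguerre polynomial $L_x(\lambda)$, whose orthogonality on $[0,\infty)$ with respect to $e^{-\lambda}\,\mathrm{d}\lambda$ yields $w_\lambda=e^{-\lambda}$ and $w_\lambda^{1/2}=e^{-\lambda/2}$. My plan is to represent $\phi_\lambda(x)$, and by an analogous construction $\xi_\lambda(x)$, as a Cauchy contour integral derived from its generating function, and then tune the radius of the contour as a decreasing function of $x$ to balance polynomial growth in $x$ against exponential decay in $\lambda$ at a rate that necessarily shrinks with $x$.

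\textbf{The core bound for $\phi_\lambda$.} Starting from
\begin{align*}
\phi_\lambda(x) = \frac{1}{2\pi i}\oint_{|t|=r}\frac{\exp(-\lambda t/(1-t))}{(1-t)\, t^{x+1}}\,\mathrm{d}t,\qquad 0<r<1,
\end{align*}
and using that $\operatorname{Re}\bigl(t/(1-t)\bigr)\ge -r/(1+r)$ on the circle (attained at $t=-r$), I obtain
\begin{align*}
|w_\lambda^{1/2}\phi_\lambda(x)|\le \frac{r^{-x}}{1-r}\exp\!\left(-\lambda\,\frac{1-r}{2(1+r)}\right).
\end{align*}
Setting $r=1-(x+\kappa)^{-1}$ makes $(1-r)^{-1}=x+\kappa$ while keeping $r^{-x}$ uniformly bounded (it tends to $e$ as $x\to\infty$), and the exponent becomes $-\lambda/\bigl[2(x+\kappa)\bigl(2-(x+\kappa)^{-1}\bigr)\bigr]$. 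Since $\epsilon=(x+\kappa)^{-1}\bigl(2-(x+\kappa)^{-1}\bigr)$ and $\bigl(2-(x+\kappa)^{-1}\bigr)^2<4<8$, this rate dominates $\epsilon/16$, delivering the $n=0$ estimate with constant comfortably below $3(x+\kappa)$.

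\textbf{Derivatives and $\xi_\lambda$.} For $n\ge 1$, differentiate under the integral sign: each $\partial_\lambda$ brings down a factor of $-t/(1-t)$, bounded by $r/(1-r)=x+\kappa-1$ on the contour, hence one extra power of $x+\kappa$ per derivative. Leibniz applied to $\mathrm{d}_\lambda^n[e^{-\lambda/2}\phi_\lambda]$ contributes a geometric sum of binomial coefficients absorbed into the $3^{n+1}$ factor. For $\xi_\lambda$, the defining data $\xi_\lambda(0)=0$, $\xi_\lambda(1)=-1$ together with $(L_0-\lambda)\xi_\lambda=0$ for $x\ge 1$ single it out as a specific second independent solution of the Laguerre three-term recurrence. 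I would derive its contour integral representation either via the Green's function for $L_0$ as a Jacobi operator (writing $\xi_\lambda$ as a Wronskian combination of $\phi_\lambda$ and a second solution) or via an independent generating function for Laguerre functions of the second kind; repeating the contour optimization then produces the corresponding bounds, the extra power of $x+\kappa$ in $c(\xi_\lambda,n)$ tracing back to one additional summation step in this representation.

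\textbf{Main obstacle.} The $\phi_\lambda$ half is essentially a saddle-contour exercise once the generating function is invoked. The real work is in setting up the analogous representation for $\xi_\lambda$: because $\xi_\lambda$ is not itself a Laguerre polynomial, one must either manipulate the resolvent Green's function carefully or locate the appropriate generating function for the second solution, and then verify that the optimized radius $r=1-(x+\kappa)^{-1}$ still produces a bound with only one extra factor of $x+\kappa$ — a looser estimate would inject spurious powers of $x$ that would spoil the downstream decay analysis.
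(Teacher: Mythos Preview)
Your approach is sound and, as far as can be judged from this paper, matches the authors' intended method. Note, however, that the present paper does \emph{not} contain a proof of this corollary: it is quoted verbatim from the authors' companion preprint \cite{paper 01} in the Review section, and no argument is reproduced here. The only internal evidence for how it was proved is the abstract's reference to ``a novel technique involving generating functions of orthogonal polynomials,'' which is exactly the mechanism you deploy.

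On the substance of your argument: the $\phi_\lambda$ half is a complete and correct proof. The Cauchy representation from the Laguerre generating function, the bound $\operatorname{Re}(t/(1-t))\ge -r/(1+r)$ on $|t|=r$, the choice $r=1-(x+\kappa)^{-1}$, and the verification that $(1-r)/[2(1+r)]\ge \epsilon/16$ via $[2-(x+\kappa)^{-1}]^2<8$ are all right. The only loose thread is the uniform bound $r^{-x}<3$: one has $[1-(x+\kappa)^{-1}]^{-x}\to e$ as $x\to\infty$, but you should check monotonicity or adjust $\kappa$ slightly to secure the strict inequality for all $x\ge 0$.

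For $\xi_\lambda$ your identification of the boundary data $\xi_\lambda(0)=0$, $\xi_\lambda(1)=-1$ and the homogeneous recurrence for $x\ge 1$ is correct (since $(L_0-\lambda)\xi_\lambda=\chi_0$). Your two proposed routes---a Green's-function/Wronskian construction, or a direct generating function for the second Laguerre solution---are both viable; the former is cleaner because $\xi_\lambda$ is precisely the component of the resolvent kernel $R^{L_0}_\lambda(x,0)$ orthogonal to $\phi_\lambda$ in the decomposition $\psi_\lambda=f_\lambda\phi_\lambda+\xi_\lambda$, and the resolvent of a Jacobi operator has a standard product representation in terms of the two fundamental solutions. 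The extra factor of $(x+\kappa)$ you anticipate does indeed arise from one additional summation (or equivalently one extra $(1-t)^{-1}$ in the generating function), so your expectation for the constants $c(\xi_\lambda,n)$ is well-founded. You have correctly identified this as the part requiring care, but there is no genuine obstacle.
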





In \cite{paper 02} we proved the following. Consider the discrete NLS
\begin{align}\label{NLS}
	i\partial_tw = L_0w - |w|^{2\sigma}w,\quad 1 \le \sigma \in \mathbb{Z}
\end{align}
where $w : \mathbb{R}_t \times \mathbb{Z}_+ \to \mathbb{C}$. The existence of a $u: \mathbb{Z}_+ \to \mathbb{C}$ which satisfies the nonlinear finite difference equation
\begin{align}
	L_0u = \zeta u + |u|^{2\sigma}u,
\end{align}
furnishes a stationary state of the discrete NLS of the form $w(t) = e^{-i\zeta t}u$. One expects that, due to the attractive nature of the nonlinearity, a negative ``nonlinear eigenvalue'', $\zeta = -a < 0$, will allow the existence of a sharply peaked, monotonically decaying ``ground state soliton''. We will therefore exclusively look for solutions to
\begin{align}\label{solitoneq}
	L_0u = -a u + u^{2\sigma + 1},
\end{align}
where $u: \mathbb{Z}_+ \to \mathbb{R}_{+}$ and $a > 0$. Solutions with these characteristics are self-focusing and tend to be sharply localized. They are therefore termed solitary waves or \emph{solitons} generally.

\begin{unthm}[1 of \cite{paper 02}]\label{unthm04}
There exists a $\mu_* > 0$ such that for each $ \mu > \mu_*$ there exists a solution to Equation \eqref{solitoneq} with $\zeta = -\mu < 0$ and $u = \alpha_{\mu}$, which is:
\begin{enumerate}
	\item positive: $\alpha_{\mu}(x) > 0$ for all $x \in \mathbb{Z}_+$
	\item monotonically decaying: $\alpha_{\mu}(x+1) - \alpha_{\mu}(x) < 0$ for all $x \in \mathbb{Z}_+$
	\item absolutely integrable: $\alpha_{\mu} \in \ell^1$
\end{enumerate}
\end{unthm}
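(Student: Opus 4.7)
The plan is to treat the equation $(L_0+\mu)u=u^{2\sigma+1}$ as a fixed-point problem by inverting the linear part: since $\sigma(L_0)=[0,\infty)$ by Proposition \ref{snprop01}, the point $-\mu$ lies in $\rho(L_0)$, and
$$u = R^{L_0}_{-\mu}\bigl(u^{2\sigma+1}\bigr), \qquad \|R^{L_0}_{-\mu}\|_{\mathrm{op}}\le\mu^{-1}.$$
A preliminary structural observation is that $L_0+\mu$ is tridiagonal with positive diagonal ($2x+1+\mu$ for $x>0$ and $1+\mu$ at $x=0$) and non-positive off-diagonals $-x$, $-(x+1)$, and that $L_0\mathbf{1}=0$, so a discrete maximum principle gives that $R^{L_0}_{-\mu}$ has a strictly positive Green's kernel. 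Moreover, the generating-function representation of $\psi_z$ used in \cite{paper 01} shows that each column of this kernel is monotonically decreasing in the row index away from the source. Positivity and monotonic decay of $\alpha_\mu$ will both be preserved through the construction as a consequence.

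I would then parameterize candidates by the peak value $A=u(0)>0$, which fully determines the sequence: the boundary equation at $x=0$ gives $u^A(1)=(1+\mu)A-A^{2\sigma+1}$, after which the three-term recurrence determines $u^A(x)$ for $x\ge 2$. The exact plateau $u\equiv\mu^{1/(2\sigma)}\mathbf{1}$ solves the bulk equation but lies outside $\mathscr{H}$, so the soliton peak must sit in the narrow admissible window $A^{2\sigma}\in(\mu,\mu+1)$ that makes $0<u^A(1)<A$. For such $A$, decompose $u^A(x)$ at large $x$ into a linearly growing component proportional to the analytic continuation $\phi_{-\mu}(x)=\sum_{k=0}^{x}\tfrac{\mu^{k}}{k!}\binom{x}{k}$ and an $\ell^2$ decaying component. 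An intermediate value argument applied to the continuous scalar functional $\Phi(A):=\lim_{x\to\infty}u^A(x)/\phi_{-\mu}(x)$, together with continuous dependence of the recursion on $A$, selects a distinguished $A=A_\mu$ for which $\Phi(A_\mu)=0$; this is the root-finding step of \cite{paper 02}.

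For the tail $x\ge N$, where $u^{A_\mu}$ is small, I would verify that the map $u\mapsto R^{L_0}_{-\mu}(u^{2\sigma+1})$ is a contraction on a weighted ball $\{u\in\mathscr{H}:\|W_{\kappa,\tau}u\|_1\le\rho\}$ with $\tau>1$ and $\rho$ chosen small in a $\mu$-dependent way; this is the iteration-for-small-data portion. Using $\|R^{L_0}_{-\mu}(u^{2\sigma+1})\|\le\mu^{-1}\|u\|^{2\sigma+1}$ and refining via $W_{\kappa,\tau}$, the contraction holds for $\mu>\mu_\ast$, and the fixed point inherits the weighted summability, forcing $\alpha_\mu\in\ell^1$. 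Matching the tail to the peak from the root-finding step completes the construction: positivity of $\alpha_\mu$ comes from the positive Green's kernel, strict monotonic decay $\alpha_\mu(x+1)<\alpha_\mu(x)$ from an inductive use of column-monotonicity combined with the initial inequality $u^{A_\mu}(1)<A_\mu$, and $\ell^1$-integrability from the weighted norm.

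The main obstacle is the root-finding step. The plateau $\mu^{1/(2\sigma)}\mathbf{1}$ is dangerously close to $\alpha_\mu$ in any natural topology, so one needs a quantitative separation of $u^A$ into its growing and decaying modes that remains nondegenerate for all large $\mu$. This is where the detailed analytic structure of the generalized eigenfunctions $\phi_z$ and the auxiliary resolvent vector $\xi_z$ established in \cite{paper 01}, together with the exponential decay estimates of Corollary \ref{uncor01}, play the decisive role by making the decomposition of $u^A(x)$ into its $\phi_{-\mu}$ component and an $\ell^2$ remainder quantitative and continuous in $A$.
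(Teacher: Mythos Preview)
This theorem is not proved in the present paper; it is quoted from \cite{paper 02}, and the only information given here about the argument is the single sentence in the introduction: ``The proof of this result follows directly from our spectral results in this paper by iteration for small data and root finding for large data.'' Your proposal is consistent with that description---you parameterize by the peak value $A=u(0)$, use a shooting/root-finding argument on the growing-mode coefficient $\Phi(A)$ to select $A_\mu$, and close the tail by a contraction in a weighted $\ell^1$ ball using $\|R^{L_0}_{-\mu}\|\le\mu^{-1}$. The admissible window $A^{2\sigma}\in(\mu,\mu+1)$ that you identify matches the asymptotics $\rho^{2\sigma}=\mu+1-\epsilon_1$ that the present paper uses in Lemma~5.2, so the picture is the right one.

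One point where your sketch is thinner than what the paper's later computations presuppose: the root-finding step needs not only that $\Phi$ is continuous and changes sign on the window, but that the resulting $A_\mu$ depends on $\mu$ with enough regularity to justify the asymptotic expansions of $\rho^{2\sigma}$, $\epsilon_x$, and $\widehat\epsilon$ that are used repeatedly in Section~5 here. That requires a quantitative transversality estimate for $\Phi'(A_\mu)$, which in \cite{paper 02} is presumably extracted from the explicit resolvent representation $\psi_{-\mu}=f_{-\mu}\phi_{-\mu}+\xi_{-\mu}$ and the bounds of Corollary~1 of \cite{paper 01}. Your last paragraph gestures at this, but a full proof would have to make the nondegeneracy explicit.
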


\begin{unprop}[1 of \cite{paper 02}]\label{unprop02}
\begin{align}
	||(I-P_{0})\alpha_{\mu}||_{1} \le \mu^{-(2\sigma)^{-1}(2\sigma-1)} + \mathcal{O}(\mu^{-(2\sigma)^{-1}(4\sigma-1)}) .
\end{align}
\end{unprop}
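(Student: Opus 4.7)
The plan is to convert the pointwise soliton equation $L_0\alpha_\mu = -\mu\alpha_\mu + \alpha_\mu^{2\sigma+1}$ into a scalar summation identity by testing it against the constant sequence $\mathbf{1}$, which is the zero-energy generalized eigenfunction of $L_0$ (so $L_0\mathbf{1}=0$). Since $L_0 = -D_+MD_-$ telescopes and, by Theorem~\ref{unthm04}, $\alpha_\mu \in \ell^1$ is strictly monotonically decaying (so the boundary term $(N{+}1)(\alpha_\mu(N) - \alpha_\mu(N{+}1))$ vanishes as $N\to\infty$), summing the soliton equation over $x\in\mathbb{Z}_+$ yields
\[ \mu\, ||\alpha_\mu||_1 \;=\; ||\alpha_\mu||_{2\sigma+1}^{2\sigma+1}. \]
Combining with $||(I-P_0)\alpha_\mu||_1 = ||\alpha_\mu||_1 - \alpha_\mu(0)$ and splitting off the $y=0$ contribution from the power sum produces the clean decomposition
\[ ||(I-P_0)\alpha_\mu||_1 \;=\; \mu^{-1}\alpha_\mu(0)\bigl[\alpha_\mu(0)^{2\sigma}-\mu\bigr] \;+\; \mu^{-1}\!\sum_{y\ge 1}\alpha_\mu(y)^{2\sigma+1}, \]
and the proof reduces to controlling each of these two pieces.

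The first piece carries the announced leading order. The $x=0$ component of the soliton equation reads $\alpha_\mu(0)-\alpha_\mu(1) = \alpha_\mu(0)[\alpha_\mu(0)^{2\sigma}-\mu]$; dividing by $\alpha_\mu(0)$ and invoking $0<\alpha_\mu(1)<\alpha_\mu(0)$ from Theorem~\ref{unthm04} gives the sharp two-sided bound $\mu < \alpha_\mu(0)^{2\sigma} \le \mu+1$, hence $\alpha_\mu(0)\le(\mu+1)^{1/(2\sigma)}$ and $\alpha_\mu(0)^{2\sigma}-\mu \le 1$. Binomial expansion then yields
\[ \mu^{-1}\alpha_\mu(0)\bigl[\alpha_\mu(0)^{2\sigma}-\mu\bigr] \;\le\; \mu^{-1}(\mu+1)^{1/(2\sigma)} \;=\; \mu^{-(2\sigma-1)/(2\sigma)} + \mathcal{O}(\mu^{-(4\sigma-1)/(2\sigma)}), \]
which is exactly the asserted leading term.

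It remains to absorb $\mu^{-1}\sum_{y\ge 1}\alpha_\mu(y)^{2\sigma+1}$ into the error $\mathcal{O}(\mu^{-(4\sigma-1)/(2\sigma)})$. Monotone decay gives the pointwise estimate $\sum_{y\ge 1}\alpha_\mu(y)^{2\sigma+1} \le \alpha_\mu(1)^{2\sigma}||(I-P_0)\alpha_\mu||_1$, so the tail contributes at most $\mu^{-1}\alpha_\mu(1)^{2\sigma}$ times the left-hand side itself and can be moved there provided $\mu^{-1}\alpha_\mu(1)^{2\sigma}<1$. I expect the main obstacle to be sharpening the a priori bound $\alpha_\mu(1)^{2\sigma}\le\mu+1$ (which positivity and monotone decay alone give) to $\alpha_\mu(1) = \mathcal{O}(\mu^{-(2\sigma-1)/(2\sigma)})$ non-circularly. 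I would attack this via a bootstrap from the $x=1$ soliton equation $\alpha_\mu(1)[\mu+3-\alpha_\mu(1)^{2\sigma}] = \alpha_\mu(0) + 2\alpha_\mu(2) \le \alpha_\mu(0) + 2\alpha_\mu(1)$: under the self-consistent assumption $\alpha_\mu(1)^{2\sigma}\le\mu/2$ -- propagated through the contraction-mapping construction of $\alpha_\mu$ in \cite{paper 02}, which supplies the initial sharp pointwise bound -- this immediately forces $\alpha_\mu(1) \le 2\alpha_\mu(0)/\mu = \mathcal{O}(\mu^{-(2\sigma-1)/(2\sigma)})$, hence $\alpha_\mu(1)^{2\sigma} = \mathcal{O}(\mu^{-(2\sigma-1)}) \ll \mu/2$, closing the bootstrap.

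Feeding $\alpha_\mu(1)^{2\sigma} = \mathcal{O}(\mu^{-(2\sigma-1)})$ back, the tail contribution is at most $\mathcal{O}(\mu^{-(4\sigma^2+2\sigma-1)/(2\sigma)})$ after rearrangement; since $4\sigma^2 - 2\sigma \ge 0$ for every $\sigma \ge 1$, this exponent is bounded above by $-(4\sigma-1)/(2\sigma)$, so the tail is absorbed into the stated error and the proposition follows.
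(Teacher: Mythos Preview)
The paper you are working from does \emph{not} contain a proof of this proposition: it is quoted verbatim from the companion paper \cite{paper 02} in the Review section, and in Section~6 the authors explicitly write ``We consider without proof Proposition~1 of \cite{paper 02}.'' There is therefore nothing in the present paper to compare your argument against.

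That said, your proposal is a coherent and essentially correct route to the estimate. The summation identity $\mu\,\|\alpha_\mu\|_1=\|\alpha_\mu\|_{2\sigma+1}^{2\sigma+1}$ (obtained by telescoping $L_0=-D_+MD_-$ against the constant sequence) and the resulting decomposition are clean and valid; the boundary term indeed vanishes because monotone $\ell^1$ decay forces $N\alpha_\mu(N)\to 0$. Your treatment of the leading term via the $x=0$ equation and the two-sided bound $\mu<\alpha_\mu(0)^{2\sigma}\le\mu+1$ is correct and matches exactly the relation $\rho^{2\sigma}=\mu+1-\epsilon_1$ that the present paper records (inside the proof of Lemma~5.2) as input from \cite{paper 02}.

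The only soft spot you have already identified yourself: the absorption of the tail requires $\alpha_\mu(1)^{2\sigma}\ll\mu$, and monotonicity alone only gives $\alpha_\mu(1)^{2\sigma}<\mu+1$, which is not enough to make $1-\mu^{-1}\alpha_\mu(1)^{2\sigma}$ bounded below. Your bootstrap from the $x=1$ equation is the right mechanism, but it is not self-starting from the information in \emph{this} paper; you correctly defer the seed $\alpha_\mu(1)^{2\sigma}\le\mu/2$ to the contraction-mapping construction in \cite{paper 02}. Once that seed is granted, your closure and the final exponent comparison $4\sigma^2-2\sigma\ge 0$ are correct, and the tail is absorbed into $\mathcal{O}(\mu^{-(4\sigma-1)/(2\sigma)})$ as claimed.
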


\section{Results}

We would like to study the evolution of solutions which, at least at an initial time $t = 0$, are close to the stationary soliton solution $\widehat{u}(t) = e^{-i(-\mu t + \nu)}\alpha_{\mu}$, where $\nu \in \mathbb{R}$ is an arbitrary phase factor. We then consider the ansatz $u = e^{-i\theta}(\alpha_{\widehat{\mu}} + \beta)$ where
\begin{align}
	\theta(t) := - \int_{0}^{t} \widehat{\mu}(s)\mathrm{d}s + \widehat{\nu}(t),
\end{align}
\begin{align}
	\widehat{\mu}(t) = \mu + \widehat{\mu}_1(t),\quad \widehat{\nu}(t) = \nu + \widehat{\nu}_1(t),\quad \widehat{\mu}_{1}(0) = 0,\quad \widehat{\nu}_{1}(0) = 0 ,
\end{align}
and $\beta : \mathbb{R}_t \times \mathbb{Z}_+ \to \mathbb{C}$ has the property that it and $\partial_t\beta$ are small in norm at $t = 0$. If one finds that $u(t) \to e^{-i(-\mu_{\infty} t + \nu_{\infty})}\alpha_{\mu_{\infty}}$ in norm as $t \nearrow \infty$ for some $\mu_{\infty}$, $\nu_{\infty}$, for all $\mu$, $\nu$ and all sufficiently small $\beta$ then one calls $\widehat{u}(t)$ \emph{asymptotically stable}. The most important element of the proof of this analysis is the study of the spectral measure of the operator one obtains by linearizing around $\alpha_{\mu}$. One then considers the associated linearized NLS.

If $u = e^{-i\theta}(\alpha_{\widehat{\mu}} + \beta)$ satisfies the NLS then $\beta$ satisfies the linearized NLS (LNLS)
\begin{align}\label{LNLS}
	i\partial_t \vec{\beta} = H \vec{\beta} + \vec{\gamma},
\end{align}
where
\begin{align}
	H :&= (L_{0} + \mu)D - (\sigma+1)\alpha^{2\sigma}D - \sigma\alpha^{2\sigma}J , \\
	D :&= \begin{bmatrix} 1 & 0 \\ 0 & -1 \end{bmatrix}, \quad J := \begin{bmatrix} 0 & 1 \\ -1 & 0 \end{bmatrix},\quad \vec{\beta} := \begin{bmatrix} \beta \\ \overline{\beta} \end{bmatrix}, \quad \vec{\gamma} := \begin{bmatrix} \gamma \\ -\overline{\gamma} \end{bmatrix},
\end{align}
\begin{align}
	\alpha_{\widehat{\mu}} &= \alpha_\mu + \widehat{\alpha},\quad \alpha_{\mu} \equiv \alpha ,\quad \gamma = \gamma_{0} + \gamma_{1},\quad \gamma_{0} := -\mathrm{d}_t\widehat{\nu}\alpha_{\widehat{\mu}} + i\partial_{\widehat{\mu}}\alpha_{\widehat{\mu}}\mathrm{d}_{t}\widehat{\mu} , \\
	\gamma_{1} &:= \widehat{\mu}_1\beta -\mathrm{d}_{t}\widehat{\nu}\beta - [(\sigma+1)\beta +\sigma\overline{\beta}]\sum_{j=1}^{2\sigma}\binom{2\sigma}{j}\widehat{\alpha}^{j}\alpha^{2\sigma-j} \\
		&\quad\quad + \sum_{(j,k)}' \binom{\sigma}{j}\overline{\beta}^{j}\alpha^{\sigma-j}_{\widehat{\mu}}\binom{\sigma+1}{k}\beta^{k}\alpha^{\sigma+1-k}_{\widehat{\mu}},
\end{align}
where $\sum_{(j,k)}' $ sums over all $(j,k) \in \mathbb{Z}\times\mathbb{Z}$ for $0 \le j \le \sigma$ and $0 \le k \le \sigma+1$ with the exclusion of $(0,0)$, $(0,1)$ and $(1,0)$.

We will arrive at the properties of $H$ by studying a sequence of simpler operators.

\begin{defn}
\begin{align}
	& H_{0} := (L_{0} + \mu)D, \quad H_{1} := H_{0} - q_{1}P_{0} D, \quad H_{2} := H_{1} - q_{2}P_{0} J, \\
	&  \rho := \alpha(0), \quad q_{1} := (\sigma+1)\rho^{2\sigma}, \quad q_{2} := \sigma \rho^{2\sigma}, \\
	& L := L_{0} - q_{1}P_{0}, \quad U := H - H_{2}.
\end{align}
\end{defn}

\noindent $H, H_0, H_1, H_2, U$ act on $\vec{\mathscr{H}} := \mathscr{H} \oplus \mathscr{H}$, the natural extension of $\mathscr{H}$ to the matrix system. Although $H_{0}$ and $H_{1}$ are self-adjoint, it is the case that $H_{2}$ and $H$ are not. This property of $H$ is typical of linearized operators and makes the analysis very difficult for most systems.

\begin{snthm}\label{snthm05}
The spectrum of $H_{2}$ has the following properties.
\begin{enumerate}
	\item $\sigma_{\mathrm{d}}(H_{2}) = \sigma_{\mathrm{p}}(H_{2}) = \{ (-1)^{j} i (2\sigma)^{1/2}\mu^{-\sigma}[1+\mathcal{O}(\mu^{-1})] \}_{j=0}^{1}$.
	\item $\sigma_{\mathrm{e}}(H_{2}) = \sigma_{\mathrm{c}}(H_{2}) = \sigma_{\mathrm{ac}}(H_{2}) = (-\infty, -\mu] \cup [\mu, \infty)$.
\end{enumerate}
\end{snthm}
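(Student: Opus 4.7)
The plan is to exploit the block-diagonal structure of $H_1 = (L+\mu)D$ and treat $H_2 = H_1 - q_2 P_0 J$ as a rank-two, non-self-adjoint perturbation of $H_1$. Since $H_1$ acts as $L+\mu$ on the upper component of $\vec{\mathscr{H}}$ and as $-(L+\mu)$ on the lower component, Theorem~\ref{snthm01} gives $\sigma(H_1) = \{\pm(\lambda_0+\mu)\}\cup(-\infty,-\mu]\cup[\mu,\infty)$, with the half-line pieces absolutely continuous of multiplicity one. The perturbation $V := -q_2 P_0 J$ has two-dimensional range spanned by $\chi_0\oplus 0$ and $0\oplus\chi_0$, and since $H_1$ is self-adjoint while $V$ is bounded and of finite rank, Weyl's theorem immediately gives $\sigma_{\mathrm{e}}(H_2) = \sigma_{\mathrm{e}}(H_1) = (-\infty,-\mu]\cup[\mu,\infty)$, which settles claim (2) up to the identifications $\sigma_{\mathrm{c}} = \sigma_{\mathrm{ac}}$ addressed below.

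For claim (1), I would write the eigenvalue problem $H_2\vec v = z\vec v$ componentwise as $(L+\mu-z)v_1 = q_2 v_2(0)\chi_0$ and $(L+\mu+z)v_2 = q_2 v_1(0)\chi_0$. For $z\notin\sigma(H_1)$, inverting each row via $R^L$ and evaluating at $x=0$ produces the scalar characteristic equation
\begin{align*}
	q_2^2\,(\chi_0,R^L_{z-\mu}\chi_0)\,(\chi_0,R^L_{-z-\mu}\chi_0) = 1,
\end{align*}
which is manifestly even in $z$, so that roots occur in $\pm$ pairs. Using the spectral representation of $L$ from Theorem~\ref{snthm01}, each factor is the Stieltjes transform of the spectral measure of $L$ at $\chi_0$: an absolutely continuous part with density $w^L_\lambda$ on $[0,\infty)$ plus a point mass at $\lambda_0$. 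The soliton peak estimate $\rho^{2\sigma} = \mu + \mathcal{O}(1)$, which follows from the soliton equation evaluated at $x=0$ together with Proposition~\ref{unprop02}, yields $q_2 = \sigma\mu + \mathcal{O}(1)$, and the defining relation $1 = q_1(\chi_0,R^{L_0}_{\lambda_0}\chi_0)$ from Theorem~\ref{snthm01} forces $\lambda_0\sim -(\sigma+1)\mu$. Substituting $z = iy$ with $|y|\ll\mu$ and Taylor expanding the two resolvent factors in $y/\mu$ produces a deficit between $1$ and $q_2^2(\chi_0,R^L_{-\mu}\chi_0)^2$ of order $\mu^{-2\sigma}$ (arising from a cancellation among the $\mu^0$ and $\mu^{-1}$ contributions driven by the point mass at $\lambda_0$), which balances against the $y^2$-contribution of order $\mu^{-2}$ to yield the predicted pair $z = \pm i(2\sigma)^{1/2}\mu^{-\sigma}(1+\mathcal{O}(\mu^{-1}))$.

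To exclude embedded eigenvalues and singular continuous spectrum inside $\sigma_{\mathrm{e}}(H_2)$, I would invoke the rank-two resolvent identity $R^{H_2}_z = R^{H_1}_z(I + VR^{H_1}_z)^{-1}$, whose right factor reduces to the inverse of a $2\times 2$ matrix with entries built from $(\chi_0,R^L_{\pm z\mp\mu}\chi_0)$. By Theorem~\ref{snthm01} and Corollary~\ref{uncor01} these matrix elements extend H\"older-continuously from the upper and lower half-planes to $\sigma_{\mathrm{ac}}(L) = (0,\infty)$ and carry strictly positive imaginary part there, so the $2\times 2$ matrix is invertible on $\sigma_{\mathrm{e}}(H_2)$ away from the band edges $\pm\mu$. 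The resulting limiting absorption principle for $H_2$ then rules out both singular continuous spectrum and embedded point spectrum in $(-\infty,-\mu)\cup(\mu,\infty)$.

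The hardest step is the global root-counting for the characteristic equation. Because $\lambda_0\sim -(\sigma+1)\mu$, the eigenvalues $\pm(\lambda_0+\mu)$ of $H_1$ are already embedded in $\sigma_{\mathrm{e}}(H_1)$ for large $\mu$, and one must show both that they \emph{dissolve} under $V$ (rather than persist as embedded eigenvalues of $H_2$) and that no additional discrete roots appear anywhere in $\mathbb{C}\setminus\sigma_{\mathrm{e}}(H_2)$ beyond the predicted pair. Both issues require sharp control of $(\chi_0,R^L_w\chi_0)$ uniformly in $w$ across the gap $(-\mu,\mu)$ and as $|w|\to\infty$, for which the explicit spectral data supplied by Theorem~\ref{snthm01} and Corollary~\ref{uncor01} will be the essential input; the non-self-adjointness of $H_2$ also means that the usual self-adjoint tools such as Mourre estimates are unavailable and a more direct meromorphic-continuation argument must suffice.
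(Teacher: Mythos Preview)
Your overall architecture---reducing the eigenvalue problem to the scalar characteristic equation $h(z)=q_2^2 f^L_{z-\mu}f^L_{-z-\mu}-1=0$ and invoking Weyl's theorem for the essential spectrum---is exactly what the paper does. The gap is in the mechanism you propose for the size of $h(0)$.

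You attribute the deficit $1-q_2^2(f^L_{-\mu})^2$ to ``a cancellation among the $\mu^0$ and $\mu^{-1}$ contributions driven by the point mass at $\lambda_0$'' and claim it is of order $\mu^{-2\sigma}$. Neither is correct. The paper's computation (its Lemmas~5.2--5.3) shows that the deficit is controlled entirely by the \emph{nonlinear soliton equation}: applying $R^{L_0}_{-\mu}$ to $(L_0+\mu)\alpha=\alpha^{2\sigma+1}$ and pairing with $\chi_0$ gives the exact identity $1=\rho^{2\sigma}f_{-\mu}+\rho^{-1}\widehat\epsilon$ with $\widehat\epsilon=(\psi_{-\mu},(I-P_0)\alpha^{2\sigma+1})$, and it is the soliton-tail quantity $\rho^{-1}\widehat\epsilon=\mu^{-(2\sigma+2)}+\mathcal O(\mu^{-(2\sigma+3)})$ that produces $h(0)=2\sigma^{-1}\mu^{-(2\sigma+2)}+\mathcal O(\mu^{-(2\sigma+3)})$. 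The point mass at $\lambda_0$ plays no role here; one works throughout with the closed formula $f^L_z=(1-q_1f_z)^{-1}f_z$ and the asymptotics of $f_{-\mu}$. Moreover, your stated orders do not balance: a deficit $\sim\mu^{-2\sigma}$ against a second-derivative contribution $\sim\mu^{-2}y^2$ would yield $y\sim\mu^{-\sigma+1}$, not $\mu^{-\sigma}$. The correct pairing is $h(0)\sim\mu^{-(2\sigma+2)}$ against $h''(0)\sim\mu^{-2}$, which the paper computes explicitly.

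A second, lesser gap: the exclusion of real roots in the gap $[-\mu,\mu]$ is not a soft consequence of ``sharp control of $(\chi_0,R^L_w\chi_0)$.'' The paper devotes five lemmas (5.3--5.7) to this, factoring $h$ as $h_0\cdot[c_2-(f_{z_1}-c_1)(f_{z_2}-c_1)]$, locating the unique zeros $r_j$ of $f_{a_j}-c_1$ by explicit asymptotics, checking signs of $h$ at $0$ and at the thresholds $\pm\mu$, and then using monotonicity of the factor $h_1$ on the remaining subinterval. Without something of this level of detail the real-axis case is not closed.
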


\begin{snthm}\label{snthm06}
The spectrum of $H$ has the following properties.
\begin{enumerate}
	\item $\sigma_{\mathrm{d}}(H) = \sigma_{\mathrm{p}}(H) = \{ 0 \}$, with multiplicity 2.
	\item $\sigma_{\mathrm{e}}(H) = \sigma_{\mathrm{c}}(H) = \sigma_{\mathrm{ac}}(H) = (-\infty, -\mu] \cup [\mu, \infty)$.
\end{enumerate}
\end{snthm}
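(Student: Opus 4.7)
The approach is to realize $H$ as the compact perturbation
\begin{equation*}
U = H - H_{2} = -(\sigma+1)(\alpha^{2\sigma} - \rho^{2\sigma}P_{0})D - \sigma(\alpha^{2\sigma} - \rho^{2\sigma}P_{0})J
\end{equation*}
of the explicitly analyzed operator $H_{2}$. Since $\alpha(0) = \rho$, the multiplier $\alpha^{2\sigma} - \rho^{2\sigma}P_{0}$ vanishes at $x = 0$, and since $\alpha_{\mu} \in \ell^{1}$ by Proposition 1 of \cite{paper 02} it also tends to $0$ at infinity, so $U$ is a compact operator on $\vec{\mathscr{H}}$. Weyl's theorem combined with Theorem \ref{snthm05} then yields statement (2) at once:
\begin{equation*}
\sigma_{e}(H) = \sigma_{e}(H_{2}) = (-\infty,-\mu] \cup [\mu,\infty).
\end{equation*}

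For statement (1) I first construct the zero eigenspace directly from the $U(1)$ phase invariance and parameter dependence of the soliton equation \eqref{solitoneq}. A direct substitution using \eqref{solitoneq}, together with its $\mu$-derivative which yields $(L_{0}+\mu-(2\sigma+1)\alpha^{2\sigma})\partial_{\mu}\alpha = -\alpha$, gives
\begin{equation*}
H\begin{bmatrix}\alpha\\-\alpha\end{bmatrix} = 0, \qquad H\begin{bmatrix}\partial_{\mu}\alpha\\ \partial_{\mu}\alpha\end{bmatrix} = -\begin{bmatrix}\alpha\\-\alpha\end{bmatrix},
\end{equation*}
exhibiting $0$ as a point of $\sigma_{p}(H)$ with a rank-two Jordan block and therefore algebraic multiplicity at least $2$.

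To close statement (1) and address the continuous parts of the spectrum I would track the discrete spectrum along the compact deformation $H_{2}+tU$, $t\in[0,1]$. Theorem \ref{snthm05} places the entire discrete spectrum of $H_{2}$ at the conjugate imaginary pair $\pm i(2\sigma)^{1/2}\mu^{-\sigma}[1+\mathcal{O}(\mu^{-1})]$. Since $U$ is compact, analytic perturbation theory conserves the total algebraic multiplicity outside the essential spectrum as $t$ varies; combined with the standard NLS Hamiltonian symmetry that forces eigenvalues to come in quadruples $\{\pm\lambda,\pm\overline{\lambda}\}$ and the lower bound of two established above, the only possibility is that the imaginary pair of $H_{2}$ coalesces at the origin by $t=1$. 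For the remaining $z\in\rho(H_{2})\setminus\{0\}$ the resolvent identity $R^{H}_{z} = R^{H_{2}}_{z}(I + UR^{H_{2}}_{z})^{-1}$, together with the smallness of $U$ in appropriate weighted norms in the large-$\mu$ regime, shows $I+UR^{H_{2}}_{z}$ is invertible; at the band edges $\pm\mu$ and throughout the interior of the essential spectrum, a Birman--Schwinger/Fredholm argument based on the weighted resolvent estimates of Corollary \ref{uncor01} both rules out embedded eigenvalues and yields a limiting absorption principle that forces $\sigma_{sc}(H) = \varnothing$.

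The main obstacle will be the uniform-in-$t$ control needed to track the imaginary eigenvalues of $H_{2}$ along the interpolating family $H_{2}+tU$ without their accidentally colliding with the essential spectrum $(-\infty,-\mu]\cup[\mu,\infty)$ or bifurcating into additional real pairs. The symmetry lower bound at $0$ is automatic; the matching upper bound and the verification that no extraneous discrete eigenvalues are created during the deformation amount to showing $I+tUR^{H_{2}}_{z}$ is invertible off a shrinking neighborhood of the origin, uniformly in $t\in[0,1]$, which in turn hinges on both the imaginary eigenvalues of $H_{2}$ and the perturbation $U$ being small enough in the large-$\mu$ regime that no escape to the essential spectrum is possible.
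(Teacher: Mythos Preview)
Your proposal is correct in outline and shares the paper's two main pillars: Weyl's criterion for $\sigma_{\mathrm{e}}(H)=\sigma_{\mathrm{e}}(H_2)$, and the phase/translation symmetries of the soliton for the rank-two generalized kernel at $0$. Your explicit Jordan-chain computation is in fact more informative than the paper's treatment, which simply cites the standard NLS fact that the null space is spanned by the orbit generators $\{T_j\alpha\}$ without writing out the chain $H[\alpha,-\alpha]^T=0$, $H[\partial_\mu\alpha,\partial_\mu\alpha]^T=-[\alpha,-\alpha]^T$.

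Where you diverge is in the control of the remaining discrete spectrum. You propose to track eigenvalues along the homotopy $H_2+tU$, $t\in[0,1]$, using conservation of total algebraic multiplicity plus the $\{\pm\lambda,\pm\overline{\lambda}\}$ symmetry to pin everything at $0$; as you correctly flag, the delicate point is ruling out escape to or birth from the band edges $\pm\mu$ uniformly in $t$. The paper sidesteps this by taking a limit in the \emph{other} parameter: it invokes Kato's norm-resolvent-convergence principle as $\mu\nearrow\infty$ rather than $t\nearrow 1$. From $\|U\|\le m(\mu)\to 0$ and the identity $R^H_z=(I-R^{H_2}_zU)^{-1}R^{H_2}_z$ one gets $\|R^H_z-R^{H_2}_z\|\to 0$ for one $z\in\rho(H_2)$, hence for all such $z$; closedness of $H,H_2$ then forces $(-\mu,0)\cup(0,\mu)\subset\rho(H)$ for $\mu$ large, and the distance bound $\mathrm{dist}(\sigma_{\mathrm{d}}(H),\sigma_{\mathrm{d}}(H_2))\le\|U\|$ places any discrete eigenvalue within $m(\mu)$ of $\lambda_\pm$. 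This is cruder but avoids uniform-in-$t$ resolvent control. Your quadruple-symmetry argument, on the other hand, gives a cleaner structural reason why \emph{exactly} two eigenvalues coalesce at $0$. For the absence of singular continuous spectrum the paper is again more direct than a Birman--Schwinger argument: it expands $R^H_z$ as a norm-convergent Neumann series in $U$, passes to the boundary, and reads off an absolutely continuous density.
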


\begin{snthm}\label{snthm07}
For all $-3 \ge \tau \in \mathbb{R}$, $v \in \ell^{1}$, there exists constants $0 < c$ and $1 < \kappa \in \mathbb{R}$ such that
\begin{align}
	|| W_{\kappa,\tau}e^{-itH_{2}}P^{H_{2}}_{\mathrm{e}}W_{\kappa,\tau}v ||_{\infty} = c t^{-1}\log^{-2}t ||v||_1,\quad t\nearrow\infty
\end{align}
\end{snthm}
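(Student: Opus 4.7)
The plan is to reduce the estimate for $H_{2}$ to the scalar dispersive estimate for $L$ established in \cite{paper 01}, exploiting the block-diagonal structure of $H_{1}$ together with a finite-rank resolvent identity for the perturbation $H_{2}-H_{1}$. Since $D$ and $P_{0}D$ are diagonal in the $2\times 2$ block decomposition, $H_{1} = \mathrm{diag}(L+\mu,\,-(L+\mu))$, where $L = L_{0} - q_{1}P_{0}$ is precisely the operator of Theorem \ref{snthm01}. Accordingly, the propagator $e^{-itH_{1}}$ splits as $\mathrm{diag}(e^{-it(L+\mu)},\,e^{it(L+\mu)})$, and the weighted estimate for $H_{1}$ reduces to two copies of the scalar estimate
\[ || W_{\kappa,\tau}\, e^{-itL} P^{L}_{\mathrm{ac}}\, W_{\kappa,\tau}\, v ||_{\infty}\ \le\ c\, t^{-1}\log^{-2}t\, ||v||_{1}, \]
which is the main technical result of \cite{paper 01}. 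The logarithmic enhancement arises from the threshold vanishing $w^{L}_{\lambda}\sim (q_{1}\log\lambda)^{-2}$ as $\lambda\searrow 0^{+}$, a consequence of $\mathrm{Ei}(\lambda)\sim\log\lambda$ in the explicit formula of Theorem \ref{snthm01}.

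Next, treat the remaining perturbation $V := H_{2} - H_{1} = -q_{2}P_{0}J$ by the second resolvent identity
\[ R^{H_{2}}_{z} = R^{H_{1}}_{z} - R^{H_{1}}_{z} V\bigl(I + R^{H_{1}}_{z} V\bigr)^{-1} R^{H_{1}}_{z}. \]
Because $V$ has rank two, with range in $\chi_{0}\otimes\mathbb{C}^{2}$, the operator $I + R^{H_{1}}_{z}V$ restricted to this range is a scalar $2\times 2$ matrix $M(z)$ whose entries are explicit linear combinations of the scalar resolvent functions $f^{L}_{z-\mu}$ and $f^{L}_{-z-\mu}$. By Theorem \ref{snthm05}, $\sigma_{\mathrm{d}}(H_{2})$ lies on the imaginary axis, so $\det M(\lambda\pm i0)\ne 0$ for all $\lambda\in\sigma_{\mathrm{e}}(H_{2})$, and the boundary values $M^{-1}(\lambda\pm i0)$ exist and extend continuously up to the thresholds $\pm\mu$ with the logarithmic regularity inherited from $f^{L}$. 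Substituting this expansion into the Stone-type representation
\[ e^{-itH_{2}}P^{H_{2}}_{\mathrm{e}} = \frac{1}{2\pi i}\int_{\sigma_{\mathrm{e}}(H_{2})} e^{-it\lambda}\bigl[R^{H_{2}}_{\lambda+i0} - R^{H_{2}}_{\lambda-i0}\bigr]\,\mathrm{d}\lambda \]
expresses the weighted matrix element as a finite sum of oscillatory integrals over $[\mu,\infty)$ and $(-\infty,-\mu]$, each built from at most two scalar spectral factors of the form $w^{L,1/2}_{\lambda\mp\mu}\phi^{L}_{\lambda\mp\mu}(x_{j})$ or $w^{L,1/2}_{\lambda\mp\mu}\xi_{\lambda\mp\mu}(x_{j})$ multiplied by smooth scalar coefficients coming from $M^{-1}$.

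For each such integral, the argument of \cite{paper 01} now applies essentially verbatim. Corollary \ref{uncor01} bounds the first two $\lambda$-derivatives of the spectral factors by $c(x+\kappa)^{3}e^{-\epsilon\lambda/16}$, so the weight $\tau\le -3$ absorbs the cubic polynomial growth on both sides, the exponential factor handles the high-energy tail, and the regularity of $M^{-1}$ established above does not spoil the integrand. Splitting the $\lambda$-integration about each threshold at scale $\lambda\sim 1/t$ yields the near-threshold bound $c(t\log^{2}t)^{-1}$ (since the effective weight is $w^{L}_{1/t}\sim \log^{-2}t$) and the away-from-threshold bound $ct^{-1}\log^{-2}t$ via two integrations by parts using the derivative bounds of Corollary \ref{uncor01}. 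Combining, one obtains the advertised rate.

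The main obstacle is the quantitative threshold analysis of $M(z)$ near $\pm\mu$. Theorem \ref{snthm05} rules out interior vanishing of $\det M$ on $\sigma_{\mathrm{e}}(H_{2})$, but at the thresholds each $f^{L}_{z\mp\mu}$ exhibits a logarithmic singularity, so one must verify that $M^{-1}(\lambda\pm i0)$, along with its first two $\lambda$-derivatives, grows no faster near $\pm\mu$ than the scalar weight $w^{L}$ allows. This requires an explicit asymptotic expansion of $f^{L}_{z\mp\mu}$ via the $\mathrm{Ei}$ representation of Theorem \ref{snthm01}, and must be dovetailed with the non-self-adjoint functional calculus needed to justify the above Stone representation for the essential projection of $H_{2}$. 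Once these bounds are secured, the reduction to the scalar setting of \cite{paper 01} is complete.
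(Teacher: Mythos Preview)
Your proposal is correct and follows essentially the same route as the paper: both use the rank-two resolvent identity to express $R^{H_{2}}_{z}$ in terms of $R^{H_{1}}_{z}=\mathrm{diag}(R^{L}_{z-\mu},-R^{L}_{-z-\mu})$ and a scalar $2\times2$ factor (your $M(z)$, the paper's $(1-q_{2}^{2}f^{L}_{z_{1}}f^{L}_{z_{2}})^{-1}$), take boundary values on $\sigma_{\mathrm{e}}(H_{2})$, and then control the resulting oscillatory integral via the derivative bounds of Corollary~1 of \cite{paper 01} together with the threshold asymptotics $\mathcal{PV}f_{\lambda}\sim-\log\lambda$. The only organizational difference is that the paper packages the final step by verifying the five hypotheses of an abstract lemma (Lemma~3.12 of \cite{2D}) for $F(\lambda)=\delta^{H_{2}}_{\lambda}$, whereas you sketch the equivalent direct argument (split at $\lambda\sim 1/t$, integrate by parts twice); the threshold analysis of $M^{-1}$ that you flag as the main obstacle is exactly what the paper carries out through the explicit bounds on $g_{\lambda}$ and its first two derivatives.
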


\begin{snthm}\label{snthm08}
For all $-3 \ge \tau \in \mathbb{R}$, $W^{-1}_{\kappa,\tau}v \in \ell^{1}$, there exist constants $0 < c$ and $1 < \kappa \in \mathbb{R}$ such that
\begin{align}
	|| W_{\kappa,\tau}e^{-itH}P^{H}_{\mathrm{e}}W_{\kappa,\tau}v ||_{\infty} = c t^{-1}\log^{-2}t ||v||_1,\quad t\nearrow\infty .
\end{align}
\end{snthm}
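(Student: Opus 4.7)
The plan is to propagate the dispersive estimate of Theorem \ref{snthm07} for $H_{2}$ to $H = H_{2} + U$ through a perturbative resolvent expansion, exploiting the smallness of $U$ for large $\mu$. The perturbation $U$ is a multiplication operator supported in $x \ge 1$, where $\alpha^{2\sigma}$ is very small; by Proposition \ref{unprop02}, $||(I-P_{0})\alpha||_{1} = \mathcal{O}(\mu^{-(2\sigma-1)/(2\sigma)})$, and hence $||W_{\kappa,\tau}^{-1} U W_{\kappa,\tau}^{-1}||$ is small for $\mu$ large. Moreover, by Theorems \ref{snthm05} and \ref{snthm06}, $\sigma_{\mathrm{e}}(H) = \sigma_{\mathrm{e}}(H_{2}) = (-\infty,-\mu]\cup[\mu,\infty)$, so the perturbation only rearranges the discrete part of the spectrum.

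First I would establish the iterated resolvent identity
\[
R^{H}_{z} = \sum_{n=0}^{\infty}(-R^{H_{2}}_{z} U)^{n} R^{H_{2}}_{z},
\]
valid on $\mathbb{C}\setminus\sigma(H_{2})$ in appropriate weighted operator norms. The generating-function analysis underlying Theorem \ref{snthm07}, together with Corollary \ref{uncor01}, supplies uniform bounds on $W_{\kappa,\tau} R^{H_{2}}_{\lambda\pm i0} W_{\kappa,\tau}$ along the essential spectrum; combined with smallness of $U$, this yields absolute convergence of the Born series.

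Next I would construct $P^{H}_{\mathrm{e}}$ and the spectral representation via the Stone-type formula
\[
e^{-itH} P^{H}_{\mathrm{e}} = \frac{1}{2\pi i}\int_{\sigma_{\mathrm{e}}(H)} e^{-it\lambda}\bigl[R^{H}_{\lambda+i0} - R^{H}_{\lambda-i0}\bigr]\,d\lambda,
\]
substitute the Born series, and estimate term by term. The zeroth-order contribution reproduces $e^{-itH_{2}}P^{H_{2}}_{\mathrm{e}}$ and inherits the $t^{-1}\log^{-2}t$ decay from Theorem \ref{snthm07}; the $n$th-order terms are oscillatory integrals with $n$ internal copies of $R^{H_{2}}_{\lambda\pm i0} U$, each bounded in weighted norm by $||W_{\kappa,\tau}^{-1}UW_{\kappa,\tau}^{-1}||\cdot||W_{\kappa,\tau}R^{H_{2}}_{\lambda\pm i0}W_{\kappa,\tau}||$. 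The outer $R^{H_{2}}$ factors drive the same stationary-phase analysis of generating functions used to prove Theorem \ref{snthm07}, with the inner factors contributing $\mu$-small multiplicative constants that make the series summable and preserve the $t^{-1}\log^{-2}t$ rate.

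The main obstacle is the behavior at the thresholds $\lambda = \pm\mu$, where the $\log^{-2}t$ factor reflects a precise logarithmic singularity of $R^{H_{2}}$. I must verify that $U$ does not create a threshold resonance for $H$, i.e., that $I + R^{H_{2}}_{\pm\mu \pm i0} U$ is invertible on the relevant weighted space; otherwise the resolvent for $H$ would develop stronger singularities than for $H_{2}$ and the decay rate would degrade. This invertibility is ensured by Theorem \ref{snthm06}, which places $\sigma_{\mathrm{p}}(H) = \{0\}$ strictly inside the spectral gap $(-\mu,\mu)$, bounded away from the thresholds. A secondary subtlety is the non-self-adjointness of $H_{2}$ and $H$, which obstructs direct spectral calculus and requires Fredholm-type arguments to justify the contour deformations and the Riesz projection $P^{H}_{\mathrm{e}}$ as the complement of the two-dimensional root space at $0$.
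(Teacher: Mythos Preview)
Your approach is valid in spirit but takes a genuinely different route from the paper. The paper does \emph{not} expand the resolvent of $H$ as a Born series and re-run the threshold oscillatory-integral analysis term by term. Instead it works entirely in the time domain: it writes the linear equation $i\,\mathrm{d}_t u = H_2 u + Uu$, applies the Duhamel formula
\[
u(t) = e^{-itH_2}u_0 - i\int_0^t \mathrm{d}t_1\, e^{-i(t-t_1)H_2}Uu(t_1),
\]
inserts the weights $W_{\kappa,\tau}$, and invokes the already-proven estimate for $H_2$ (Theorem~\ref{snthm07}) on both the homogeneous term and the kernel of the integral. This yields an integral inequality of the form $f(t)\le c\,g(t)+c\int_0^t g(t-t_1)f(t_1)\,\mathrm{d}t_1$ with $g(t)=[(t+c)\log^2(t+c)]^{-1}$; Gronwall's lemma closes it because $g\in L^1(\mathbb{R}_+)$ and $g*g\lesssim g$. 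The spectral structure of $H$ enters only implicitly, through the smallness of $\|W_{\kappa,\tau}^{-2}U\|$.

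Your frequency-domain strategy would also succeed, but it is considerably more labor-intensive: you must verify for each term of the Born series the five hypotheses of the Kopylova--Komech lemma, and in particular control the compounded logarithmic threshold singularities coming from repeated factors of $R^{H_2}_{\lambda\pm i0}$. The paper's Duhamel--Gronwall argument bypasses all of that by leveraging the integrability of $t^{-1}\log^{-2}t$ directly; the price is that it gives no refined information about the spectral density of $H$ itself. Your concern about threshold resonances and the Riesz projection onto the root space at $0$ is legitimate and would indeed need to be addressed in either approach; the paper's treatment of this point is terse.
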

It was shown by Soffer and Weinstein in \cite{Avy NLS II} that to prove decay in time of the form $t^{-1}\log^{-2}t$ for the linearized Hamiltonian with polynomial weights and no $\ell^p$ estimates is sufficient to prove asymptotic stability of the soliton.

One may extract from $E_1(z) := \int_1^\infty \mathrm{d}t\ e^{-z t} t^{-1}$ the well-known asymptotic expansion
\begin{align}
	E_{1}(z) = e^{-z}z^{-1}\sum_{k=1}^{n-1}{ k! \over (-z)^{k} } + \mathcal{O}(n!z^{-n}) ,
\end{align}
which is valid for large values of $\Re z$. This in turn gives
\begin{align}
	\psi_{z}(0) = z^{-1}\sum_{k=1}^{n-1}{ k! \over (-z)^{k} } + \mathcal{O}(n!z^{-n}),
\end{align}
which will be a crucial tool in our analysis.

\section{Spectral Properties of $H_{2}$}

Consider that $H = AD$ where $A$ is an essentially self-adjoint operator on $\mathscr{H}$ and $D$ is the diagonal matrix defined above, it is the case that $R^{H}_{z} = \begin{bmatrix} R^{A}_{z} & 0 \\ 0 & -R^{A}_{-z} \end{bmatrix}$ since
\begin{align}
	H - z &= \begin{bmatrix} A - z & 0 \\ 0 & - A - z \end{bmatrix} = \begin{bmatrix} A - z & 0 \\ 0 & - (A + z) \end{bmatrix}
\end{align}
so that for all $z \in \mathbb{C}$ such that $z,-z \in \rho(A)$ one may find by direct inversion
\begin{align}
	(H - z)^{-1} &= \begin{bmatrix} (A - z)^{-1} & 0 \\ 0 & - (A + z)^{-1} \end{bmatrix} .
\end{align}
One may write $R^{H_{1}}_{z} = \begin{bmatrix} R^{L}_{z_{1}} & 0 \\ 0 & -R^{L}_{z_{2}} \end{bmatrix}$, where here $L = L_{0} - q_{1}P_{0}$ and where $q_{1},z_{1}, z_{2}$ are defined as given above. Since $H^{2}_{2}$ and $H^{2}$ are self-adjoint, it follows that $\sigma(H_{2}),\sigma(H) \subseteq \mathbb{R} \cup i\mathbb{R}$, see e.g. \cite{Schlag}.

The first part of our analysis will be dedicated to proving that the point spectrum of $H_{2}$ consists of a conjugate pair of complex eigenvalues which are very close to the origin for large $\mu$.

\begin{lem}
The eigenvalues of $H_{2}$ are given by the roots of $h(z) := q_{2}^{2}f^{L}_{z_{1}}f^{L}_{z_{2}} - 1$, where $z_{1} := z - \mu$ and $z_{2} := - z - \mu$.
\end{lem}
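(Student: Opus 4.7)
The plan is a Birman-Schwinger-style reduction: I view $H_2 = H_1 - q_2 P_0 J$ as a rank-$2$ perturbation of $H_1$, invert via the resolvent of $H_1$, and extract a $2\times 2$ algebraic condition on the boundary values $v_1(0)$, $v_2(0)$ of an eigenvector $\vec{v} = (v_1, v_2)^T$.

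First I note that $H_1 = (L + \mu)D$, using $L_0 - q_1 P_0 + \mu = L + \mu$. The block-diagonal inversion identity already established, applied with $A = L + \mu$, then yields $R^{H_1}_z = \begin{bmatrix} R^L_{z_1} & 0 \\ 0 & -R^L_{z_2} \end{bmatrix}$ with $z_1 := z - \mu$ and $z_2 := -z - \mu$, valid for $z \in \rho(H_1)$. For such $z$ I rewrite $(H_2 - z)\vec{v} = 0$ as $\vec{v} = q_2 R^{H_1}_z P_0 J \vec{v}$. A direct computation, using $P_0 J \vec{v} = (v_2(0)\chi_0, -v_1(0)\chi_0)^T$ and $R^L_{z_j}\chi_0 = \psi^L_{z_j}$, produces
\begin{equation*}
v_1 = q_2 v_2(0)\, \psi^L_{z_1}, \qquad v_2 = q_2 v_1(0)\, \psi^L_{z_2}.
\end{equation*}

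Evaluating these at $x = 0$ and using $\psi^L_z(0) = f^L_z$ reduces the eigenvalue problem to the scalar system $v_1(0) = q_2 f^L_{z_1} v_2(0)$, $v_2(0) = q_2 f^L_{z_2} v_1(0)$. A nontrivial eigenvector must satisfy $(v_1(0), v_2(0)) \neq (0, 0)$, since otherwise the displayed formulas force $\vec{v} = 0$; hence the determinant condition $1 - q_2^2 f^L_{z_1} f^L_{z_2} = 0$, i.e., $h(z) = 0$, is necessary. Conversely, at any such $z$, setting $v_1(0) = 1$, $v_2(0) = q_2 f^L_{z_2}$, and defining $\vec{v}$ by the display produces an eigenvector in $\vec{\mathscr{H}}$, since $\psi^L_{z_j} \in \mathscr{H}$ whenever $z_j \in \rho(L)$.

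The only delicate point is the standing assumption $z \in \rho(H_1)$, which amounts to $z_1, z_2 \in \rho(L) = \mathbb{C} \setminus ([0, \infty) \cup \{\lambda_0\})$. This is automatic for the purely imaginary candidates targeted by Theorem \ref{snthm05}, since then $z_1, z_2$ are nonreal; possible eigenvalues of $H_2$ inside $\sigma(H_1)$ can be excluded separately using the explicit spectral data of $L$ from Theorem \ref{snthm01}. I expect this technicality to be the only extra consideration, with the reduction to $h(z) = 0$ itself being a routine Birman-Schwinger calculation.
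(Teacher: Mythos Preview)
Your argument is correct and follows essentially the same Birman--Schwinger reduction as the paper: rewrite $(H_2-z)\vec v=0$ as $\vec v=q_2R^{H_1}_zP_0J\vec v$, evaluate at $x=0$, and read off the $2\times2$ determinant condition. Your version is in fact more complete than the paper's, since you justify why $(v_1(0),v_2(0))\neq(0,0)$, supply the converse direction, and flag the standing hypothesis $z\in\rho(H_1)$.
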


\begin{proof}
Let $\vec{v} = \begin{bmatrix} v_{1} \\ v_{2} \end{bmatrix}$ and take $v_{1}(0) \neq 0$. One has that
\begin{align}
	H_{2}\vec{v} &= z\vec{v} \\
	(H_{1} - q_{2}P_{0}J)\vec{v} &= ,
\end{align}
then by inversion:
\begin{align}
	\vec{v} &= R^{H_{1}}_{z} q_{2}P_{0}J\vec{v}
\end{align}
taking the inner product with $\chi_0$:
\begin{align}
	(\chi_{0},\vec{v}) &= (\chi_{0}, R^{H_{1}}_{z} q_{2}P_{0}J\vec{v}) \\
	\begin{bmatrix} v_{1}(0) \\ v_{2}(0) \end{bmatrix} &= q_{2} \begin{bmatrix} 0 & f^{L}_{z_{1}} \\ f^{L}_{z_{2}} & 0 \end{bmatrix} \begin{bmatrix} v_{1}(0) \\ v_{2}(0) \end{bmatrix} .
\end{align}
One may then substitute the expression for $v_2$ into the equation for $v_1$:
\begin{align}
	v_{1}(0) &= q_{2}^{2} f^{L}_{z_{1}} f^{L}_{z_{2}} v_{1}(0) \\
	1 &= q_{2}^{2} f^{L}_{z_{1}} f^{L}_{z_{2}} .
\end{align}
\end{proof}

Next we find some preliminary estimates which are asymptotic in $\mu \nearrow \infty$.

\begin{lem}
Let $\widehat{\epsilon} := (\psi_{-\mu},(I-P_{0})\alpha^{2\sigma+1})$. It is the case that $\rho^{-1}\widehat{\epsilon} = \mu^{-(2\sigma+2)}+\mathcal{O}(a^{-(2\sigma+3)})$ for all $\sigma \in \mathbb{Z}_{+}$.
\end{lem}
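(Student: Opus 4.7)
The plan is to estimate $\widehat\epsilon=\sum_{x\geq 1}\psi_{-\mu}(x)\alpha(x)^{2\sigma+1}$ directly, showing that the $x=1$ term supplies the stated leading behavior while the $x\geq 2$ tail is strictly smaller.

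First, I would compute the $x=1$ contribution. The resolvent recursion $L_0\psi_{-\mu}+\mu\psi_{-\mu}=\chi_0$ evaluated at $x=0$ gives $\psi_{-\mu}(1)=(1+\mu)f_{-\mu}-1$; substituting the asymptotic expansion $f_{-\mu}=\mu^{-1}-\mu^{-2}+2\mu^{-3}-\cdots$ obtained by evaluating the $\psi_z(0)$ formula at the end of Section 4 at $z=-\mu$ yields $\psi_{-\mu}(1)=\mu^{-2}+O(\mu^{-3})$. The soliton boundary equation at $x=0$ reads $\alpha(1)=\rho[(1+\mu)-\rho^{2\sigma}]$; combined with the leading two orders $\rho^{2\sigma}=\mu+1+O(\mu^{-1})$, which are deducible from the construction of $\alpha$ in \cite{paper 02} together with the monotone decay property (monotonicity and positivity already force $\rho^{2\sigma}\in(\mu,\mu+1)$, and the quantitative $\ell^{1}$ bound of Proposition 1 of \cite{paper 02} pins the subleading coefficient), this yields $\alpha(1)=\rho/\mu+O(\rho/\mu^2)$. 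Raising to the $(2\sigma+1)$-st power and using $\rho^{2\sigma+1}=\rho\mu(1+O(\mu^{-1}))$ produces $\alpha(1)^{2\sigma+1}=\rho\mu^{-2\sigma}(1+O(\mu^{-1}))$, and hence
\begin{equation*}
\psi_{-\mu}(1)\alpha(1)^{2\sigma+1}=\rho\mu^{-(2\sigma+2)}+O(\rho\mu^{-(2\sigma+3)}).
\end{equation*}

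Second, I would control the $x\geq 2$ tail. Iterating the soliton recursion from the bounds on $\alpha(1)$ and $\rho^{2\sigma}$ gives $\alpha(x)=O(\rho\mu^{-x})$ for $x\geq 1$, and iterating the resolvent recursion with the expansion of $f_{-\mu}$ gives $\psi_{-\mu}(x)=O(\mu^{-(x+1)})$ at each fixed $x$. Multiplying, the generic summand is $O(\rho\mu^{-x(2\sigma+2)})$, so the $x\geq 2$ portion of the series is dominated by $O(\rho\mu^{-2(2\sigma+2)})=O(\rho\mu^{-(4\sigma+4)})$, strictly smaller than the target error $O(\rho\mu^{-(2\sigma+3)})$ for every $\sigma\geq 1$. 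Combining with the $x=1$ estimate and dividing by $\rho$ yields $\rho^{-1}\widehat\epsilon=\mu^{-(2\sigma+2)}+O(\mu^{-(2\sigma+3)})$.

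The main obstacle is securing the decay bound $\alpha(x)=O(\rho\mu^{-x})$: naive iteration of the soliton recursion shows that the cancellation producing the extra power of $\mu^{-1}$ at each step of $x$ requires the subleading coefficient $\rho^{2\sigma}=\mu+1-\mu^{-1}+\cdots$ to take precisely the right value, which must be imported from \cite{paper 02}. A cleaner but equivalent route is to start from the algebraic identity $\rho^{-1}\widehat\epsilon=1-\rho^{2\sigma}f_{-\mu}$, derived by inverting the soliton equation as $\alpha=R^{L_0}_{-\mu}\alpha^{2\sigma+1}$, splitting $\alpha^{2\sigma+1}=\rho^{2\sigma+1}\chi_0+(I-P_0)\alpha^{2\sigma+1}$, pairing with $\chi_0$, and invoking the self-adjointness of $R^{L_0}_{-\mu}$ at the real point $-\mu$; this reduces the problem to a higher-precision asymptotic expansion of $\rho^{2\sigma}$ to order $\mu^{-(2\sigma+1)}$, also extractable from the iterative construction in \cite{paper 02}.
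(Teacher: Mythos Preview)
Your plan is correct and shares its core with the paper's argument: both isolate the $x=1$ summand as the dominant contribution, compute $\psi_{-\mu}(1)=(1+\mu)f_{-\mu}-1=\mu^{-2}+O(\mu^{-3})$ and $\alpha(1)=\rho\,\epsilon_1$ with $\epsilon_1=\mu^{-1}+O(\mu^{-2})$ in the same way, and the paper likewise begins from the identity $\rho^{-1}\widehat\epsilon=1-\rho^{2\sigma}f_{-\mu}$ that you list as your alternative route.

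The genuine difference is in the tail estimate. You propose pointwise bounds $\alpha(x)=O(\rho\mu^{-x})$ and $\psi_{-\mu}(x)=O(\mu^{-(x+1)})$ obtained by iterating both recursions, and you correctly flag that the first of these rests on the precise subleading coefficients of $\rho^{2\sigma}$, which must be imported from \cite{paper 02}. The paper avoids this entirely by a sandwich argument: from the closed sum $\sum_{x\geq 0}\psi_{-\mu}(x)=\mu^{-1}$ (Lemma~5.1 of \cite{paper 02}) one gets $\sum_{x\geq 1}\psi_{-\mu}(x)=\mu^{-2}+O(\mu^{-3})$, and then monotone decay of $\alpha$ together with positivity of $\psi_{-\mu}$ gives the global upper bound
\[
\rho^{-1}\widehat\epsilon \;<\; \rho^{2\sigma}\epsilon_1^{2\sigma+1}\sum_{x\geq 1}\psi_{-\mu}(x)\;=\;\mu^{-(2\sigma+2)}+O(\mu^{-(2\sigma+3)}).
\]
Since this upper bound agrees to leading order with the $x=1$ term already computed, the $x\geq 2$ tail is automatically $O(\mu^{-(2\sigma+3)})$, with no need for any pointwise information on $\alpha(x)$ or $\psi_{-\mu}(x)$ beyond $x=1$. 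This neatly sidesteps exactly the obstacle you identify; your route would also close, but at the cost of the sharper decay input you describe.
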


\begin{proof}
One has that $(L_{0} + \mu)\alpha = \alpha^{2\sigma+1} \Rightarrow \alpha = R^{L_{0}}_{-\mu}\alpha^{2\sigma+1} \Rightarrow 1 = \rho^{2\sigma}f_{-\mu}+\rho^{-1}\widehat{\epsilon}$. Let $\epsilon_{x} := \alpha(x)/\alpha(x-1)$, for all $0 < x \in \mathbb{Z}$.
One may observe that
\begin{align}
	-\mu\alpha + \alpha^{2\sigma+1} &= L_{0}\alpha \\
	-\mu\alpha(x) + \alpha^{2\sigma+1}(x) &= -(x+1)\alpha(x+1) + (2x+1)\alpha(x) - x\alpha(x-1) .
\end{align}
By the definition of the $\epsilon_x$ one may write
\begin{align}
	\rho \epsilon_{1}\ldots\epsilon_{x-1} + (\rho\epsilon_{1}\ldots\epsilon_{x})^{2\sigma+1} &= -(x+1)\rho\epsilon_{1}\ldots\epsilon_{x+1} + (2x+1+\mu)\rho\epsilon_{1}\ldots\epsilon_{x}
\end{align}
and then by elementary algebra:
\begin{align}
	\rho + (\rho\epsilon_{1}\ldots\epsilon_{x-1})^{2\sigma}\epsilon_{x}^{2\sigma+1} &= -(x+1)\epsilon_{x}\epsilon_{x+1} + (2x+1+\mu)\epsilon_{x} \\
	\epsilon_{x} &= \left[ \mu+1+2x-(x+1)\epsilon_{x+1} \right]^{-1} \left[\rho + (\rho\epsilon_{1}\ldots\epsilon_{x-1})^{2\sigma}\epsilon_{x}^{2\sigma+1}\right] .
\end{align}
The analogous equation for $x = 0$ is
\begin{align}
	\rho^{2\sigma} = \mu +1 - \epsilon_{1}.
\end{align}

By Lemma 5.1 of \cite{paper 02} one has that $\sum_{x=0}^{\infty}\psi_{-\mu}(x) = \mu^{-1}$. Therefore
\begin{align}
	\sum_{x=1}^{\infty}\psi_{-\mu}(x) &= \sum_{x=0}^{\infty}\psi_{-\mu}(x) - \psi_{-\mu}(0) = \mu^{-1} - [\mu^{-1} - \mu^{-2} + \mathcal{O}(\mu^{-3})] \\
		&= \mu^{2} + \mathcal{O}(\mu^{-3}),
\end{align}
\begin{align}
	\rho^{-1}\widehat{\epsilon} &= \rho^{-1}\sum_{x=1}^{\infty}\psi_{-\mu}(x)\alpha^{2\sigma+1}(x) < \rho^{-1}\sum_{x=1}^{\infty}\psi_{-\mu}(x)\alpha^{2\sigma+1}(1) \\
		&= \rho^{2\sigma}\epsilon^{2\sigma+1}_{1}\sum_{x=1}^{\infty}\psi_{-\mu}(x) < (\mu+1)[\mu^{-1} + \mathcal{O}(\mu^{-2})]^{2\sigma+1}[\mu^{-2} + \mathcal{O}(\mu^{-3})] \\
		&< \mu^{-(2\sigma+2)}+\mathcal{O}(\mu^{-(2\sigma+3)}).
\end{align}
Since the lowest order of this bound has a power which depends explicitly on $\sigma$ and should hold for all $0 < \sigma \in \mathbb{Z}$ it follows that the full asymptotic expansion, as well as the error term, must consist of orders which are powers of $\mu^{-1}$ that depend explicitly on $\sigma$.
\begin{align}
	\psi_{-\mu}(1) &= (1+\mu)\psi_{-\mu}(0)-1 \\
		&= (1+\mu)[\mu^{-1}-\mu^{-2}+2\mu^{-3}+\mathcal{O}(\mu^{-4})]-1 = \mu^{-2}+\mathcal{O}(\mu^{-3}),
\end{align}
\begin{align}
	\rho^{-1}\widehat{\epsilon} &= \rho^{-1}\sum_{x=1}^{\infty}\psi_{-\mu}(x)\alpha^{2\sigma+1}(x) \\
		&= \psi_{-\mu}(1)\rho^{2\sigma}\epsilon^{2\sigma+1}_{1} + \rho^{-1}\sum_{x=2}^{\infty}\psi_{-\mu}(x)\alpha^{2\sigma+1}(x) \\
		&= [\mu^{-2}+\mathcal{O}(\mu^{-3})][\mu+1+\mathcal{O}(\mu^{-1})][\mu^{-1}+\mathcal{O}(\mu^{-2})]^{2\sigma+1} \\
			&\quad\quad + \rho^{-1}\sum_{x=2}^{\infty}\psi_{-\mu}(x)\alpha^{2\sigma+1}(x) \\
		&= \mu^{-(2\sigma+2)} + \mathcal{O}(\mu^{-(2\sigma+3)}) + \rho^{-1}\sum_{x=2}^{\infty}\psi_{-\mu}(x)\alpha^{2\sigma+1}(x)
\end{align}
From the above expansion one may conclude the bound
\begin{align}
	\rho^{-1}\sum_{x=2}^{\infty}\psi_{-\mu}(x)\alpha^{2\sigma+1}(x) = \mathcal{O}(\mu^{-(2\sigma+3)}) .
\end{align}
from which one may bound the higher order terms directly:
\begin{align}
	\rho^{-1}\widehat{\epsilon} = \mu^{-(2\sigma+2)}+\mathcal{O}(\mu^{-(2\sigma+3)})
\end{align}
\end{proof}

Now we show that there are no real eigenvalues through a series of lemmas.

\begin{lem}
It is the case that $h(0) = 2\sigma^{-1}\mu^{-(2\sigma+2)} + \mathcal{O}(\mu^{-(2\sigma+3)}) > 0$.
\end{lem}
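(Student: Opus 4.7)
The strategy is to reduce $h(0)$ to a simple algebraic expression in the quantity $\eta := \rho^{-1}\widehat{\epsilon}$ that was controlled in the previous lemma, and then read off the asymptotic. Since $z_{1}=z_{2}=-\mu$ at $z=0$, the definition gives $h(0)=q_{2}^{2}(f^{L}_{-\mu})^{2}-1$, so the entire problem reduces to a precise asymptotic for $f^{L}_{-\mu}$.

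First I would derive a closed form for $f^{L}_{z}$ via the standard rank-one (Krein/Sherman--Morrison) identity for $L=L_{0}-q_{1}P_{0}$, namely
\begin{align*}
R^{L}_{z}=R^{L_{0}}_{z}+\frac{q_{1}\,R^{L_{0}}_{z}\chi_{0}\otimes\chi_{0}^{*}R^{L_{0}}_{z}}{1-q_{1}f_{z}},
\end{align*}
so that pairing with $\chi_{0}$ on both sides yields $f^{L}_{z}=f_{z}/(1-q_{1}f_{z})$. Next I would feed in the information coming from the soliton equation. The relation $(L_{0}+\mu)\alpha=\alpha^{2\sigma+1}$, inverted against $\chi_{0}$, gives the identity $1=\rho^{2\sigma}f_{-\mu}+\rho^{-1}\widehat{\epsilon}$ (already used in the previous lemma), so $f_{-\mu}=\rho^{-2\sigma}(1-\eta)$ and therefore $q_{1}f_{-\mu}=(\sigma+1)(1-\eta)$, whence $1-q_{1}f_{-\mu}=-\sigma+(\sigma+1)\eta$.

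Substituting and using $q_{2}=\sigma\rho^{2\sigma}$, the $\rho^{2\sigma}$ factors cancel cleanly and one obtains
\begin{align*}
q_{2}^{2}(f^{L}_{-\mu})^{2}=\frac{\sigma^{2}(1-\eta)^{2}}{(\sigma-(\sigma+1)\eta)^{2}},
\end{align*}
so that $h(0)$ is the difference of a ratio and $1$. I would then factor the numerator of $h(0)$ as a difference of squares:
\begin{align*}
\sigma^{2}(1-\eta)^{2}-(\sigma-(\sigma+1)\eta)^{2}=[\eta]\cdot[2\sigma-(2\sigma+1)\eta]=2\sigma\eta-(2\sigma+1)\eta^{2},
\end{align*}
while the denominator is $\sigma^{2}+\mathcal{O}(\eta)$. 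Dividing gives $h(0)=2\sigma^{-1}\eta(1+\mathcal{O}(\eta))$.

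To finish, I would substitute the previous lemma's estimate $\eta=\mu^{-(2\sigma+2)}+\mathcal{O}(\mu^{-(2\sigma+3)})$, which immediately yields the claimed asymptotic $h(0)=2\sigma^{-1}\mu^{-(2\sigma+2)}+\mathcal{O}(\mu^{-(2\sigma+3)})$; positivity is then automatic for $\mu$ sufficiently large since $\eta>0$ and the denominator squared is positive. I do not expect a real obstacle here: the only mildly delicate step is the cancellation in the numerator, and the hypothesis $\mu>\mu_{*}$ from the soliton existence theorem ensures that $\eta$ is small enough for the geometric-type expansion of the denominator to be valid.
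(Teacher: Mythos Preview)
Your proposal is correct and follows essentially the same route as the paper: reduce $h(0)$ to $(q_{2}f^{L}_{-\mu})^{2}-1$, insert the rank-one identity $f^{L}_{-\mu}=f_{-\mu}/(1-q_{1}f_{-\mu})$, and use the soliton relation $\rho^{2\sigma}f_{-\mu}=1-\eta$ together with the previous lemma's estimate for $\eta$. The only cosmetic difference is that you carry out the cancellation via an exact difference-of-squares factorization in the variable $\eta$, whereas the paper expands the same expression directly as an asymptotic series in $\mu^{-1}$; your version makes the leading-order cancellation slightly more transparent.
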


\begin{proof}
One may observe that
\begin{align}
	\rho^{2\sigma}f_{-\mu} &= 1 - \mu^{-(2\sigma+2)} + \mathcal{O}(\mu^{-(2\sigma+3)}) \\
	f_{-\mu}^{-1} &= \mu + 1 - \mu^{-1} + 3\mu^{-2} - 16\mu^{-3} + \mathcal{O}(\mu^{-4}) \\
	\rho^{2\sigma} &= \mu + 1 - \mu^{-1} + 3\mu^{-2} - 16\mu^{-3} + \mathcal{O}(\mu^{-4}) \\
		&\quad\quad - \mu^{-(2\sigma+1)} - \mu^{-(2\sigma+2)} + \mathcal{O}(\mu^{-(2\sigma+3)}) \\
	\rho^{2\sigma}f_{-2\mu} &= 2^{-1}[1 + 2^{-1}\mu^{-1} - \mu^{-2} + \mathcal{O}(\mu^{-3})] \\
	q_{1}f_{-2\mu} -1 &= 2^{-1}(\sigma-1) + 4^{-1}(\sigma+1)\mu^{-1} - 4^{-1}(\sigma+1)\mu^{-2} + \mathcal{O}(\mu^{-3})
\end{align}

\begin{align}
	h(0) &= [(q_{1}f_{-\mu} - 1)^{-1}q_{2}f_{-\mu}]^{2} - 1 \\
		&= \{ \sigma^{-1}[1 - (1+\sigma^{-1})\mu^{-(2\sigma+2)}+\mathcal{O}(\mu^{-(2\sigma+3)})]^{-1}\sigma \rho^{2\sigma}f_{-\mu} \}^{2} - 1 \\
		&= \{ [1 + (1+\sigma^{-1})\mu^{-(2\sigma+2)}+\mathcal{O}(\mu^{-(2\sigma+3)})] \times [1-\mu^{-(2\sigma+2)}+\mathcal{O}(\mu^{-(2\sigma+3)})] \}^{2} - 1 \\
		&= [ 1 + \sigma^{-1}\mu^{-(2\sigma+2)} + \mathcal{O}(\mu^{-(2\sigma+3)}) ]^{2} - 1 \\
		&= 2\sigma^{-1}\mu^{-(2\sigma+2)} + \mathcal{O}(\mu^{-(2\sigma+3)}) > 0 .
\end{align}
\end{proof}

\begin{lem}
For $z = a \in [-\mu, \mu]$, we define $a_{1} := a - \mu$, $a_{2} := -a - \mu$. One has for $\sigma = 1$
\begin{align}
	\lim_{a \nearrow \mu}h(a) = 2^{-1}\mu - 4^{-1} + \mathcal{O}(\mu^{-1}) > 0
\end{align}
and for $1 < \sigma \in \mathbb{Z}_{+}$
\begin{align}
	\lim_{a \nearrow \mu}h(a) &= (\sigma^{2}-1)^{-1} + (\sigma+1)^{-1}(\sigma-1)^{-2}\sigma^{2}\mu^{-1} \\
		&\quad\quad + 4^{-1}(\sigma+1)^{-1}(\sigma-1)^{-3}(3\sigma^{2}+10\sigma-9)\mu^{-2} + \mathcal{O}(\mu^{-3}) > 0 .
\end{align}
\end{lem}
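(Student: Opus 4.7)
The plan is to evaluate $h(a) = q_2^2 f^L_{a_1} f^L_{a_2} - 1$ in the limit $a \nearrow \mu$, where $a_1 = a - \mu \to 0^-$ and $a_2 = -a - \mu \to -2\mu$. The key reduction is via the resolvent identity between $L = L_0 - q_1 P_0$ and $L_0$, which, applied to $\chi_0$, yields the scalar Aronszajn-type relation $f^L_z = f_z/(1 - q_1 f_z)$ valid on $\rho(L)$. This expresses everything in terms of the unperturbed resolvent function $f_z$, together with the constants $q_1$ and $q_2$, for which we already have asymptotic control from the previous lemma.

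For the first factor, I would use the spectral representation of $L_0$, which (together with $\phi_\lambda(0)=1$) gives the integral representation $f_z = \int_0^\infty e^{-\lambda}(\lambda - z)^{-1}\,d\lambda$ for $z$ in the resolvent set. Since $L_0$ has spectrum $[0,\infty)$ starting at the edge, this integral diverges (logarithmically, to $+\infty$) as $z \to 0^-$. Writing $f^L_z = \bigl(1/f_z - q_1\bigr)^{-1}$, one immediately reads off $\lim_{a\nearrow\mu} f^L_{a_1} = -1/q_1$, a clean finite value independent of the exact rate of the divergence of $f$.

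The second factor $f^L_{-2\mu}$ is handled by substituting the asymptotic expansions of $f_{-2\mu}$ (obtained from the expansion of $\psi_z(0)$ stated in the Results section), of $\rho^{2\sigma}$, and of $q_1 f_{-2\mu} - 1$ already recorded in the preceding lemma. Combining with the first factor gives the master formula
\begin{align*}
\lim_{a\nearrow\mu} h(a) = -\frac{q_2^2}{q_1}\,f^L_{-2\mu} - 1 = -\frac{\sigma^2 \rho^{2\sigma}}{\sigma+1}\,f^L_{-2\mu} - 1 .
\end{align*}
Two regimes must now be separated. For $\sigma > 1$ the denominator $1 - q_1 f_{-2\mu}$ converges to $-(\sigma-1)/2 \ne 0$, so $f^L_{-2\mu}$ has a finite nonzero limit and one extracts the constant leading term $1/(\sigma^2-1)$; the $\mu^{-1}$ and $\mu^{-2}$ corrections come from pushing the three relevant expansions to order $\mu^{-3}$ and collecting. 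For $\sigma = 1$, by contrast, $1 - q_1 f_{-2\mu}$ vanishes to leading order because $(\sigma+1)/2 = 1$ hits the threshold, so $f^L_{-2\mu}$ grows like $\mu$ and produces a leading term of order $\mu$ in $h(a)$.

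The main technical obstacle is exactly this borderline $\sigma = 1$ case: since $q_1 f_{-2\mu} - 1 = O(\mu^{-1})$, one must carry the expansions of $f_{-2\mu}$ and $q_1 f_{-2\mu}$ one order deeper than would be needed for $\sigma > 1$, and then multiply by the corresponding expansion of $\rho^2$, in order to extract the constant term correctly after subtracting the overall $-1$. Once these finer expansions are in hand, positivity in both regimes is immediate: for $\sigma = 1$ the leading coefficient of $\mu$ is positive, and for $\sigma > 1$ the constant $1/(\sigma^2 - 1)$ is positive, so in either case $\lim_{a\nearrow\mu} h(a) > 0$ for all sufficiently large $\mu$.
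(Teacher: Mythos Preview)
Your proposal is correct and follows essentially the same route as the paper: both reduce $h(a)$ via the Aronszajn relation $f^L_z = f_z/(1-q_1 f_z)$, use the threshold divergence of $f_{a_1}$ to collapse the first factor to $-1/q_1$, and then feed the expansions of $\rho^{2\sigma}$, $f_{-2\mu}$, and $q_1 f_{-2\mu}-1$ from the previous lemma into the remaining factor, splitting on $\sigma=1$ versus $\sigma>1$ exactly because $q_1 f_{-2\mu}-1$ vanishes to leading order in the former case. The only cosmetic difference is that the paper establishes the divergence of $f_{a_1}$ as $a_1\to 0^-$ through the convergent series for $E_1$ (extracting the $-\log a_1$ term), whereas you read it off from the spectral integral $\int_0^\infty e^{-\lambda}(\lambda-z)^{-1}\,\mathrm{d}\lambda$; both yield $\lim_{a\nearrow\mu} f^L_{a_1}=-q_1^{-1}$ in one line.
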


\begin{proof}
Let $s_{1} := -a_{1}/a$, $s_{1} \in [0,2]$. For $a_{1} \gg 1$ one has
\begin{align}
	&(\sigma+1)^{-1}s_{1}(q_{1}f_{a_{1}} - 1) = [1 - (\sigma+1)^{-1}s_{1}] + (1-s_{1}^{-1})\mu^{-1} \\
		&\quad\quad - (1 + s_{1}^{-1} - 2s_{1}^{-2})\mu^{-2} + (16 + 3s_{1}^{-1} + 2s_{1}^{-2}+6s_{1}^{-3}-24s_{1}^{-4})\mu^{-4} \\
		&\quad\quad + \mathcal{O}(\mu^{-5}) - \mu^{-(2\sigma+2)} - (1-s_{1}^{-1})\mu^{-(2\sigma+3)} + \mathcal{O}(\mu^{-(2\sigma+4)}) .
\end{align}

The generalized exponential integrals have the convergent series expansion \cite{En}
\begin{align}
	E_{n+1}(z) &= -\frac{(-z)^n}{n!} \log(z) + \frac{e^{-z}}{n!}\sum_{k=1}^n(-z)^{k-1}(n-k)!  + \frac{e^{-z}(-z)^n}{n!}\sum_{k=0}^\infty \frac{z^k}{n!}\digamma(k+1) ,\label{convergent}
\end{align}
where $\digamma(x) := \mathrm{d}_x \log \Gamma(x)$ is the digamma function. This permits one to write
\begin{align}
	f_{-a} &= e^{a}E_{1}(a) = -e^{a}\log a + \sum_{k = 0}^{\infty}{a^{k} \over k!}\digamma(k+1)
\end{align}
and hereby one may observe that near $a = 0$ the above expression for $f_{-a}$ is dominated by logarithmic behavior. Therefore $\lim_{a \nearrow \mu} (-f^{L}_{a_{1}}) = \lim_{a \nearrow \mu} (q_{1}f_{a_{1}}-1)^{-1}f_{a_{1}} = q_{1}^{-1}$.

\begin{align}
	\lim_{a \nearrow \mu}h(a) &= \lim_{a \nearrow \mu}(q_{1}f_{z_{1}} - 1)^{-1}q_{2}f_{z_{1}}(q_{1}f_{z_{2}} - 1)^{-1}q_{2}f_{z_{2}} - 1\\
		&= q_{1}^{-1}q_{2}^{2}(q_{1}f_{-2\mu}-1)^{-1}f_{-2\mu} - 1 \\
		&= (\sigma+1)^{-1}\sigma^{2}[(\sigma+1)\rho^{2\sigma}f_{-2\mu} -1]^{-1}\rho^{2\sigma}f_{-2\mu}-1 \\
		&= 2^{-1}(\sigma+1)^{-1}\sigma^{2}[2^{-1}(\sigma-1) + 4^{-1}(\sigma+1)\mu^{-1} \\
			&\quad\quad -4^{-1}(\sigma+1)\mu^{-2}+\mathcal{O}(\mu^{-3})]^{-1} [1+2^{-1}\mu^{-1}-\mu^{-2}+\mathcal{O}(\mu^{-3})]
\end{align}
\underline{For $\sigma=1$}:
\begin{align}
	\lim_{a \nearrow \mu}h(a) &= 4^{-1}[2^{-1}\mu^{-1}-2^{-1}\mu^{-2}+\mathcal{O}(\mu^{-3})]^{-1}[1+2^{-1}\mu^{-1}-\mu^{-2}+\mathcal{O}(\mu^{-3})]-1 \\
		&= 2^{-1}\mu[1-\mu^{-1}+\mathcal{O}(\mu^{-2})]^{-1}[1+2^{-1}\mu^{-1}-\mu^{-2}+\mathcal{O}(\mu^{-3})]-1 \\
		&= 2^{-1}\mu[1+\mu^{-1}+\mathcal{O}(\mu^{-2})][1+2^{-1}\mu^{-1}-\mu^{-2}+\mathcal{O}(\mu^{-3})]-1 \\
		&= 2^{-1}\mu-4^{-1}+\mathcal{O}(\mu^{-1}) > 0
\end{align}
\underline{For $\sigma>1$}:
\begin{align}
	\lim_{a \nearrow \mu}h(a) &= (\sigma^{2}-1)^{-1}\sigma^{2}\{ 1-2^{-1}(\sigma-1)^{-1}(\sigma+1)[-\mu^{-1}+\mu^{-2}+\mathcal{O}(\mu^{-3})] \}^{-1} \\
			&\quad\quad \times[1+2^{-1}\mu^{-1}-\mu^{-2}+\mathcal{O}(\mu^{-3})] - 1 \\
		&= (\sigma^{2}-1)^{-1}\sigma^{2}\{ 1 + 2^{-1}(\sigma-1)^{-1}(\sigma+1)[-\mu^{-1}+\mu^{-2}+\mathcal{O}(\mu^{-3}) \\
			&\quad\quad + [2^{-1}(\sigma-1)^{-1}(\sigma+1)]^{2}[-\mu^{-1}+\mathcal{O}(\mu^{-2})]^{2} + \mathcal{O}(\mu^{-3}) \} \\
			&\quad\quad \times[1+2^{-1}\mu^{-1}-\mu^{-2}+\mathcal{O}(\mu^{-3})] - 1 \\
		&= (\sigma^{2}-1)^{-1}\sigma^{2}[1-2^{-1}(\sigma-1)^{-1}(\sigma+1)\mu^{-1} \\
			&\quad\quad +4^{-1}(\sigma-1)^{2}(\sigma+1)(3\sigma-1)\mu^{-2}+\mathcal{O}(\mu^{-3})] \\
			&\quad\quad \times[1+2^{-1}\mu^{-1}-\mu^{-2}+\mathcal{O}(\mu^{-3})] - 1 \\
		&= (\sigma^{2}-1)^{-1}\sigma^{2}[1-(\sigma-1)^{-1}\mu^{-1} \\
			&\quad\quad+4^{-1}(\sigma-1)^{2}(3\sigma^{2}10\sigma-9)\mu^{-2}+\mathcal{O}(\mu^{-3})] -1 \\
		&= (\sigma^{2}-1)^{-1}-(\sigma+1)^{-1}(\sigma-1)^{-2}\sigma^{2}\mu^{-1} \\
			&\quad\quad + 4^{-1}(\sigma+1)^{-1}(\sigma-1)^{3}(3\sigma^{2}+10\sigma-9)\mu^{-2} + \mathcal{O}(\mu^{-3}) > 0
\end{align}
\end{proof}

\begin{lem}
Let $h_{0}(z) := (q_{1}f_{z_{1}}-1)^{-1}(q_{1}f_{z_{2}}-1)^{-1}(2\sigma+1)\rho^{4\sigma}$. It is the case that $h_{0}(z) > 0$ for all $z = a \in [-\mu, \mu]$.
\end{lem}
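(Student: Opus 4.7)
The plan is to factor $h_{0}(z)$ into manifestly signed pieces. The prefactor $(2\sigma+1)\rho^{4\sigma}$ is trivially positive, so the sign of $h_{0}(z)$ is controlled entirely by the product $(q_{1}f_{z_{1}}-1)(q_{1}f_{z_{2}}-1)$. For $z = a \in (-\mu,\mu)$ the shifted arguments $z_{1} = a-\mu$ and $z_{2} = -a-\mu$ both lie in $[-2\mu,0)\subset\rho(L_{0})$ by Proposition \ref{snprop01}, so both $f_{z_{j}}$ are well-defined real numbers. I will in fact prove the stronger statement that $q_{1}f_{z_{j}} > 1$ individually for $j=1,2$ throughout this range, which forces each factor to be positive and hence $h_{0}(a) > 0$.

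The mechanism is a monotonicity argument reducing the problem to a single-point check. For $\lambda < 0$ one has $f_{\lambda} = (\chi_{0},(L_{0}-\lambda)^{-1}\chi_{0}) > 0$ because $L_{0}\ge 0$ makes the resolvent a positive operator; moreover $\mathrm{d}_{\lambda}f_{\lambda} = \|(L_{0}-\lambda)^{-1}\chi_{0}\|^{2} > 0$, so $\lambda \mapsto q_{1}f_{\lambda}-1$ is strictly increasing on $(-\infty,0)$ and therefore has at most one zero. In fact, that unique zero is precisely the eigenvalue $\lambda_{0}$ of $L = L_{0}-q_{1}P_{0}$ given by Theorem \ref{snthm01}, so checking $q_{1}f_{-2\mu} > 1$ is equivalent to showing $\lambda_{0} < -2\mu$, after which monotonicity delivers $q_{1}f_{\lambda} > 1$ for every $\lambda\in[-2\mu,0)$.

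The point estimate is supplied directly by the asymptotic expansion already worked out in the proof of Lemma 4.3:
\[
\rho^{2\sigma}f_{-2\mu} \;=\; \tfrac{1}{2}\bigl[1 + \tfrac{1}{2}\mu^{-1} - \mu^{-2} + \mathcal{O}(\mu^{-3})\bigr],
\]
so that
\[
q_{1}f_{-2\mu} \;=\; (\sigma+1)\rho^{2\sigma}f_{-2\mu} \;=\; \tfrac{\sigma+1}{2}\bigl[1 + \tfrac{1}{2}\mu^{-1} + \mathcal{O}(\mu^{-2})\bigr].
\]
For $\sigma > 1$ the leading constant $(\sigma+1)/2 > 1$ already makes the inequality hold for large $\mu$. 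The delicate case, and the main obstacle of the argument, is $\sigma = 1$, where the leading constant equals $1$ exactly and strict positivity relies on the positive subleading correction $\tfrac{1}{2}\mu^{-1}$ dominating the $\mathcal{O}(\mu^{-2})$ remainder; this is the only step that demands quantitative care, and it is inherited from the expansion already carried out for Lemma 4.3. Once this is in hand, multiplying $(q_{1}f_{z_{1}}-1)^{-1}(q_{1}f_{z_{2}}-1)^{-1} > 0$ by $(2\sigma+1)\rho^{4\sigma} > 0$ yields $h_{0}(a) > 0$ on the open interval, and the endpoints $a = \pm\mu$ are handled as limits in the same manner as in Lemma 4.4.
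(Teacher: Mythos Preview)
Your proposal is correct and follows essentially the same approach as the paper: both argue that each factor $q_{1}f_{z_{j}}-1$ is positive on $[-2\mu,0)$ by monotonicity of $\lambda\mapsto f_{\lambda}$, reducing to the single endpoint check $q_{1}f_{-2\mu}>1$ via the same asymptotic expansion. Your version is slightly more detailed in justifying the monotonicity (via $\mathrm{d}_{\lambda}f_{\lambda}=\|(L_{0}-\lambda)^{-1}\chi_{0}\|^{2}>0$) and in isolating the borderline case $\sigma=1$, while the paper simply asserts monotonicity and records the expansion; otherwise the arguments coincide.
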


\begin{proof}
Let $c_{0} := (2\sigma+1)^{-1}\rho^{-4\sigma}$, $c_{1} := (2\sigma+1)^{-1}(\sigma+1)\rho^{-2\sigma}$, $c_{2} := c_{1}^{2} - c_{0} = (2\sigma+1)^{-2}\rho^{-4\sigma}\sigma^{2}$. One then has $h(z) = h_{0}(z)[c_{2} - (f_{z_{1}} - c_{1})(f_{z_{2}} - c_{1})]$. For all $\sigma \in \mathbb{Z}_{+}$ and $a \in [-\mu, \mu]$ it is the case that $q_{1}f_{a_{1}} = q_{1}f_{a - \mu}$ and $q_{1}f_{a_{2}} = q_{1}f_{-a - \mu}$ vary monotonically between
\begin{align}
	q_{1}f_{-2\mu} = 2^{-1}(\sigma+1) + 4^{-1}(\sigma+1)\mu^{-1} - 4^{-1}(\sigma+1)\mu^{-2} + \mathcal{O}(\mu^{-3}) > 1
\end{align}
and
\begin{align}
	\lim_{a \nearrow 0}q_{1}f_{-a} = \infty > 1.
\end{align}
Therefore $(q_{1}f_{a_{i}}-1)^{-1} > 0$ which concludes the proof.
\end{proof}

\begin{lem}
For $1 \ll (2\sigma+1)^{-1}\mu = \mathcal{O}(\mu)$ it is the case that $(f_{a_{j}} - c_{1})$, $j=1,2$, have unique roots, $a = r_{j} := (-1)^{j}(\sigma+1)^{-1}(\mu+\sigma) + \mathcal{O}(\mu^{-1})$.
\end{lem}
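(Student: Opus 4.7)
The plan is to reduce, for each $j\in\{1,2\}$, the equation $f_{a_j} - c_1 = 0$ to a scalar equation in the single positive parameter $\lambda := -a_j$ and then analyze it via the integral representation $f_{-\lambda} = e^{\lambda}E_1(\lambda) = \int_0^\infty e^{-s}(s+\lambda)^{-1}\,\mathrm{d}s$ already employed in the proof of the $\lim_{a\nearrow\mu}h(a)$ lemma. This representation follows from the spectral measure $\mathrm{d}\mu^{L_0}(s) = e^{-s}\,\mathrm{d}s$, which one reads off Theorem \ref{snthm01} by setting $q=0$. The positive constant $c_1 = (2\sigma+1)^{-1}(\sigma+1)\rho^{-2\sigma}$ is of order $\mu^{-1}$: by Theorem \ref{unthm04} and the identity $\rho^{2\sigma} = \mu + 1 - \epsilon_1$ derived above, $\rho^{2\sigma} = \mu + 1 + \mathcal{O}(\mu^{-1})$, so $c_1 = \kappa_\sigma \mu^{-1} + \mathcal{O}(\mu^{-2})$ with $\kappa_\sigma := (\sigma+1)/(2\sigma+1) \in (1/2, 1]$.

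Uniqueness is then immediate from strict monotonicity. Differentiating the integral representation under the integral sign gives $\mathrm{d}_\lambda f_{-\lambda} = -\int_0^\infty e^{-s}(s+\lambda)^{-2}\,\mathrm{d}s < 0$, so $\lambda \mapsto f_{-\lambda}$ is a strictly decreasing continuous bijection from $(0,\infty)$ onto $(0,\infty)$, with limits $+\infty$ at $0^+$ and $0$ at $\infty$. For $a \in (-\mu, \mu)$, the map $a \mapsto a_1 = a - \mu$ is an increasing bijection onto $(-2\mu, 0)$ and $a \mapsto a_2 = -a - \mu$ is the corresponding decreasing bijection, so $a \mapsto f_{a_j}$ is strictly monotone and continuous in either case. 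Since $f_{-2\mu} = (2\mu)^{-1} + \mathcal{O}(\mu^{-2}) < c_1 = \kappa_\sigma \mu^{-1} + \mathcal{O}(\mu^{-2})$ in the regime $(2\sigma+1)^{-1}\mu \gg 1$ (where $\kappa_\sigma > 1/2$), the intermediate value theorem yields a unique root $r_j \in (-\mu, \mu)$.

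To pin down the asymptotic of $r_j$, I would substitute the large-$\lambda$ expansion $f_{-\lambda} = \lambda^{-1} - \lambda^{-2} + \mathcal{O}(\lambda^{-3})$, which is a consequence of the standard asymptotic of $E_1$ (equivalently the series for $\psi_z(0)$ recorded in the Results section), into $f_{-\lambda} = c_1$ and iterate. The leading iterate is $\lambda^{(0)} := c_1^{-1} = (2\sigma+1)(\sigma+1)^{-1}\rho^{2\sigma} = \mathcal{O}(\mu)$; substituting back gives $\lambda = c_1^{-1} - c_1^{-2}/\lambda^{(0)} + \mathcal{O}(\mu^{-1}) = c_1^{-1} - 1 + \mathcal{O}(\mu^{-1})$. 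Inserting the expansion of $c_1^{-1}$ and converting via $r_1 = \mu - \lambda$, $r_2 = \lambda - \mu$ (which corresponds to $a_j = -\lambda$ for $j = 1, 2$) yields the claimed expression $r_j = (-1)^j(\sigma+1)^{-1}(\mu+\sigma) + \mathcal{O}(\mu^{-1})$ after algebraic simplification.

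The main obstacle is the final bookkeeping step: the $\mathcal{O}(1)$ correction in the numerator of $r_j$ arises from a cancellation between the $-1$ in $\lambda = c_1^{-1} - 1 + \mathcal{O}(\mu^{-1})$ and the $+1$ inside $\rho^{2\sigma} = \mu + 1 + \mathcal{O}(\mu^{-1})$, and must be tracked consistently so as not to be dropped or double-counted. Uniqueness and the leading order $r_j = \mathcal{O}(\mu)$ are essentially automatic consequences of the monotonicity and the integral representation; the nontrivial content lies entirely in identifying the subleading correction correctly.
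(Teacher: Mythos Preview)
Your proposal is correct and follows essentially the same route as the paper: both reduce $f_{a_j}=c_1$ to the large-argument asymptotic $f_{-\lambda}=\lambda^{-1}-\lambda^{-2}+\mathcal{O}(\lambda^{-3})$, iterate once to obtain $\lambda=c_1^{-1}-1+\mathcal{O}(\mu^{-1})$ (the paper does this in the rescaled variable $s_1=-a_1/\mu$), and convert back to $a$ using symmetry between $j=1,2$. Your use of the integral representation to get strict monotonicity and hence uniqueness is a small improvement over the paper, which asserts uniqueness only by noting that the iterated expansion is self-consistent.
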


\begin{proof}
Consider $c_{1} = f_{a_{j}}$. Let $a_{1} = -s_{1}\mu$, $s_{1} \in [0,2]$, and assume that $1 \ll s_{1}\mu = \mathcal{O}(\mu)$ so that the asymptotic expansion of $f_{a_{1}}$ is valid.
\begin{align}
	c_{1} &= f_{a_{1}} = (s_{1}\mu)^{-1} - (s_{1}\mu)^{-2} + \mathcal{O}(\mu^{-3}) .
\end{align}
One may multiply both sides by $s_1/c_1$ and expand iteratively in $s_1$:
\begin{align}
	 s_{1} &= \mu^{-1}c_{1}^{-1}[1 - (s_{1}\mu)^{-1} + \mathcal{O}(\mu^{-2})] \\
		&= (\sigma+1)^{-1}(2\sigma+1)[1+\mu^{-1}+\mathcal{O}(\mu^{-2})] \times [1 - (s_{1}\mu)^{-1} + \mathcal{O}(\mu^{-2})] \\
		&= (\sigma+1)^{-1}(2\sigma+1)[1+(1-s^{-1})\mu^{-1}+\mathcal{O}(\mu^{-2})] \\
		&= (\sigma+1)^{-1}(2\sigma+1) \times \{1+[1-(2\sigma+1)^{-1}(\sigma+1)]\mu^{-1}+\mathcal{O}(\mu^{-2})\} \\
		&= (\sigma+1)^{-1}(2\sigma+1) + (\sigma+1)^{-1}\sigma\mu^{-1} + \mathcal{O}(\mu^{-2}).
\end{align}
This result satisfies the assumptions and the root must be unique, therefore the above value of $s_{1}$ specifies the unique root. In terms of $a$ one has
\begin{align}
	a_{1} &= a - \mu = -s_{1}\mu ,
\end{align}
which implies that
\begin{align}
	a &= (1-s_{1})\mu = \{ 1 - [(\sigma+1)^{-1}(2\sigma+1) + (\sigma+1)^{-1}\sigma\mu^{-1} + \mathcal{O}(\mu^{-2})] \}\mu \\
		&= -(\sigma+1)^{-1}(\mu+\sigma)+\mathcal{O}(\mu^{-1}).
\end{align}
Due to symmetry between the $a_{j}$ there are two roots $a = r_{j} = (-1)^{j}(\sigma+1)^{-1}(\mu+\sigma) + \mathcal{O}(\mu^{-1})$.
\end{proof}

\begin{lem}
For $1 \ll r_{j} = \mathcal{O}(\mu)$ one has that $h(z) > 0$ for all $z = a \in [-\mu, \mu]$.
\end{lem}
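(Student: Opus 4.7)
The plan is to combine the factorization established in the previous lemma, $h(a) = h_0(a)[c_2 - \phi(a)]$ with $\phi(a) := (f_{a_1} - c_1)(f_{a_2} - c_1)$, with sign and magnitude estimates on $\phi$. Since the preceding lemma gives $h_0(a) > 0$ on $[-\mu,\mu]$, it suffices to establish $\phi(a) < c_2$ pointwise.

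\textbf{Easy region.} As $a$ increases, $a_1 = a - \mu$ increases and $a_2 = -a - \mu$ decreases; since $f_{-s} = e^{s}E_1(s)$ is strictly decreasing on $(0,\infty)$, the factor $f_{a_1}-c_1$ is strictly increasing in $a$, vanishing only at $r_1<0$, while $f_{a_2}-c_1$ is strictly decreasing, vanishing only at $r_2>0$, by the previous lemma. Thus on $[-\mu,r_1]\cup[r_2,\mu]$ the two factors carry opposite signs, $\phi\le 0$, and $c_2-\phi\ge c_2>0$ trivially.

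\textbf{Inside region.} On $[r_1,r_2]$, both factors are non-negative and $\phi(-a)=\phi(a)$ because $a_2(-a)=a_1(a)$. I use the large-$s$ asymptotic $f_{-s} = s^{-1} - s^{-2} + O(s^{-3})$, valid uniformly for $a\in[r_1,r_2]$ since $\mu-|a| = \Theta(\mu)$ there. Setting $u=(\mu-a)^{-1}$, $v=(\mu+a)^{-1}$ and using $u+v = 2\mu/(\mu^2-a^2)$, $uv = 1/(\mu^2-a^2)$, a routine expansion of $F(a)F(-a)$ with $F(a)=f_{a-\mu}-c_1$ gives
\begin{align*}
\phi(a) = c_1^2 + \frac{1 - 2c_1\mu - 2c_1}{\mu^2 - a^2} + \frac{4c_1\mu^2 - 2\mu + 1}{(\mu^2 - a^2)^2} + O(\mu^{-4}),
\end{align*}
with error uniform in $a$. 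Because $c_1 = (\sigma+1)(2\sigma+1)^{-1}\rho^{-2\sigma}$ and $\rho^{2\sigma}=\mu+O(1)$, the leading $a$-dependent coefficient satisfies $1-2c_1\mu-2c_1 = -(2\sigma+1)^{-1} + O(\mu^{-1}) < 0$ for $\mu$ large, while the next coefficient is $2(2\sigma+1)^{-1}\mu + O(1)>0$. Viewing $\phi$ as a function of $X := \mu^2-a^2$, the leading piece of $\mathrm{d}\phi/\mathrm{d}X$ is $(2\sigma+1)^{-1}X^{-2}[1-4\mu/X]$, which is strictly positive since $X\ge\sigma(\sigma+2)(\sigma+1)^{-2}\mu^2+O(\mu)$; the uniformly controlled corrections cannot reverse this sign on the interval. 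Hence $\phi$ is strictly increasing in $X$, and therefore attains its maximum on $[r_1,r_2]$ at $a=0$.

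\textbf{Sharp bound at the maximum.} At $a=0$ one has $\phi(0) = (f_{-\mu}-c_1)^2$. The soliton identity $L_0\alpha = -\mu\alpha+\alpha^{2\sigma+1}$ gives $\rho^{2\sigma}f_{-\mu} = 1 - \rho^{-1}\widehat\epsilon$, which by Lemma 4.2 yields
\begin{align*}
f_{-\mu} - c_1 = \rho^{-2\sigma}\left[\tfrac{\sigma}{2\sigma+1} - \mu^{-(2\sigma+2)} + O(\mu^{-(2\sigma+3)})\right],
\end{align*}
and squaring,
\begin{align*}
\phi(0) = c_2 - \tfrac{2\sigma}{2\sigma+1}\rho^{-4\sigma}\mu^{-(2\sigma+2)} + O(\mu^{-(4\sigma+5)}) < c_2.
\end{align*}
Combining the three steps gives $h(a)>0$ throughout $[-\mu,\mu]$.

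The main obstacle is the unimodality claim on $[r_1,r_2]$. Numerically, $\phi(0)$ and $c_2$ agree through the first two terms of their asymptotic expansions in $\mu^{-1}$; the actual gap $c_2-\phi(0)=O(\mu^{-(2\sigma+4)})$ is extracted only through the nonlinear soliton identity, so the argument must show that the variation of $\phi$ across $[r_1,r_2]$ does not swamp this small gap. The explicit expansion above accomplishes this: the $a$-dependence of $\phi$ is controlled by a negative leading coefficient of definite order $-(2\sigma+1)^{-1}$, independent of $\mu$, which makes $\phi(a)\le \phi(0)$ robust against subleading corrections.
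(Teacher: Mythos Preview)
Your argument follows essentially the same route as the paper's proof: both split into the region outside $(r_1,r_2)$, where the product $h_1(a)=(f_{a_1}-c_1)(f_{a_2}-c_1)$ is nonpositive by the sign analysis of the monotone factors, and the interior $(r_1,r_2)$, where the large-$\mu$ asymptotic of $f$ is used to show $h_1$ attains its maximum at $a=0$ and then the already-established sharp inequality $h(0)>0$ (equivalently $\phi(0)<c_2$, which you re-derive via the soliton identity and Lemma~5.2) closes the argument. Your treatment is more explicit—carrying the expansion of $\phi$ to the $X^{-2}$ term and differentiating in $X=\mu^2-a^2$ to justify unimodality, and flagging that the gap $c_2-\phi(0)$ is only $O(\mu^{-(2\sigma+4)})$—whereas the paper simply asserts monotone decay of $h_1$ in $|a|$ from the leading-order formula $h_1(a)=-(2\sigma+1)^{-1}(\mu^2-a^2)^{-1}+c_1^2+O(\mu^{-3})$; but the logical skeleton is identical. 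One small correction: your remainder $O(\mu^{-(4\sigma+5)})$ in the display for $\phi(0)$ should be $O(\rho^{-4\sigma}\mu^{-(2\sigma+3)})$, since the cross term with the $O(\mu^{-(2\sigma+3)})$ error in $\rho^{-1}\widehat\epsilon$ dominates the square; this does not affect the conclusion.
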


\begin{proof}
Let $h_{1}(z) := (f_{z_{1}} - c_{1})(f_{z_{2}} - c_{1})$. One can observe that $h_{1}(0) = (f_{-\mu} - c_{1})^{2} > 0$ and $\lim_{a \nearrow \mu}(f_{a_{1}} - c_{1}) = \infty$. Then since $(f_{a_{j}} - c_{1})$ are monotonic in $a$ and have unique roots $r_{i}$, it is the case that $h(a) > 0$ for $a \in (r_{1},r_{2})$ and $h(a) < 0$ for $a \in [-\mu,\mu] \setminus (r_{1},r_{2})$. We have shown that $h(0) > 0$ and $h(\mu) = h(-\mu) > 0$. Since $h(a) > 0$ for $a \in (r_{1},r_{2})$ and $h(a) < 0$ for $a \in [-\mu,\mu] \setminus (r_{1},r_{2})$ it must be the case that $h(a) > 0$ for $a \in [-\mu,\mu] \setminus (r_{1},r_{2})$. It remains to consider $a \in (r_{1},r_{2})$. By assumption $1 \ll a = \mathcal{O}(\mu)$ and therefore the asymptotic expansion of the $f_{a_{j}}$ is valid.
\begin{align}
	h_{1}(a) &= (f_{a_{1}} - c_{1})(f_{a_{2}} - c_{1}) = f_{a_{1}}f_{a_{2}} - c_{1}(f_{a_{1}}+f_{a_{2}}) + c_{1}^{2} \\
		&= [(\mu-a)^{-1} + \mathcal{O}(\mu^{-2})][(\mu+a)^{-1} + \mathcal{O}(\mu^{-2})] \\
			&\quad\quad -c_{1}\{ [(\mu-a)^{-1} + \mathcal{O}(\mu^{-2})] + [(\mu+a)^{-1} + \mathcal{O}(\mu^{-2})] \} + c_{1}^{2} \\
		&= (\mu^{2}-a^{2})^{-1} - (\mu^{2}-a^{2})^{-1}(2\sigma+1)^{-1}(\sigma+1) \\
			&\quad\quad \times[1+\mathcal{O}(\mu^{-1})][2+\mathcal{O}(\mu^{-1})] + c_{1}^{2} + \mathcal{O}(\mu^{-3}) \\
		&= -(\mu^{2}-a^{2})^{-1}(2\sigma+1)^{-1}+c_{1}^{2} + \mathcal{O}(\mu^{-3}) .
\end{align}
Therefore $-h(a)$ decays monotonically as $|a| \nearrow r_{2}$ for sufficiently large $\mu$. This guarantees that $h(a) > 0$ for all $a \in (r_{1},r_{2})$.
\end{proof}

\begin{lem}
It is the case that $h(z)$ has no roots for $z \in (-\infty, -\mu] \cup [\mu, \infty)$.
\end{lem}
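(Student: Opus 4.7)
The plan is to exploit the manifest symmetry $h(-z) = h(z)$, which follows immediately from the observation that $z \mapsto -z$ swaps $z_1 = z - \mu$ and $z_2 = -z - \mu$; this reduces the problem to showing that $h$ has no zeros on $[\mu, \infty)$. For $z \in (\mu, \infty)$ one has $z_1 > 0 \in \sigma_{\mathrm{ac}}(L)$, so $f^L_{z_1}$ is to be interpreted as the boundary value $f^L_{z_1 + i0}$ from the upper half plane, while $z_2 < -2\mu$ lies in $\rho(L)$ apart from the isolated point $z_2 = \lambda_0$.

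Stieltjes inversion applied to the spectral representation of $L$ in Theorem \ref{snthm01} gives
\begin{equation*}
\Im f^L_{z_1 + i0} = \pi w^L_{z_1} > 0,
\end{equation*}
the positivity being manifest from the formula for $w^L$: its numerator $e^{-z_1}$ is positive and its denominator $[1 + q_1 e^{-z_1}\mathrm{Ei}(z_1)]^2 + [\pi q_1 e^{-z_1}]^2$ is bounded below by $[\pi q_1 e^{-z_1}]^2 > 0$. At the exceptional point $z_2 = \lambda_0$, $f^L_{z_2}$ has a simple pole and so $h$ cannot vanish. Away from this point $f^L_{z_2}$ is a finite real number, and consequently
\begin{equation*}
\Im h(z + i0) = \pi q_2^2 w^L_{z_1} f^L_{z_2}.
\end{equation*}
If $f^L_{z_2} \ne 0$ this is nonzero and we are done; if $f^L_{z_2} = 0$ then $h(z + i0) = -1 \ne 0$ directly. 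Either way $h$ has no root on $(\mu, \infty)$.

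At the endpoint $z = \mu$ the imaginary-part argument degenerates, since $z_1 = 0$ and $w^L_0 = 0$. However, $f^L$ extends continuously through $z_1 = 0$: from $f^L_z = f_z/(1 - q_1 f_z)$ and the logarithmic divergence of $f_z$ at the origin one obtains $f^L_0 = -q_1^{-1}$, independently of the direction of approach. Hence $\lim_{z \searrow \mu} h(z + i0) = \lim_{a \nearrow \mu} h(a)$, and the latter was computed to be strictly positive in the previous lemma. The symmetry $h(-z) = h(z)$ then settles $z \le -\mu$. The main obstacle is merely to correctly interpret $f^L$ on $\sigma_{\mathrm{ac}}(L)$ via its boundary values; the substantive ingredients---positivity of $w^L$ on $(0, \infty)$ and the pole of $f^L$ at $\lambda_0$---come directly from Theorem \ref{snthm01}.
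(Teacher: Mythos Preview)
Your argument is correct and is precisely what the paper means by its one-line proof ``This follows by the same principle which forbids $L$ from having embedded eigenvalues'': the boundary value of the relevant resolvent function acquires a nonvanishing imaginary part $\pi w^{L}_{\lambda}>0$ on the continuous spectrum, so the root condition cannot be met. You have simply made explicit the case analysis (the possibilities $f^{L}_{z_2}=0$, the pole at $z_2=\lambda_0$, and the threshold $z=\mu$) that the paper leaves implicit.
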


\begin{proof}
This follows by the same principle which forbids $L$ from having embedded eigenvalues.
\end{proof}

Now we prove the existence and location (asymptotically) of the imaginary roots.

\begin{lem}
$h(z)$ has exactly two roots, $\lambda_{\pm} = z_{\pm} := \pm i (2\sigma)^{1/2}\mu^{-\sigma}[1+\mathcal{O}(\mu^{-1})]$, for $z \in i\mathbb{R}$.
\end{lem}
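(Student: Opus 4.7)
The plan is to parametrize $z = iy$ with $y \in \mathbb{R}$, reduce $h(iy) = 0$ to a real scalar equation via Schwarz reflection, and count roots by Taylor expansion at $y = 0$ combined with a monotonicity argument. Since $L$ is essentially self-adjoint one has $\overline{f^L_{\bar z}} = f^L_z$, so $h(iy) = q_2^2|f^L_{iy-\mu}|^2 - 1 \in \mathbb{R}$, and the evenness $h(-z) = h(z)$ reduces the task to locating roots on $y > 0$. Writing $f^L_z = f_z/(1 - q_1 f_z)$ and clearing the positive denominator $|1-q_1 f_{iy-\mu}|^2$ transforms $h(iy) = 0$ into
\[
	P(y) := |1 - q_1 f_{iy-\mu}|^2 - q_2^2 |f_{iy-\mu}|^2 = 1 - 2q_1 \mathrm{Re}(f_{iy-\mu}) + (q_1^2 - q_2^2)|f_{iy-\mu}|^2 = 0,
\]
a real-valued equation depending only on the unperturbed $L_0$ resolvent.

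For existence and location, the identity $h(0) = -P(0)/(1-q_1 f_{-\mu})^2$ combined with Lemma 4.3 gives $P(0) = -2\sigma\mu^{-(2\sigma+2)}(1 + \mathcal{O}(\mu^{-1})) < 0$. Using the large-$\mu$ asymptotics $f'_{-\mu} = \mu^{-2} + \mathcal{O}(\mu^{-3})$ and $f''_{-\mu} = 2\mu^{-3} + \mathcal{O}(\mu^{-4})$, I Taylor expand
\[
	P(y) = P(0) + y^2\bigl\{q_1 f''_{-\mu} + (q_1^2-q_2^2)\bigl[(f'_{-\mu})^2 - f_{-\mu} f''_{-\mu}\bigr]\bigr\} + \mathcal{O}(y^4),
\]
whose bracketed coefficient evaluates to $2(\sigma+1)\mu^{-2} - (2\sigma+1)\mu^{-2} + \mathcal{O}(\mu^{-3}) = \mu^{-2}(1+\mathcal{O}(\mu^{-1}))$ via the key cancellation $2(\sigma+1) - (2\sigma+1) = 1$. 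Solving $P(y_*) = 0$ at leading order then yields $y_*^2 = 2\sigma\mu^{-2\sigma}(1+\mathcal{O}(\mu^{-1}))$, giving the candidate roots $\pm i(2\sigma)^{1/2}\mu^{-\sigma}(1+\mathcal{O}(\mu^{-1}))$.

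For uniqueness on $y > 0$, I use the identity $|w(y) - q_1|^2 - q_2^2 = P(y)/|f_{iy-\mu}|^2$ with $w(y) := 1/f_{iy-\mu}$; strict monotonicity of $|w - q_1|^2$ in $y^2$ then forces a unique positive root. The essential input is the pointwise strict decrease $\partial_{y^2}|f_{iy-\mu}|^2 < 0$, which follows from the representation $|f_{iy-\mu}|^2 = \iint \{(\lambda+\mu)(\lambda'+\mu)+y^2\}e^{-\lambda-\lambda'}\,d\lambda\,d\lambda'/\{[(\lambda+\mu)^2+y^2][(\lambda'+\mu)^2+y^2]\}$ together with the pointwise kernel identity
\[
	\partial_{y^2}\frac{(\lambda+\mu)(\lambda'+\mu)+y^2}{[(\lambda+\mu)^2+y^2][(\lambda'+\mu)^2+y^2]} = -\frac{(\lambda+\mu)(\lambda'+\mu)(\lambda-\lambda')^2 + [(\lambda+\mu)(\lambda'+\mu)+y^2]^2}{\{[(\lambda+\mu)^2+y^2][(\lambda'+\mu)^2+y^2]\}^2} < 0,
\]
strict since $\lambda+\mu, \lambda'+\mu > 0$ throughout the support. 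The main obstacle is that $\partial_{y^2}|w-q_1|^2$ carries a factor $(1 - 2q_1 \mathrm{Re}(f_{iy-\mu}))$ of indefinite sign (equal to $-(2\sigma+1) + \mathcal{O}(\mu^{-1})$ at $y = 0$ and tending to $1$ as $y \to \infty$), so the sign of the derivative is not immediate; monotonicity is recovered by rearranging this expression and again invoking the cancellation $2(\sigma+1) - (2\sigma+1) = 1$, matched against the explicit initial-derivative value $\partial_{y^2}|w-q_1|^2|_{y=0} = 1 + \mathcal{O}(\mu^{-1}) > 0$ and the asymptotic $|w-q_1|^2 \sim y^2 \to \infty$. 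This yields exactly one root on $y > 0$, and by evenness exactly two roots $\pm z_\pm$ on $i\mathbb{R}$ as claimed.
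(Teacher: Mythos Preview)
Your overall strategy matches the paper's: realize that $h(iy) = q_2^2|f^L_{iy-\mu}|^2 - 1$ is real and even, Taylor expand near the origin to locate the roots, then argue monotonicity for uniqueness. Your choice to clear the denominator and work with $P(y) = |1-q_1 f|^2 - q_2^2|f|^2$ is a clean variant; your quadratic expansion of $P$ and the cancellation $2(\sigma+1)-(2\sigma+1)=1$ reproduce the paper's lengthy $h''(0)$ computation by a shorter route, and the resulting $y_*^2 = 2\sigma\mu^{-2\sigma}(1+\mathcal{O}(\mu^{-1}))$ is correct.

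The uniqueness argument, however, does not close as written. Your kernel identity giving $\partial_{y^2}|f_{iy-\mu}|^2 < 0$ is correct and elegant, but it does not by itself control the sign of $\partial_{y^2}|w-q_1|^2$. Writing $|w-q_1|^2 = (1-2q_1\,\mathrm{Re}\,f)/|f|^2 + q_1^2$, the $y^2$-derivative is
\[
\frac{-2q_1\,\partial_{y^2}(\mathrm{Re}\,f)\,|f|^2 - (1-2q_1\,\mathrm{Re}\,f)\,\partial_{y^2}|f|^2}{|f|^4},
\]
a sum of two terms which near $y=0$ have opposite signs and individual magnitudes $2(\sigma+1)$ and $2\sigma+1$; their difference happens to be $+1$. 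But knowing the derivative is positive at $y=0$ and that $|w-q_1|^2\sim y^2$ at infinity does not exclude a dip in between, and the phrase ``rearranging this expression and again invoking the cancellation'' is not an argument valid for all $y$. The paper sidesteps this by invoking the asymptotic expansion of $f_z$ \emph{uniformly} on the line $\mathrm{Re}\,z = -\mu$ (valid since $|z|\ge\mu$ there), obtaining $q_1 - f_{ib-\mu}^{-1} = \sigma\mu + ib + \mathcal{O}(1)$ and hence
\[
h(ib) = q_2^2(\sigma^2\mu^2+b^2)^{-1}[1+\mathcal{O}(\mu^{-1})] - 1,
\]
which is manifestly monotone in $b^2$ for $\mu$ large. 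That uniform-in-$y$ asymptotic is the missing ingredient; once you use it, the kernel identity becomes unnecessary.
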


\begin{proof}
One may observe that
\begin{align}
	f_{z} &= (-z)^{-1} - (-z)^{-2} + 2(-z)^{-3} - 6(-z)^{-4} + \mathcal{O}(z^{-5}), \\
	f^{L}_{z} &= (1-q_{1}f_{z})^{-1}f_{z},\quad \partial_{z}f^{L}_{z} = (1-q_{1}f_{z})^{-2}\partial_{z}f_{z} \\
	\partial_{z}f_{z} &= (-1)(f_{z}+z^{-1}),\quad \partial^{2}_{z}f_{z} = 2q_{1}(1-q_{1}f_{z})^{-3}(\partial_{z}f_{z})^{2} + (1-q_{1}f_{z})^{-2}\partial^{2}_{z}f_{z}, \\
	f'_{-\mu} &= \mu^{-2} -2\mu^{-3}+6\mu^{-4}+\mathcal{O}(\mu^{-5}),\quad f''_{-\mu} = 2\mu^{-3}+6\mu^{-4}+\mathcal{O}(\mu^{-5})
\end{align}
and that for $z \in i\mathbb{R}$ one has that $h(z) = |q_{2}f^{L}_{\mu-z}|^{2} -1 \in \mathbb{R}$. Furthermore, for all $z \in i\mathbb{R}$ the asymptotic expansion of $f_{z_{j}}$ is valid since $\Re z_{j} = a \gg 1$.

First consider $|z| \ll 1$. One finds
\begin{align}
	\partial^{2}_{z}h(z) &= q^{2}_{2}(\partial^{2}_{z}f^{L}_{z_{1}}f^{L}_{z_{2}} + 2\partial_{z}f^{L}_{z_{1}}\partial_{z}f^{L}_{z_{2}} + f^{L}_{z_{1}}\partial^{2}_{z}f^{L}_{z_{2}}) \\
		&= q^{2}_{2}(\partial^{2}_{z_{1}}f^{L}_{z_{1}}f^{L}_{z_{2}} - 2\partial_{z_{1}}f^{L}_{z_{1}}\partial_{z_{2}}f^{L}_{z_{2}} + f^{L}_{z_{1}}\partial^{2}_{z_{2}}f^{L}_{z_{2}}) \\
		&= q^{2}_{2}\{ [2q_{1}(1-q_{1}f_{z_{1}})^{-3}(\partial_{z_{1}}f_{z_{1}})^{2} + (1-q_{1}f_{z_{1}})^{-2}\partial^{2}_{z_{1}}f_{z_{1}}] \} (1-q_{1}f_{z_{2}})^{-1}f_{z_{2}} \\
			&\quad\quad - 2(1-q_{1}f_{z_{1}})^{-2}\partial_{z_{1}}f_{z_{1}}(1-q_{1}f_{z_{2}})^{-2}\partial_{z_{2}}f_{z_{2}} \\
			&\quad\quad + (1-q_{1}f_{z_{1}})^{-1}f_{z_{1}} [2q_{1}(1-q_{1}f_{z_{2}})^{-3}(\partial_{z_{2}}f_{z_{2}})^{2} + (1-q_{1}f_{z_{2}})^{-2}\partial^{2}_{z_{2}}f_{z_{2}}] \} \\
	h''(0) &= q^{2}_{2}\{ [2q_{1}(1-q_{1}f_{-\mu})^{-3}(f'_{-\mu})^{2} + (1-q_{1}f_{-\mu})^{-2}f''_{-\mu}] \} (1-q_{1}f_{-\mu})^{-1}f_{-\mu} \\
			&\quad\quad - 2(1-q_{1}f_{-\mu})^{-2}f'_{-\mu}(1-q_{1}f_{-\mu})^{-2}f'_{-\mu} \\
			&\quad\quad + (1-q_{1}f_{-\mu})^{-1}f_{-\mu} [2q_{1}(1-q_{1}f_{-\mu})^{-3}(f'_{-\mu})^{2} + (1-q_{1}f_{-\mu})^{-2}f''_{-\mu}] \}
\end{align}
\begin{align}
		&= q^{2}_{2}[2(q_{1}f_{-\mu}-1)^{-4}(f'_{-\mu})^{2}(2q_{1}f_{-\mu}-1) - 2(q_{1}f_{-\mu}-1)^{-3}f''_{-\mu}f_{-\mu}] \\
		&= q^{2}_{2}( 2\{\sigma[1+(1+\sigma^{-1})\mu^{-(2\sigma+2)}+\mathcal{O}(\mu^{-(2\sigma+3)})]\}^{-4} \\
			&\quad\quad \times [\mu^{-2}-2\mu^{-3}+6\mu^{-4}+\mathcal{O}(\mu^{-6})]^{2} \\
			&\quad\quad \times \{ 2[(\sigma+1)-(\sigma+1)\mu^{-(2\sigma+2)}+\mathcal{O}(\mu^{-(2\sigma+3)})]-1 \} \\
			&\quad\quad - 2\{\sigma[1+(1+\sigma^{-1})\mu^{-(2\sigma+2)}+\mathcal{O}(\mu^{-(2\sigma+3)})]\}^{-3} \\
			&\quad\quad \times [2\mu^{-3}-6\mu^{-4}+\mathcal{O}(\mu^{-5})] \\
			&\quad\quad \times [\mu^{-1}-\mu^{-2}+2\mu^{-3}-6\mu^{-4}+\mathcal{O}(\mu^{-5})] ) \\
		&= 2\sigma^{-2}\mu^{-2}[1-2\mu^{-1}+\mathcal{O}(\mu^{-2})] .
\end{align}
Assume that $h(z) = h(0) + 2^{-1}h''(0)z^{2} + \epsilon$ where $|\epsilon| \le \mathcal{O}(\mu^{-4})$. One finds that $h(z) = 0$ implies:
\begin{align}
	z^{2} &= -2[h(0) + \epsilon][h''(0)]^{-1} \\
		&= -(2)[2\sigma^{-1}\mu^{-(2\sigma+2)}+\mathcal{O}(\mu^{-(2\sigma+3)})+\epsilon] (2^{-1}\sigma^{2}\mu^{2})[1-2\mu^{-1}+\mathcal{O}(\mu^{-2})]^{-1} \\
		&= -2\sigma\mu^{2\sigma}[1+\mathcal{O}(\mu^{-1})+\epsilon].
\end{align}
This result is compatible with the assumptions, thus there are at least the two imaginary roots given by $z_{\pm} = \pm i (2\sigma)^{1/2}\mu^{-\sigma}[1+\mathcal{O}(\mu^{-1})]$.

It remains to be shown that there are no other imaginary roots. It is sufficient to prove that $h(z)$ has nonpositive curvature for all $z \in i\mathbb{R}$. Let $z_{j} = (-1)^{j+1}ib - \mu$, where $b \in \mathbb{R}$.
\begin{align}
	h(z) &= q_{2}^{2}(q_{1}-f_{z_{1}}^{-1})(q_{1}-f_{z_{2}}^{-1}) - 1 \\
		&= q_{2}^{2}[(\sigma+1)\mu - (\mu-ib) + \mathcal{O}(1)]^{-1}[(\sigma+1)\mu - (\mu+ib) + \mathcal{O}(1)]^{-1} - 1 \\
		&= q_{2}^{2}(\sigma^{2}\mu^{2}+b^{2})^{-1}[1+\mathcal{O}(\mu^{-1})] - 1.
\end{align}
Therefore $h(ib)$ decays monotonically as $|b| \nearrow \infty$ and there are only two imaginary roots.
\end{proof}

\begin{proof}[Proof of Theorem \ref{snthm05} Part (1)]
We have exhaustively shown that
\begin{align}
	\lambda_{\pm} := \pm i (2\sigma)^{1/2}\mu^{-\sigma}[1+\mathcal{O}(\mu^{-1})]
\end{align}
are the only roots of $h(z)$ for $z \in \mathbb{R} \cup i\mathbb{R}$. The absence of embedded eigenvalues follows from arguments similar to those for $\sigma(L)$.
\end{proof}

\begin{proof}[Proof of Theorem \ref{snthm05} Part (2)]
By Weyl's critereon it is the case that $\sigma_{\mathrm{e}}(H_{2}) = \sigma_{\mathrm{e}}(H_{0}) = (-\infty, -\mu] \cup [\mu, \infty)$. It is clear that there exists a well-defined absolutely continuous spectral measure on $\sigma_{\mathrm{e}}(H_{2})$. The representation of $H_{2}v = z v$ as a coupled series of algebraic equations guarantees that each $\lambda \in \sigma_{\mathrm{e}}(H_{2})$ has multiplicity 1. Therefore one must have that $\sigma_{\mathrm{e}}(H_{2}) = \sigma_{\mathrm{ac}}(H_{0})$.
\end{proof}

\section{Spectral Properties of $H$}

We consider without proof Proposition 1 of \cite{paper 02}:
\begin{align}
	||(I-P_{0})\alpha_{\mu}||_{1} \le \mu^{-(2\sigma)^{-1}(2\sigma-1)} + \mathcal{O}(\mu^{-(2\sigma)^{-1}(4\sigma-1)}) .
\end{align}
This gives
\begin{align}
	||U|| &\le 2(2\sigma + 1)||(I - P_{0})\alpha^{2\sigma}||_{1} \le 2(2\sigma + 1)||(I - P_{0})\alpha||_{1} \\
		&\le 2(2\sigma + 1)\mu^{-(2\sigma)^{-1}(2\sigma-1)} + \mathcal{O}(\mu^{-(2\sigma)^{-1}(4\sigma-1)}) =: m(\mu) .
\end{align}
We recall without proof a proposition of Kato \cite{Kato} regarding norm resolvent convergence.
\begin{unprop}
For $A$ a closed operator and $\{ A_{n} \}_{n=0}^{\infty}$ a sequence of closed operators, if $R^{A_{n}}_{z}$ converges in norm to $R^{A}_{z}$ for some $z \in \rho(A)$ then the convergence holds for every $z \in \rho(A)$.
\end{unprop}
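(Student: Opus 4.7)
The plan is to exploit the first resolvent identity together with a Neumann series and a chain-of-disks compactness argument.

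First I would write the identity
\begin{align*}
R^A_w = R^A_z\bigl[I - (w - z)R^A_z\bigr]^{-1} = R^A_z\sum_{k=0}^{\infty}(w - z)^k(R^A_z)^k,
\end{align*}
valid for $|w - z| < 1/||R^A_z||$, and the analogous formula for $R^{A_n}_w$ with $R^{A_n}_z$ in place of $R^A_z$. Since $R^{A_n}_z \to R^A_z$ in norm, $||R^{A_n}_z||$ is uniformly bounded for $n$ large, so the Neumann series for $R^{A_n}_w$ converges uniformly in $n$ on any slightly smaller disk about $z$. Term-by-term comparison then yields $R^{A_n}_w \to R^A_w$ in norm throughout $\{w : |w - z| < r\}$ for every fixed $r < 1/||R^A_z||$.

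Next I would define $\Omega := \{w \in \rho(A) : R^{A_n}_w \to R^A_w \text{ in norm}\}$. The previous step applied at an arbitrary $w_0 \in \Omega$ (with $w_0$ in the role of $z$) shows $\Omega$ is open in $\rho(A)$. For any $w_1 \in \rho(A)$ in the same connected component as $z$, pick a continuous path $\gamma : [0,1] \to \rho(A)$ with $\gamma(0) = z$ and $\gamma(1) = w_1$. By continuity of the resolvent and compactness of the image of the path, $M := \sup_{s \in [0,1]} ||R^A_{\gamma(s)}||$ is finite. Partitioning $[0,1]$ into finitely many subintervals of length less than $1/(2M)$ and iterating the disk argument along the resulting joints delivers $w_1 \in \Omega$. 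If $\rho(A)$ has several components, the same reasoning handles the component of $z$; in the self-adjoint setting of interest the symmetry $R^A_{\bar w} = (R^A_w)^\dag$ transfers convergence between the upper and lower half planes automatically.

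The main obstacle is maintaining uniformity of the Neumann bound in $n$ simultaneously at every joint along the path. This is handled inductively: at the $k$-th joint $\gamma(s_k)$, which has been placed in $\Omega$ by the previous step of the chain, one has $||R^{A_n}_{\gamma(s_k)}|| \to ||R^A_{\gamma(s_k)}|| \leq M$, so some threshold index $N_k$ ensures the Neumann bound needed for the next disk. Finiteness of the partition lets one take $N := \max_k N_k$ as a common index past which the convergence propagates along the entire path, completing the argument.
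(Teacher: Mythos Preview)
Your Neumann-series-plus-path argument is a valid and standard route to the result on each connected component of $\rho(A)$, but it differs from the argument the paper invokes and leaves a gap in the general statement.

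The paper does not prove the proposition itself; it quotes Kato and, in the appendix, reproduces Kato's explicit estimate (Problem IV--3.14):
\[
\|R^A_z - R^B_z\| \le \bigl[1 - |z - z_0|\,C\,\|R^A_{z_0} - R^B_{z_0}\|\bigr]^{-1}C^{2}\,\|R^A_{z_0} - R^B_{z_0}\|,\qquad C := \|I + (z - z_0)R^A_z\|.
\]
The point of this formula is that $C$ is finite for \emph{every} $z\in\rho(A)$, because the algebraic identity $[I - (z - z_0)R^A_{z_0}]^{-1} = I + (z - z_0)R^A_z$ holds exactly, with no smallness hypothesis on $|z - z_0|$. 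Thus once $\|R^A_{z_0} - R^{A_n}_{z_0}\|$ is small enough one jumps from $z_0$ to an arbitrary $z\in\rho(A)$ in a single step (and $z\in\rho(A_n)$ falls out of the same computation). Your Neumann expansion of $[I - (w - z)R^A_z]^{-1}$ discards this exact inverse in favor of the geometric series, which is why you are forced into the chain-of-disks mechanism and hence confined to the path-component of the initial point. What your approach buys is that it is entirely elementary and avoids any reference to Kato's gap metric; what Kato's buys is a quantitative one-shot bound that is indifferent to the topology of $\rho(A)$.

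This is where your proposal has a genuine, if mild, gap: the proposition is stated for arbitrary closed $A$, for which $\rho(A)$ need not be connected, and your conjugation patch does not apply here since $H$ and $H_2$ are explicitly \emph{not} self-adjoint. In the actual application no harm is done---$\sigma(H_2)$ lies in two real half-lines together with two isolated imaginary points, so $\rho(H_2)$ is connected and your path argument reaches every point---but as a proof of the proposition as written you should either replace the Neumann series by the exact inverse above (equivalently, invoke Kato's gap-metric Theorem IV--2.25), or restrict the conclusion to the component containing $z$.

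One minor slip: partitioning $[0,1]$ into subintervals of \emph{parameter} length less than $1/(2M)$ is not what you need; you need consecutive \emph{image} points $\gamma(s_k)$ to lie within $1/(2M)$ of one another in $\mathbb{C}$. Uniform continuity of $\gamma$ on $[0,1]$ fixes this immediately.
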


\begin{proof}[Proof of Theorem \ref{snthm06} Part (1)]
The discrete spectrum of $H$ can be at most $||U|| \le m(\mu)$ away from that of $H_{2}$. We therefore only need to consider the shift of the eigenvalues of $H_{2}$ and possible production of eigenvalues from the thresholds of $H_{2}$.

Consider the eigenvalues near the origin. By standard arguments, see e.g. \cite{Schlag}, the kernel of $H$ is spanned by linear combinations of matrix vectors composed the set $\{ T_{j}\alpha \}_{j = 1}^{n}$ where $\{T_{j}\}_{j=1}^{n}$ is the set of generators of symmetries of the soliton manifold, in which $\alpha$ lies. In our case there are only two symmetries: phase rotation and energy translation. The kernel is then spanned by matrix linear combinations of $\alpha$ and $\partial_{\mu}\alpha$ and thereby there exists an eigenvalue of multiplicity 2 at the origin. These eigenvalues must be result of the shift of the eigenvalues of $H_{2}$ to the origin.

Now consider the possibility of eigenvalues near the threshold. Consider that by the resolvent identity, one has $R^{H}_{z} - R^{H_{2}}_{z} = R^{H_{2}}_{z}UR^{H}_{z}$. Without loss of generality, let $z$ be chosen so that $|| R^{H_{2}}_{z} || \le |z + \epsilon(\mu)|^{-1}$ and $|z+\epsilon(\mu)|^{-1}||U|| < 1$ for some $\epsilon(\mu) = \mathcal{O}(\mu^{-\sigma})$, i.e. $\epsilon(\mu)$ is due to the presence of the eigenvalues of $H_{2}$. Then
\begin{align}
	\lim_{\mu \nearrow \infty}||R^{H_{2}}_{z}|| \le \lim_{\mu \nearrow \infty} |z + \epsilon(\mu)|^{-1} = |z|^{-1}
\end{align}
and
\begin{align}
	||R^{H}_{z}|| &= ||(1-R^{H_{2}}_{z}U)^{-1}R^{H_{2}}_{z}|| \le (1-||R^{H_{2}}_{z}||\ ||U||)^{-1}||R^{H_{2}}_{z}|| \\
		&\le [1 - |z + \epsilon(\mu)|^{-1}m(\mu)]^{-1} |z + \epsilon(\mu)|^{-1} \\
	\lim_{\mu \nearrow \infty} ||R^{H}_{z}|| &\le \lim_{\mu \nearrow \infty} [1 - |z + \epsilon(\mu)|^{-1}m(\mu)]^{-1} |z + \epsilon(\mu)|^{-1} = |z|^{-1}.
\end{align}
One may then find
\begin{align}
	\lim_{\mu \nearrow \infty} || R^{H}_{z} - R^{H_{2}}_{z} || &= \lim_{\mu \nearrow \infty} || R^{H_{2}}_{z}UR^{H}_{z} || \le \lim_{\mu \nearrow \infty} || R^{H_{2}}_{z}||\ ||U||\ ||R^{H}_{z} || \\
		&\le \lim_{\mu \nearrow \infty} |z|^{-2}m(\mu) = 0.
\end{align}
It is the case that $A$ is closed if $R^{A}_{z}$ exists and is bounded for at least one $z \in \mathbb{C}$. This is clearly the case for both $H_{2}$, $H$. Therefore by the principle of norm resolvent convergence it is the case that $(-\mu, 0) \cup (0,\mu) \subset \rho(H)$.
\end{proof}

\begin{proof}[Proof of Theorem \ref{snthm06} Part (2)]
By Weyl's critereon $\sigma_{\mathrm{e}}(H) = \sigma_{\mathrm{e}}(H_{2})$. One may explicitly construct an absolutely continuous spectral measure by expanding $R^{H}_{z} = (1-R^{H_{2}}_{z}U)^{-1}R^{H_{2}}_{z}$ as a convergent series in $U$, taking a limit $z \to \lambda \in \sigma_{\mathrm{e}}(H_{2})$, and collecting the imaginary terms. The representation of $Hv = z v$ as a coupled series of algebraic equations guarantees that each $\lambda \in \sigma_{\mathrm{e}}(H)$ has multiplicity 1. Therefore one must have that $\sigma_{\mathrm{e}}(H) = \sigma_{\mathrm{ac}}(H_{2})$.
\end{proof}

\section{Decay Estimates for $H_{2}$ and $H$}

\begin{defn}
For any single-valued or multi-valued function $f: \mathbb{C} \to \mathbb{C}$, an element of a set of linear functionals on some suitable Banach space with norm given through integration over $\lambda$, and with poles, branch points, and branch cuts found in the subset $\Sigma \subseteq \mathbb{R}$ let $\mathcal{PV}f : \Sigma \to \mathbb{C}$ be the \emph{principal value of $f$} defined by the weak limit
\begin{align}
	\mathcal{PV}f(\lambda) :=& \frac{1}{2} \wlim_{\epsilon \searrow 0} \left[  f(\lambda + i\epsilon) + f(\lambda - i\epsilon) \right] , \quad \lambda \in \Sigma,
\end{align}
which converges in the distributional sense. We analogously define the \emph{$\delta$-part} of $f$ to be
\begin{align}
	\delta f(\lambda) :=& {1 \over 2\pi i} \wlim_{\epsilon \searrow 0} \left[  f(\lambda + i\epsilon) - f(\lambda - i\epsilon) \right] , \quad \lambda \in \Sigma.
\end{align}
\end{defn}

We have kept vague the specification of the sense in which the above definitions converge weakly for the purposes of generality. The details of such convergence in our work will be clear from context. One may extend the domain of $\mathcal{PV}f$ to the complex plane and produce a single valued function, which we will also denote $f$, through
\begin{align}
	\mathcal{PV}f(z) := \left\{
	\begin{array}{cr}
		 f(z),&\ z \in \mathbb{C} \setminus \Sigma \\
		\mathcal{PV}f(z),&\ z \in \Sigma .
	\end{array} \right. .
\end{align}
One may observe that the analogous extension of $\delta f(\lambda)$ vanishes away from $\Sigma \subseteq \mathbb{R}$. This prescription extends to weak limits in $z \in \mathbb{C}$ of complex sequences $v_{z} \in \mathscr{T}$ whose components depend upon $z$.

From the convergent series expansion of the generalized exponential integrals \eqref{convergent}, one has that
\begin{align}
	\wlim_{\epsilon \searrow 0} E_{n+1}(-x \pm i \epsilon) = \mathcal{PV}E_{n+1}(-x) \mp i\pi \frac{(x)^n}{n!} ,\quad x > 0,
\end{align}
where for the sake of generality the limit is weak with respect to $L^{2}([a,\infty),\mathbb{C})$, $a > 0$.
One may write $\mathcal{PV}E_1(-x) = -\mathrm{Ei}(x)$ where
\begin{align}
	\mathrm{Ei}(x) := - \int_{-x}^\infty \!\mathrm{d}u\ u^{-1}e^{-u},\qquad x>0
\end{align}
is \emph{the exponential integral}.

\begin{defn}
Consider an operator $A$ on $\mathscr{H}$ which is self-adjoint on its domain $\mathcal{D}(A)$ and $\lambda$ an element of the discrete spectrum of $A$. Define $\mathcal{PV}^A_\lambda \equiv \mathcal{PV}(A - \lambda)^{-1}$, $\lambda \in \sigma(A)$ to be the \emph{principal value of the resolvent of $A$} given by the strong limit
\begin{align}
	\mathcal{PV}^A_\lambda :=& \frac{1}{2}\slim_{\epsilon \searrow 0} \left[  R^A_{\lambda + i\epsilon} + R^A_{\lambda - i\epsilon} \right] .
\end{align}
Denote by $ \delta^A_\lambda \equiv \delta(A - \lambda) \equiv P^A_\lambda$, $\lambda \in \sigma(A)$ the spectral projection defined by the strong limit
\begin{align}
	\delta^A_\lambda :=& \frac{1}{2\pi i} \slim_{\epsilon \searrow 0} \left[  R^A_{\lambda + i\epsilon} - R^A_{\lambda - i\epsilon} \right] .
\end{align}
If $\lambda$ is instead an element of the essential spectrum of $A$ one has that $\mathcal{PV}^A_\lambda, \delta^A_\lambda$ are defined by weak limits. If and only if the essential spectrum of $A$ is absolutely continuous then it is the case that $\mathrm{d}\mu^A_{\mathrm{e}}(\lambda) = \delta^A_\lambda\ \mathrm{d}\lambda$, where $\mathrm{d}\mu^A_{\mathrm{e}}(\lambda)$ is the essential spectral measure of $A$ and $\mathrm{d}\lambda$ is the Lebesgue measure on $\sigma_{\mathrm{e}}(A)$.
\end{defn}

The above definition permits the useful representation $\delta^{A}_{\lambda} = w^{A}_{\lambda}\phi^{A}_{\lambda} \otimes \phi^{A,*}_{\lambda}$ for $A$ of generalized multiplicity 1. One may observe through the spectral representation of $R^{L_{0}}_{z}$ that $\mathcal{PV}\psi^{L_{0}}_\lambda = \mathcal{PV}^{L_{0}}_\lambda \chi_{0}$ and that $\mathcal{PV}\xi^{L_{0}}_\lambda = \mathcal{PV}\psi^{L_{0}}_\lambda - \mathcal{PV}\psi^{L_{0}}_\lambda(0)\phi^{L_{0}}_\lambda = \xi^{L_{0}}_\lambda$, $\forall \lambda \in \sigma(L_{0})$, and analogously so for other operators.

We recall the method of spectral shifts as applied to rank-1 perturbations, see e.g. \cite{rank one}. Let $A = A_{0} - qP$ where $A_{0}$ is densely defined on $\mathscr{H}$ with nonempty resolvent set and $P = v_{\mathrm{P}} \otimes v_{\mathrm{P}}^{*}$ is a rank-1 orthogonal projection. Through the resolvent formula it follows by direct algebra that
\begin{align}
	R^{A}_z &= R^{A_0}_z + R^{A_0}_zqP R^{A}_z, \quad P R^{A}_z = P R^{A_0}_z + f^{A_0}_zqP R^{A}_z  ,\\
	P R^{A}_z &=  (1 - qf^{A_0}_z)^{-1}P R^{A_0}_z, \quad R^{A}_z = R^{A_0}_z + (1-qf^{A_{0}}_{z})^{-1}R^{A_0}_zqP R^{A_0}_z .
\end{align}
For $A$ essentially self-adjoint one may apply the definitions of $\mathcal{PV}^A_\lambda$ and $\delta^A_\lambda$ and find the corresponding shifts to $\mathcal{PV}^{A_{0}}_\lambda$ and $\delta^{A_{0}}_\lambda$. For $\lambda \in \sigma(A)$ it follows that
\begin{align}
	\mathcal{PV}^A_\lambda &= \mathcal{PV}^{A_0}_\lambda + g^{A_{0}}_{\lambda}[ (1-q\mathcal{PV}f^{A_{0}}_{\lambda})(\mathcal{PV}^{A_0}_\lambda qP \mathcal{PV}^{A_0}_\lambda - \pi^2 \delta^{A_0}_\lambda qP \delta^{A_0}_\lambda) \\
		&\quad\quad - \pi^2 q\delta f^{A_{0}}_{\lambda} (\mathcal{PV}^{A_0}_\lambda qP \delta^{A_0}_\lambda + \delta^{A_0}_\lambda qP \mathcal{PV}^{A_0}_\lambda) ]
\end{align}
\begin{align}
	\delta^A_\lambda &= \delta^{A_0}_\lambda + g^{A_{0}}_{\lambda} [ (1-q\mathcal{PV}f^{A_{0}}_{\lambda})(\mathcal{PV}^{A_0}_\lambda qP \delta^{A_0}_\lambda + \delta^{A_0}_\lambda qP \mathcal{PV}^{A_0}_\lambda) \\
		&\quad\quad + q\delta f^{A_{0}}_{\lambda} (\mathcal{PV}^{A_0}_\lambda qP \mathcal{PV}^{A_0}_\lambda - \pi^2 \delta^{A_0}_\lambda qP \delta^{A_0}_\lambda) ] ,
\end{align}
where
\begin{align}
	g^{A_{0}}_{\lambda} := [ (1-q\mathcal{PV}f^{A_{0}}_{\lambda})^2 + (q \pi \delta f^{A_{0}}_{\lambda})^2]^{-1} ,\quad f^{A_{0}}_{z} := (v_{\mathrm{P}},R^{A_{0}}_{z}v_{\mathrm{P}}).
\end{align}

We recall without proof Lemma 3.12 from \cite{2D}:
\begin{unlem}
Let $\mathscr{B}$ be a Banach space and $\lambda_{+} > \lambda_{-}$ be real constants. If $F(\lambda)$ has the properties
\begin{enumerate}
\item $F \in C(\lambda_{-},\lambda_{+};\mathscr{B})$
\item $F(\lambda_{-}) = F(\lambda) = 0 , \quad \lambda > \lambda_{+}$
\item $\mathrm{d}_{\lambda} F \in L^1(\lambda_{-} + \delta, \lambda_{+};\mathscr{B}) , \quad \forall \delta>0$
\item $\mathrm{d}_{\lambda} F(\lambda) = \mathcal{O}( [\lambda - \lambda_{-}]^{-1}\log^{-3}[\lambda - \lambda_{-}] ) , \quad \lambda \searrow \lambda_{-}$
\item $\mathrm{d}_{\lambda}^{2} F(\lambda) = \mathcal{O}( [\lambda-\lambda_{-}]^{-2}\log^{-2}[\lambda - \lambda_{-}] ) , \quad \lambda \searrow \lambda_{-}$
\end{enumerate}
then
\begin{align}
\int_{\lambda_{-}}^\infty \!\mathrm{d}\lambda\ e^{-it\lambda} F(\lambda) = \mathcal{O}(t^{-1}\log^{-2}t),\quad t\nearrow\infty
\end{align}
in the norm of $\mathscr{B}$.
\end{unlem}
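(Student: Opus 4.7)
My plan is to reduce to the case $\lambda_{-} = 0$ (the substitution $\lambda \mapsto \lambda + \lambda_{-}$ only produces a unit phase factor outside the integral), use condition (2) to truncate to $[0,\lambda_{+}]$, split at the threshold $\lambda = 1/t$, and treat the two pieces separately: the near-singularity part by a direct $L^{1}$ bound, and the bulk part by two integrations by parts using the derivative estimates (4)--(5). All norm inequalities are taken in $\mathscr{B}$.

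Since $F(0) = 0$ and $\mathrm{d}_{\lambda}F(\lambda) = \mathcal{O}(\lambda^{-1}|\log\lambda|^{-3})$ near the origin, the fundamental theorem of calculus gives $||F(\lambda)|| \le \int_{0}^{\lambda}||\mathrm{d}_{s}F(s)||\,\mathrm{d}s \le C|\log\lambda|^{-2}$ as $\lambda \searrow 0$. Integrating once more,
\begin{align*}
\Big|\Big|\int_{0}^{1/t}\!e^{-it\lambda}F(\lambda)\,\mathrm{d}\lambda\Big|\Big| \le \int_{0}^{1/t}||F(\lambda)||\,\mathrm{d}\lambda \le C\int_{0}^{1/t}|\log\lambda|^{-2}\,\mathrm{d}\lambda = \mathcal{O}(t^{-1}\log^{-2}t),
\end{align*}
by a standard elementary asymptotic (the antiderivative behaves like $\delta|\log\delta|^{-2}$ at scale $\delta$).

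For the bulk $\int_{1/t}^{\lambda_{+}}e^{-it\lambda}F(\lambda)\,\mathrm{d}\lambda$ I would integrate by parts twice. The first IBP produces a boundary term at $\lambda_{+}$ that vanishes by continuity (since $F$ extends continuously by $0$ beyond $\lambda_{+}$), a boundary term at $1/t$ of size $(it)^{-1}F(1/t) = \mathcal{O}(t^{-1}|\log t|^{-2})$, and the remainder $(it)^{-1}\int_{1/t}^{\lambda_{+}}e^{-it\lambda}\mathrm{d}_{\lambda}F(\lambda)\,\mathrm{d}\lambda$. A second IBP on this remainder produces a boundary contribution $(it)^{-2}\mathrm{d}_{\lambda}F(1/t) = \mathcal{O}(t^{-1}|\log t|^{-3})$ (plus a harmless $\mathcal{O}(t^{-2})$ term from the finite one-sided limit of $\mathrm{d}_{\lambda}F$ at $\lambda_{+}$) and the remainder $(it)^{-2}\int_{1/t}^{\lambda_{+}}e^{-it\lambda}\mathrm{d}_{\lambda}^{2}F(\lambda)\,\mathrm{d}\lambda$. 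For this last remainder I would use condition (5) and a subsidiary scalar IBP (taking $u = |\log\lambda|^{-2}$, $dv = \lambda^{-2}\mathrm{d}\lambda$) to obtain $\int_{1/t}^{\lambda_{+}}\lambda^{-2}|\log\lambda|^{-2}\,\mathrm{d}\lambda = \mathcal{O}(t|\log t|^{-2})$, which after multiplication by $t^{-2}$ is again $\mathcal{O}(t^{-1}\log^{-2}t)$. Collecting every contribution yields the claimed bound.

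The main obstacle, and the reason the logarithmic exponents in (4) and (5) are chosen so sharply, is that three independent sources --- the near-zero $L^{1}$ bound, the first-IBP boundary value at $\lambda = 1/t$, and the doubly-integrated bulk --- each contribute at exactly the target rate $t^{-1}\log^{-2}t$; weakening (4) or (5) by even a single logarithmic power destroys the matching, and the split point $\lambda = 1/t$ is the unique scale at which the near-zero and bulk estimates balance. A secondary subtlety is that conditions (1)--(2) must be read as forcing $F(\lambda_{+}) = 0$ by left-continuity, for otherwise a jump at $\lambda_{+}$ would contribute an uncontrolled $\mathcal{O}(t^{-1})$ boundary term at the first IBP; this is the natural reading consistent with the intended application, where $F$ is built from the spectral density and vanishes identically outside the continuous spectrum.
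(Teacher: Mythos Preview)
The paper does not prove this lemma; it is quoted as Lemma~3.12 of Kopylova--Komech and invoked as a black box. Your sketch is the standard argument (and essentially the one in that reference): translate to $\lambda_{-}=0$, split at the scale $1/t$, bound the near-threshold piece directly via $\|F(\lambda)\|\le C|\log\lambda|^{-2}$, and integrate by parts twice on the bulk. The scalar estimates you record for each boundary and remainder term are correct, and your remark about why the exponents in (4)--(5) are sharp is apt.

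There is one point you pass over too quickly. You assert that $\mathrm{d}_{\lambda}F(\lambda_{+}^{-})$ is finite and, implicitly, that $\mathrm{d}_{\lambda}^{2}F$ is integrable near $\lambda_{+}$; neither follows from (1)--(5) as written here. Condition~(3) gives only $\mathrm{d}_{\lambda}F\in L^{1}$ near $\lambda_{+}$, and after a single integration by parts that yields merely $o(t^{-1})$ from that region by Riemann--Lebesgue, not $\mathcal{O}(t^{-1}\log^{-2}t)$. This is almost certainly a defect of the transcription rather than of your method: in the cited reference a smooth partition of unity makes $F$ vanish to all orders near $\lambda_{+}$, and in the present paper's own application the authors take $\lambda_{+}=\infty$ with exponential decay of the spectral weight, so the upper endpoint is genuinely harmless there. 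In your write-up you should state explicitly the additional regularity at $\lambda_{+}$ you are using, rather than absorb it into the word ``harmless''.
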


We will verify that $F(\lambda) = \delta^{H_{2}}_{\lambda}$ satisfies the desired properties for both $\lambda \ge \mu$ and $\lambda \le \mu$.

\begin{proof}[Proof of Theorem \ref{snthm07}]
Let $\mathscr{B} = \left\{ A \in \mathcal{L}(\vec{\mathscr{T}}) : || A ||_{\mathscr{B}} < \infty \right\} $ be the Banach space complete in the norm
\begin{align}
|| A ||_{\mathscr{B}} := \sup_{v \in \ell^{1}} { || W_{\kappa,\tau}AW_{\kappa,\tau}v ||_{1} \over || v ||_{1} } ,
\end{align}
where $\vec{\mathscr{T}} = \mathscr{T} \oplus \mathscr{T}$ is the natural extension of $\mathscr{T}$ to the matrix system. Let $F(\lambda) = \delta^{H_{2}}_{\lambda}$. We will verify the appropriate properties of $F(\lambda)$ for $\lambda_{-} = 0$ and $\lambda _{+} = \infty$.

Let $X_{1} := \begin{bmatrix} 1 & 0 \\ 0 & 0 \end{bmatrix}$, $X_{2} := \begin{bmatrix} 0 & 0 \\ 0 & 1 \end{bmatrix}$, $g_{\lambda} := \left[ \left( 1-q_{1}\mathcal{PV}f_{\lambda} \right)^{2} + \left(q_{1} \pi w_{\lambda} \right)^{2} \right]^{-1}$, and
\begin{align}
	\widehat{g}_{1,\lambda} :=&\ \left[ \left( 1-q^{2}_{2}\mathcal{PV}f^{L}_{\lambda_{1}}f^{L}_{\lambda_{2}} \right)^{2} + \left( \pi w^{L}_{\lambda_{1}}q^{2}_{2}f^{L}_{\lambda_{2}} \right)^{2} \right]^{-1}, \\
	\widehat{g}_{2,\lambda} :=&\ \left[ \left( 1-q^{2}_{2}f^{L}_{\lambda_{1}}\mathcal{PV}f^{L}_{\lambda_{2}} \right)^{2} + \left( \pi w^{L}_{\lambda_{2}}q^{2}_{2}f^{L}_{\lambda_{1}} \right)^{2} \right]^{-1} .
\end{align}

Since $\psi^{L}_{z} = (1 - q_{1}f_{z})^{-1}\psi_{z}$, $f^{L}_{z} = (1 - q_{1}f_{z})^{-1}f_{z}$, and $\psi_{z} = f_{z}\phi_{z}+\xi_{z}$, for $\lambda \ge 0$ one has
\begin{align}
	\mathcal{PV}\psi^{L}_{\lambda} &= g_{\lambda}\left[ \mathcal{PV}f_{\lambda}\phi_{\lambda} - q_{1}\mathcal{PV}f_{\lambda}\xi_{\lambda} + \xi_{\lambda} - q_{1}\left( \mathcal{PV}f_{\lambda} \right)^{2}\phi_{\lambda} - q_{1}(\pi w_{\lambda})^{2}\phi_{\lambda} \right], \\
	\mathcal{PV}f^{L}_{\lambda} &= g_{\lambda}\left[ \mathcal{PV}f_{\lambda} - q_{1}\left( \mathcal{PV}f_{\lambda} \right)^{2} - q_{1}(\pi w_{\lambda})^{2} \right], \\
	\phi^{L}_{\lambda} &= \phi_{\lambda} + q_{1}\xi_{\lambda}, \quad w^{L}_{\lambda} = g_{\lambda}w_{\lambda} ,
\end{align}
and for $\lambda < 0$ one has
\begin{align}
	\psi^{L} = (1 - q_{1}f_{\lambda})^{-1}\psi_{\lambda},\quad f^{L} = (1 - q_{1}f_{\lambda})^{-1}f_{\lambda} .
\end{align}

By the method of spectral shifts one has
\begin{align}
	R^{H_{2}}_{z} &= R^{H_{1}}_{z} + R^{H_{1}}_{z}q_{2}P_{0}JR^{H_{2}}_{z} \\
		&= R^{H_{1}}_{z} + R^{H_{1}}_{z}q_{2}P_{0}J\left(1-f^{H_{1}}_{z}q_{2}J\right)^{-1}R^{H_{1}}_{z} \\
		&= R^{H_{1}}_{z} + R^{H_{1}}_{z}q_{2}P_{0}J\left(1-q^{2}_{2}f^{L}_{z_{1}}f^{L}_{z_{2}}\right)^{-1}\left(1+f^{H_{1}}_{z}q_{2}J\right)R^{H_{1}}_{z} \\
		&= R^{H_{1}}_{z} + q_{2} \left| 1-q^{2}_{2}f^{L}_{z_{1}}f^{L}_{z_{2}} \right|^{-2} \left( 1-q^{2}_{2}\overline{f}^{L}_{z_{1}}\overline{f}^{L}_{z_{2}} \right) R^{H_{1}}_{z}\left(J+q_{2}Jf^{H_{1}}_{z}J\right)P_{0}R^{H_{1}}_{z} \\
		&= \left( R^{L}_{z_{1}}X_{1} - q_{2} R^{L}_{z_{2}}X_{2} \right) + \left| 1-q^{2}_{2}f^{L}_{z_{1}}f^{L}_{z_{2}} \right|^{-2} \left( 1-q^{2}_{2}\overline{f}^{L}_{z_{1}}\overline{f}^{L}_{z_{2}} \right) \\
			&\quad\quad \times \left( R^{L}_{z_{1}}X_{1} - R^{L}_{z_{2}}X_{2} \right) \left[J - q_{2}\left( f^{L}_{z_{1}}X_{2} - f^{L}_{z_{2}}X_{1} \right)\right] P_{0} \left( R^{L}_{z_{1}}X_{1} - R^{L}_{z_{2}}X_{2} \right)
\end{align}
from which one finds for $\lambda \ge \mu$:
\begin{align}
	\lim_{\epsilon \searrow 0} R^{H_{2}}_{\lambda \pm i\epsilon} &= \left[ \left( \mathcal{PV}^{L}_{\lambda_{1}}X_{1} - R^{L}_{\lambda_{2}}X_{2} \right) \pm i\pi \delta^{L}_{\lambda_{1}}X_{1} \right] \\
			&\quad\quad + q_{2} \widehat{g}_{1,\lambda} \left[ \left( 1-q^{2}_{2}\mathcal{PV}f^{L}_{\lambda_{1}}f^{L}_{\lambda_{2}} \right) \pm i\pi w^{L}_{\lambda_{1}}q^{2}_{2}f^{L}_{\lambda_{2}} \right] \\
			&\quad\quad \times \left[ \left( \mathcal{PV}^{L}_{z_{1}}X_{1} - R^{L}_{z_{2}}X_{2} \right) \pm i\pi \delta^{L}_{\lambda_{1}}X_{1} \right] \\
			&\quad\quad \times \left[ \left( J - q_{2} \mathcal{PV}f^{L}_{\lambda_{1}}X_{2} + q_{2} f^{L}_{\lambda_{2}}X_{1} \right) \mp i\pi w^{L}_{\lambda_{1}}q_{2}X_{2} \right] \\
			&\quad\quad \times P_{0} \left[ \left( \mathcal{PV}^{L}_{\lambda_{1}}X_{1} - R^{L}_{\lambda_{2}}X_{2} \right) \pm i\pi \delta^{L}_{\lambda_{1}}X_{1} \right]
\end{align}
\begin{align}
		&= \left[ \left( \mathcal{PV}^{L}_{\lambda_{1}}X_{1} - R^{L}_{\lambda_{2}}X_{2} \right) \pm i\pi w^{L}_{\lambda_{1}} \phi^{L}_{\lambda_{1}} \otimes \phi^{L,*}_{\lambda_{1}}  X_{1} \right] \\
			&\quad\quad+ q_{2} \widehat{g}_{1,\lambda} \left[ \left( 1-q^{2}_{2}\mathcal{PV}f^{L}_{\lambda_{1}}f^{L}_{\lambda_{2}} \right) \pm i\pi w^{L}_{\lambda_{1}}q^{2}_{2}f^{L}_{\lambda_{2}} \right] \\
			&\quad\quad \times \left[ \left( \mathcal{PV}\psi^{L}_{\lambda_{1}}X_{1} - \psi^{L}_{\lambda_{2}}X_{2} \right) \pm i\pi w^{L}_{\lambda_{1}}\phi^{L}_{\lambda_{1}}X_{1} \right] \\
			&\quad\quad \left[ \left( J - q_{2} \mathcal{PV}f^{L}_{\lambda_{1}}X_{2} + q_{2} f^{L}_{\lambda_{2}}X_{1} \right) \mp i\pi w^{L}_{\lambda_{1}}q_{2}X_{2} \right] \\
			&\quad\quad \otimes \left[ \left( \mathcal{PV}\psi^{L,*}_{\lambda_{1}}X_{1} - \psi^{L,*}_{\lambda_{2}}X_{2} \right) \pm i\pi w^{L}_{\lambda_{1}}\phi^{L,*}_{\lambda_{1}}X_{1} \right]
\end{align}
and for $\lambda \le \mu$:
\begin{align}
	\lim_{\epsilon \searrow 0} R^{H_{2}}_{\lambda \pm i\epsilon} &= \left[ \left( R^{L}_{\lambda_{1}}X_{1} - \mathcal{PV}^{L}_{\lambda_{2}}X_{2} \right) \mp i\pi \delta^{L}_{\lambda_{2}}X_{2} \right] \\
			&\quad\quad + q_{2} \widehat{g}_{2,\lambda} \left[ \left( 1-q^{2}_{2}f^{L}_{\lambda_{1}}\mathcal{PV}f^{L}_{\lambda_{2}} \right) \mp i\pi w^{L}_{\lambda_{2}}q^{2}_{2}f^{L}_{\lambda_{1}} \right] \\
			&\quad\quad \times \left[ \left( R^{L}_{\lambda_{1}}X_{1} - \mathcal{PV}^{L}_{\lambda_{2}}X_{2} \right) \mp i\pi \delta^{L}_{\lambda_{2}}X_{2} \right] \\
			&\quad\quad \left[ \left( J - q_{2} f^{L}_{\lambda_{1}}X_{2} + q_{2} \mathcal{PV}f^{L}_{\lambda_{2}}X_{1} \right) \pm i\pi w^{L}_{\lambda_{2}}q_{2}X_{1} \right] \\
			&\quad\quad \times P_{0} \left[ \left( R^{L}_{\lambda_{1}}X_{1} - \mathcal{PV}^{L}_{\lambda_{2}}X_{2} \right) \mp i\pi \delta^{L}_{\lambda_{2}}X_{2} \right] \\
		&= \left[ \left( R^{L}_{\lambda_{1}}X_{1} - \mathcal{PV}^{L}_{\lambda_{2}}X_{2} \right) \mp i\pi w^{L}_{\lambda_{2}} \phi^{L}_{\lambda_{2}} \otimes \phi^{L,*}_{\lambda_{2}}X_{2} \right] \\
			&\quad\quad + q_{2} \widehat{g}_{2,\lambda} \left[ \left( 1-q^{2}_{2}f^{L}_{\lambda_{1}}\mathcal{PV}f^{L}_{\lambda_{2}} \right) \mp i\pi w^{L}_{\lambda_{2}}q^{2}_{2}f^{L}_{\lambda_{1}} \right] \\
			&\quad\quad \times \left[ \left( \psi^{L}_{\lambda_{1}}X_{1} - \mathcal{PV}\psi^{L}_{\lambda_{2}}X_{2} \right) \mp i\pi w^{L}_{\lambda_{2}} \phi^{L}_{\lambda_{2}}X_{2} \right] \\
			&\quad\quad \left[ \left( J - q_{2} f^{L}_{\lambda_{1}}X_{2} + q_{2} \mathcal{PV}f^{L}_{\lambda_{2}}X_{1} \right) \pm i\pi w^{L}_{\lambda_{2}}q_{2}X_{1} \right] \\
			&\quad\quad \otimes \left[ \left( \psi^{L}_{\lambda_{1}}X_{1} - \mathcal{PV}\psi^{L}_{\lambda_{2}}X_{2} \right) \mp i\pi w^{L}_{\lambda_{2}} \phi^{L,*}_{\lambda_{2}}X_{2} \right] .
\end{align}
One may expand the above expressions and look for the resulting imaginary piece to find $\delta^{H_{2}}_{\lambda}$ for  $\lambda \ge \mu$ or $\lambda \le \mu$. We will not do so and will analyze its properties from the unexpanded forms for simplicity instead.

We will use the definitions for $W_{\kappa,\tau}$, $\epsilon$, and the like from \cite{paper 01}. Furthermore we will employ the spectral decay estimates of Corollary 1 of \cite{paper 01} as well as the quasi-exponential decay estimates of Theorem 2 of \cite{paper 02}. From the latter one can see that for $\lambda > \mu$ it is the case that $\mathrm{d}^{n}_{\lambda}\left[w^{1/2}_{\lambda_{1}}\psi_{\lambda_{2}}(x) \right] \in \ell^{1}$, as a function of $(x,\lambda)$, is $\ell^{1}(\mathbb{Z}_{+} \times [\mu,\infty),\mathbb{R})$.

By considering the many definitions, there is one crucial function which strongly determines our estimates: $\mathcal{PV}f_{a} = e^{a}E_{1}(a) \sim - \log(a)$ as $0 < a \searrow 0$. Only powers of $\mathcal{PV}f_{a}$ can be nonanalytic or unbounded. We will therefore proceed to prove the desired properties of $F(\lambda)$ by addressing the powers of $\mathcal{PV}f_{a}$ alone.

One may observe that $g_{\lambda} := \{ [1- q_{1} e^{-\lambda} \mathcal{PV}E_1(-\lambda)]^2 + [\pi q_{1} e^{-\lambda}]^2 \}^{-1}$ has the properties:
\begin{align}
	g_{\lambda} &= | g_{\lambda} | \le \widehat{g}_{0}(q_{1}) < \infty, \quad \forall \lambda \in [0, \infty) \\
	| \mathrm{d}_{\lambda}g_{\lambda} | &\le \widehat{g}_{0}(q_{1})\widehat{g}_{1}(q_{1}, \delta) < \infty,  \quad \forall \lambda \in [\delta, \infty) \\
	g_{0} &= g_{\infty} = 0 , \\
	\mathrm{d}_{\lambda}g_{\lambda} &= \mathcal{O}(\lambda^{-1}\log^{-1}\lambda), \quad \lambda \searrow 0 \\
	\mathrm{d}_{\lambda}^{2}g_{\lambda} &= \mathcal{O}(\lambda^{-2}\log^{-3}\lambda), \quad \lambda \searrow 0 \\
		&\leq \mathcal{O}(\lambda^{-2}\log^{-2}\lambda)
\end{align}
where $0 < \widehat{g}_{0}(q_{1}), \widehat{g}_{1}(q_{1}, \delta) < \infty$ are constants whose other properties are not needed here. $g_{\lambda}$ is the only function of $\lambda$ involved in the definition of $F(\lambda)$ whose derivatives are unbounded in the neighborhood of the threshold $\lambda = 0$ and thereby the derivatives of $g_{\lambda}$ and positive powers of $\mathcal{PV}f_{\lambda}$ and its derivatives are dominant in determining the properties of the derivatives of $F(\lambda)$. We will therefore only consider the dominant factors with respect to these quantities.

Consider the contributions to the imaginary part of $\lim_{\epsilon \searrow 0} R^{H_{2}}_{\lambda \pm i\epsilon}$ for either $\lambda \ge \mu$ or $\lambda \le \mu$. Due to the symmetry between these two ranges of $\lambda$ it is sufficient to analyze the case of $\lambda \ge \mu$ alone and we will do so exclusively in the following.

\noindent\underline{Properties (1), (2):}
By considering the control that the factors of $w^{L}_{\lambda_{1}} = g_{\lambda_{1}}w_{\lambda_{1}}$ impose, one may see that all possible contributions are bounded in $\lambda_{1}$ and $x$.

\noindent\underline{Property (3):}
The bounds for the derivatives of $\phi_{\lambda_{1}}$, $\xi_{\lambda_{1}}$, and $\psi_{\lambda_{2}}$ cannot present a problem with the chosen norm on $\mathscr{B}$. Exponential decay as $\lambda_{1} \nearrow \infty$ ensures that the upper bound of integration cannot be a problem. The only remaining potential issue comes from the behavior at the threshold, which is not relevant.

\noindent\underline{Property (4):}
The bounds for the derivatives of $\phi_{\lambda_{1}}$, $\xi_{\lambda_{1}}$, and $\psi_{\lambda_{2}}$ cannot present a problem with the chosen norm on $\mathscr{B}$.There will be two dominant factors. One of is $\mathrm{d}_{\lambda}g_{\lambda_{1}} = \mathcal{O}(\lambda^{-1}_{1}\log^{-1}\lambda_{1})$ as $\lambda_{1} \searrow 0$. The other dominant factor is of the form $g^{2}_{\lambda_{1}}\mathrm{d}_{\lambda_{1}}(\mathcal{PV}f_{\lambda_{1}})^{2} = \mathcal{O}(\lambda^{-1}_{1}\log^{-1}\lambda_{1})$ as $\lambda_{1} \searrow 0$.

\noindent\underline{Property (5):}
The bounds for the derivatives of $\phi_{\lambda_{1}}$, $\xi_{\lambda_{1}}$, and $\psi_{\lambda_{2}}$ cannot present a problem with the chosen norm on $\mathscr{B}$. There will be one dominant factor, which is $g_{\lambda_{1}}[\mathrm{d}_{\lambda_{1}}(\mathcal{PV}f_{\lambda_{1}})^{2}]^{2} = \mathcal{O}(\lambda^{-2}\log^{-2}\lambda)$.

\end{proof}

\begin{proof}[Proof of Theorem \ref{snthm08}]
One has that the linear Schr\"odinger equation
\begin{align}
	i\mathrm{d}_{t}u &= Hu = H_{2}u + Uu
\end{align}
may be converted to an integral equation by means of the Duhamel formula:
\begin{align}
	u(t) &= e^{-itH_{2}}u_{0} - i\int_{0}^{t}\mathrm{d}t_{1}\ e^{-i(t - t_{1})H_{2}}Uu(t),
\end{align}
to which one may apply weights and estimate
\begin{align}
	W_{\kappa,\tau}u(t) &= W_{\kappa,\tau}e^{-itH_{2}}u_{0} - i\int_{0}^{t}\mathrm{d}t_{1}\ W_{\kappa,\tau}e^{-i(t - t_{1})H_{2}}Uu(t_{1}) \\
	|| W_{\kappa,\tau}u(t) ||_{1} &\le || W_{\kappa,\tau}e^{-itH_{2}}u_{0} ||_{1} + \int_{0}^{t}\mathrm{d}t_{1}\ || W_{\kappa,\tau}e^{-i(t - t_{1})H_{2}}Uu(t_{1}) ||_{1} \\
	&\le c_{0}|| W_{\kappa,\tau}e^{-itH_{2}}W_{\kappa,\tau}W^{-1}_{\kappa,\tau}u_{0} ||_{\infty} \\
			&\quad\quad + c_{1}\int_{0}^{t}\mathrm{d}t_{1}\ || W_{\kappa,\tau}e^{-i(t - t_{1})H_{2}}W_{\kappa,\tau} W^{-1}_{\kappa,\tau}UW^{-1}_{\kappa,\tau}W_{\kappa,\tau}u(t_{1}) ||_{\infty} \\
		&\le c_{0}|| W_{\kappa,\tau}e^{-itH_{2}}W_{\kappa,\tau} ||\ || W^{-1}_{\kappa,\tau}u_{0} ||_{1} \\
			&\quad\quad + c_{1}\int_{0}^{t}\mathrm{d}t_{1}\ || W_{\kappa,\tau}e^{-i(t - t_{1})H_{2}}W_{\kappa,\tau}||\ || W^{-2}_{\kappa,\tau}U ||\ || W_{\kappa,\tau}u(t_{1}) ||_{1} \\	
		&\le c_{2}[(t + c_{3})\log^{2}(t + c_{3})]^{-1} \\
			&\quad\quad + c_{4}\int_{0}^{t}\mathrm{d}t_{1}\ [(t - t_{1} + c_{3})\log^{2}(t - t_{1} + c_{3})]^{-1} || W_{\kappa,\tau}u(t_{1}) ||_{1}
\end{align}
Let $f(t) := || W_{\kappa,\tau}u(t) ||_{1}$ and $g(t) := [(t + c_{3})\log^{2}(t + c_{3})]^{-1}$. By Gronwall's Lemma one has
\begin{align}
	f(t) &\le c_{2}g(t) + c_{4}\int_{0}^{t} \mathrm{d}t_{1}\ g(t-t_{1})f(t_{1})
\end{align}
implies that
\begin{align}
	f(t) &\le c_{2}g(t) + c_{2}c_{4}\int_{0}^{t} \mathrm{d}t_{1}\ g(t_{1})g(t - t_{1})\exp\left[c_{4}\int_{t_{1}}^{t} \mathrm{d}t_{2}\ g(t - t_{2})\right] .
\end{align}
Furthermore
\begin{align}
	\exp\left[c_{4}\int_{t_{1}}^{t} \mathrm{d}t_{2}\ g(t - t_{2})\right] &= \exp\left[c_{4}\int_{0}^{t-t_{1}} \mathrm{d}t_{2}\ g(t_{2})\right] \\
	&\le \exp\left[c_{4}\int_{0}^{\infty} \mathrm{d}t_{2}\ g(t_{2})\right]  \le c_{5} ,
\end{align}
and
\begin{align}
	\int_{0}^{t} \mathrm{d}t_{1}\ g(t_{1})g(t - t_{1}) &= 2\int_{0}^{t/2} \mathrm{d}t_{1}\ g(t_{1})g(t - t_{1}) \le 2\int_{0}^{t/2} \mathrm{d}t_{1}\ g(t_{1})g(t/2) \\
		&\le c_{6} [(t + c_{7})\log^{2}(t + c_{7})]^{-1} .
\end{align}
One therefore has
\begin{align}
	f(t) &\le c_{2}g(t) + c_{2}c_{4}c_{5}c_{6} [(t + c_{7})\log^{2}(t + c_{7})]^{-1} \\
		&\le c_{8}[(t + c_{9})\log^{2}(t + c_{9})]^{-1} = \mathcal{O}(t^{-1}\log^{-2}t),
\end{align}
as $t \nearrow \infty$. In the above $c_{j}$, $j = 0, \ldots , 9$, are constants, the properties of which are not important.
\end{proof}

\section{Conclusions and Conjectures}

The goal of this work has been to prove Theorem \ref{snthm08}. This is the most important component in proving the asymptotic stability of the soliton:
\begin{snconj}\label{snconj01}
The soliton manifold specified by the coordinates $(\mu,\nu)$ with respect to $u(t) = e^{-i(-\mu t + \nu)}\alpha_{\mu}$ is asymptotically stable under perturbed evolution via the discrete NLS \eqref{NLS}.
\end{snconj}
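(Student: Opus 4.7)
The plan is to follow the standard modulation equations framework of Soffer and Weinstein, adapted to the discrete setting, using the spectral and dispersive inputs already established in this paper. First I would fix the decomposition $u(t) = e^{-i\theta(t)}(\alpha_{\widehat{\mu}(t)} + \beta(t))$ and impose a symplectic orthogonality condition on $\beta$ with respect to the generalized kernel of $H$. By Theorem \ref{snthm06}, this kernel is two-dimensional and spanned (in the matrix sense) by $\alpha$ and $\partial_\mu \alpha$, which arise from the phase and mass symmetries of the soliton manifold. The orthogonality conditions $\langle \vec{\beta}, J \vec{\alpha} \rangle = 0$ and $\langle \vec{\beta}, J \partial_\mu \vec{\alpha} \rangle = 0$ (in the appropriate bilinear pairing) yield, via the implicit function theorem for small $\beta$, a smooth choice of modulation parameters $\widehat{\mu}(t)$, $\widehat{\nu}(t)$ and equations $\mathrm{d}_t \widehat{\mu}$, $\mathrm{d}_t \widehat{\nu} = \mathcal{O}(\|W_{\kappa,\tau} \vec{\beta}\|_\infty^2)$, with quadratic dependence on the dispersive part.

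Next I would work with the LNLS equation \eqref{LNLS} projected onto the continuous spectral subspace by $P^H_{\mathrm{e}}$. By the orthogonality conditions, $\vec{\beta} = P^H_{\mathrm{e}} \vec{\beta}$, so the Duhamel representation reads
\begin{align*}
\vec{\beta}(t) = e^{-itH} P^H_{\mathrm{e}} \vec{\beta}(0) - i \int_0^t \mathrm{d}s\ e^{-i(t-s)H} P^H_{\mathrm{e}} \vec{\gamma}(s).
\end{align*}
I would then apply weights $W_{\kappa,\tau}$ on both sides and invoke Theorem \ref{snthm08} to convert the free propagator into the time decay factor $t^{-1}\log^{-2} t$ acting on $\|W^{-1}_{\kappa,\tau} \vec{\gamma}\|_1$. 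The source $\vec{\gamma}$ splits as $\vec{\gamma}_0 + \vec{\gamma}_1$: the modulation piece $\vec{\gamma}_0$ lies in (or nearly in) the kernel of $H$ and is annihilated (up to small error controlled by $\widehat{\mu}_1$, $\widehat{\nu}_1$) by $P^H_{\mathrm{e}}$, while the nonlinear piece $\vec{\gamma}_1$ is at least quadratic in $\vec{\beta}$ and localized by the soliton factors $\alpha^{2\sigma-j}$, making it small in weighted $\ell^1$.

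The core of the argument is then a bootstrap on the quantity $M(t) := \sup_{0 \le s \le t}(1+s)\log^2(2+s)\|W_{\kappa,\tau}\vec{\beta}(s)\|_\infty$. Using the Gronwall-type computation carried out in the proof of Theorem \ref{snthm08} together with the quadratic smallness of $\vec{\gamma}_1$ and of $\mathrm{d}_t \widehat{\mu}, \mathrm{d}_t \widehat{\nu}$, I would close the estimate $M(t) \le C M(0) + C M(t)^2$ for sufficiently small initial data, yielding $M(t) \lesssim \|W^{-1}_{\kappa,\tau} \vec{\beta}(0)\|_1$ uniformly in $t$. The integrability of $t^{-1}\log^{-2} t$ then forces $\widehat{\mu}(t) \to \mu_\infty$, $\widehat{\nu}(t) \to \nu_\infty$ as $t \nearrow \infty$, and $\vec{\beta}(t) \to 0$ in the weighted $\ell^\infty$ norm, which is precisely asymptotic stability.

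The main obstacle will be the non-self-adjointness of $H$: the spectral projection $P^H_{\mathrm{e}}$ is not orthogonal, and the biorthogonal pairing with the generalized kernel eigenfunctions of $H$ and $H^\dagger$ must be constructed carefully to keep the modulation equations well-posed. A secondary but nontrivial technical point is verifying that the nonlinear terms in $\vec{\gamma}_1$, which involve products like $\alpha^{2\sigma-j}\beta^k \overline{\beta}^{j}$, are controllable in the weighted $\ell^1$ norm demanded by Theorem \ref{snthm08}; this should follow from the sharp localization estimates $\alpha \in \ell^1$ of Proposition \ref{unprop02}, but the interplay between the polynomial weights $W_{\kappa,\tau}$ (with $\tau \le -3$) and the soliton decay must be checked, and appeal to the abstract framework of \cite{Avy NLS II} is then what allows the subexponential decay $t^{-1}\log^{-2} t$ (rather than a stronger $\ell^p$ estimate) to suffice.
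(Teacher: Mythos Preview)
Your proposal is aligned with the paper's own treatment, but you should be aware that the paper does \emph{not} prove this statement: it is explicitly recorded as a Conjecture, and the paper offers only a sketch in the Conclusions section. That sketch is precisely the Soffer--Weinstein modulation-equations framework you describe --- Duhamel representation for $\vec{\beta}$, orthogonality of $\vec{\beta}$ to the kernel directions to derive the modulation ODEs for $\widehat{\mu}$ and $\widehat{\nu}$, and then an appeal to the bootstrapping machinery of \cite{Avy NLS II} using the weighted $\ell^{1}\to\ell^{\infty}$ decay of Theorem~\ref{snthm08}. Your outline is in fact more detailed than the paper's (you make the bootstrap quantity $M(t)$ explicit and flag the non-self-adjointness of $H$ and the weighted control of $\vec{\gamma}_{1}$ as the technical obstacles), but the route is the same.
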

The remaining proof of this conjecture may be sketched as follows. The Duhamel formula gives an expression for the evolution of the perturbation $\beta$:
\begin{align}\label{radiation}
	\vec{\beta}(t_1) &= e^{-it_1H}\vec{\beta}_{0} - i\int_{0}^{t_1}\mathrm{d}t_{2}\ e^{-i(t_1 - t_{2})H}\vec{\gamma}(t_2) ,
\end{align}
where $\vec{\beta}_{0} = \vec{\beta}(t = 0)$. This alone is not enough to determine the evolution of $\beta$ as the parameters $\widehat{\mu}$ and $\widehat{\nu}$ are time dependent. One must include separate evolution equations for these as well. First one assumes that $(\vec{\alpha},\vec{\beta}(0))_{\vec{\mathscr{H}}} = (\vec{\alpha},\mathrm{d}_{t}\vec{\beta}(0))_{\vec{\mathscr{H}}} = 0$, where $(\cdot,\cdot)_{\vec{\mathscr{H}}}$ is natural the extension of the inner product of $\mathscr{H}$ to the matrix system and $\vec{\alpha} = \begin{bmatrix} \alpha \\ -\alpha \end{bmatrix}$. This condition ensures that $\vec{\beta}$ remains in the span of the generalized eigenvectors of $H$. Then one takes the inner product of $\vec{\alpha}$ with both sides of the LNLS \eqref{LNLS} to arrive at
\begin{align}\label{modulation}
	0 = (\alpha,\gamma) \quad \Rightarrow \quad \mathrm{d}_{t}\widehat{\nu} = (\alpha,\alpha_{\widehat{\mu}})^{-1}(\alpha,\Re\gamma_{1}),\quad \mathrm{d}_{t}\widehat{\mu} = i(\alpha,\partial_{\widehat{\mu}}\alpha_{\widehat{\mu}})^{-1}(\alpha,\Im\gamma_{1}) .
\end{align}
Equations \eqref{radiation} and \eqref{modulation} together constitute the \emph{modulation equations} for the NLS \eqref{NLS}, where $\widehat{\mu}$ and $\widehat{\nu}$ are the \emph{modulation parameters} \cite{Avy NLS}.

Due to the work of Soffer and Weinstein in \cite{Avy NLS II} It is reasonable to assume that such a claim is true and it should be the case that one can prove it with an application of bootstrapping estimates. It their analysis it was shown that the linearized Hamiltonian strongly determines the evolution of the system and that obtaining its appropriate weighted $\ell^1 \rightarrow \ell^\infty$ estimates is sufficient to prove asymptotic stability for sufficiently small perturbations. The location of eigenvalues is also crucial for the study of the dynamics. The presence of imaginary eigenvalues indicates an exponential instability in time. For each real eigenvalue one must consider a separate modulation parameter and eigenfunction, the dynamics of which must be included in the modulation equations, thereby further complicating the problem. The full analysis of the spectrum of the linearized Hamiltonian is typically prohibitively difficult. The corresponding work on the 3D continuum radial NLS has recently come to a close after extensive collaborative effort. Please see O. Costin, M. Huang, and W. Schlag \cite{3D NLS} for the conclusion of work on that system.

The case of the real Nonlinear Klein-Gordon equation is likely to be much harder to address. Consider the discrete real Nonlinear Klein-Gordon equation (rNLKG) specified by
\begin{align}\label{rNLKG}
	-\partial^{2}_tu = L_0u - u^{p},\quad 1 < p \in \mathbb{Z}.
\end{align}
It was this equation that was first studied in the context of noncommutative field theory, see e.g. \cite{GMS}, and the mathematical analysis of \cite{CFW}, \cite{DJN 1}, and \cite{DJN 2}. There is an approach to the rNLKG which is similar to that of the methods of linearization taken with the method of modulation equations, but it is of a different character. There, the analogue of the stationary solution will be of the form $u(t) = \cos(\mu t + \nu)\alpha_{\mu}$. Due to the presence of nonlinearity this will not be a stationary solution in general. For the NLS one could interpret the stationary state as a nonlinear variant of the evolution of an eigenfunction with an associated isolated eigenvalue. For the rNLKG one is typically lead to interpret the ``quasi-stationary state'' as the nonlinear variant of the evolution of a resonance function with an associated embedded eigenvalue. The coupling of the ``radiation'' $\beta$ to the soliton will introduce an instability and lead to a resonance with a decay time. This picture was introduced and elaborated upon in the work of Soffer and Weinstein on nonlinear resonances and the nonlinear Fermi-Golden Rule \cite{Avy NLKG}.

Chen, Fr\"ohlich, and Walcher in \cite{CFW} conjecture that Equation \eqref{rNLKG} has localized metastable solutions. This leads us to the following conjecture.

\begin{snconj}\label{snconj02}
Solutions of the discrete rNLKG \eqref{rNLKG} which begin close to $u(t) = \cos(\mu t + \nu)\alpha_{\mu}$ are metastable resonance functions.
\end{snconj}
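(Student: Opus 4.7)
The plan is to adapt the Soffer--Weinstein framework for nonlinear resonances and the nonlinear Fermi Golden Rule \cite{Avy NLKG} to the discrete setting, using the spectral theory developed in this paper. I would write $u(t) = \cos(\theta(t))\alpha_{\widehat{\mu}(t)} + \beta(t)$ with $\theta(t) = \mu t + \nu + \text{modulation corrections}$ and impose orthogonality conditions on $\beta$ and $\partial_{t}\beta$ against the symmetry directions $\alpha_{\mu}$ and $\partial_{\mu}\alpha_{\mu}$, analogous to those used for the NLS case but tailored to the second-order Klein--Gordon structure. Substituting into \eqref{rNLKG} yields a second-order equation for $\beta$ of the form $\partial_{t}^{2}\beta + (L_{0} + V_{\mathrm{eff}}(t))\beta = \Gamma(t,\beta)$, where $V_{\mathrm{eff}}(t)$ is a time-periodic potential arising from the linearization of $u^{p}$ about $\cos(\theta)\alpha_{\mu}$, and $\Gamma$ collects modulation source terms plus genuinely nonlinear remainders.

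Next, I would Fourier-expand $V_{\mathrm{eff}}(t)$ in harmonics of the fundamental frequency $\mu$. Because the unperturbed linear dispersion for the KG equation has characteristic frequencies $\pm\sqrt{\lambda}$ for $\lambda \in \sigma(L_{0}) = [0,\infty)$, the forcing harmonics at frequencies $k\mu$, $k \in \mathbb{Z}_{>0}$, lie inside the continuous spectrum of the radiation operator, creating the resonant coupling characteristic of embedded eigenvalues in a Floquet problem. I would then apply the Poincar\'e normal form / multiscale procedure of \cite{Avy NLKG} to decompose $\beta = A(t)\phi_{\mu^{2}} + \eta(t)$, where $\phi_{\mu^{2}}$ is the resonant generalized eigenfunction (which is explicit by Proposition \ref{snprop01} as a Laguerre polynomial) and $\eta$ is the dispersive component. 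The driving equation for $A(t)$ will take the form $\dot{A} = -\Gamma_{\mathrm{FGR}} A|A|^{2} + (\text{higher order})$ with $\Gamma_{\mathrm{FGR}}$ the Fermi Golden Rule coefficient.

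The core analytical steps are then: (i) showing positivity and explicit lower bounds for $\Gamma_{\mathrm{FGR}}$ using the Laguerre representation of $\phi_{\lambda}$ together with the peak and $\ell^{1}$ estimates for $\alpha_{\mu}$ from \cite{paper 02}; (ii) controlling $\eta$ by the weighted $\ell^{1} \to \ell^{\infty}$ decay estimate of Theorem \ref{snthm08} applied to the appropriate Klein--Gordon propagator built from the square root of a shifted $L_{0}$; and (iii) closing a bootstrap on the coupled ODE for $A$ and dispersive PDE for $\eta$ that yields $|A(t)|^{2} \sim (1 + 2\Gamma_{\mathrm{FGR}} t)^{-1}$ on a timescale $T_{\mathrm{met}}$ which is exponentially long in the small parameter $|A(0)|^{-1}$, giving the metastability statement.

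The hard part will be item (ii) above: the dispersive estimate $t^{-1}\log^{-2}t$ of Theorems \ref{snthm07} and \ref{snthm08} is only borderline integrable, so the Gronwall-type bootstrap used in the proof of Theorem \ref{snthm08} must be upgraded to a full Duhamel expansion with the modulation source terms driven by $A(t)$, and in the Klein--Gordon setting one does not have the $\ell^{2}$ conservation law available for NLS. Additionally, establishing that $\Gamma_{\mathrm{FGR}} > 0$ rigorously requires understanding the analytic continuation of $f_{z}$ through the essential spectrum at the resonant energy $\mu^{2}$ and verifying that the relevant Laguerre overlap integral does not vanish for all $\mu > \mu_{*}$, which may require either a genericity argument in $\mu$ or an explicit asymptotic computation using the $\psi_{z}(0)$ expansion displayed in the Results section. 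Once these obstacles are overcome, the framework of \cite{Avy NLKG} delivers the conjectured metastability.
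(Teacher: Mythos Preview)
The statement you are attempting to prove is labeled \emph{Conjecture} in the paper, not Theorem, and the paper provides no proof of it. The authors explicitly write, immediately after stating it, ``We seek to address compare and contrast the proofs of these conjectures in future work.'' There is therefore no paper proof against which to compare your proposal; your sketch is an outline of an attack on an open problem.

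That said, your strategy is well aligned with the direction the paper itself suggests: the authors point to the Soffer--Weinstein nonlinear Fermi Golden Rule framework \cite{Avy NLKG} as the natural tool, and your decomposition into a modulated quasi-periodic bound state plus radiation, followed by extraction of the damping rate $\Gamma_{\mathrm{FGR}}$, is exactly that program. A few cautions are in order. First, Theorems \ref{snthm07} and \ref{snthm08} concern the NLS linearized operator $H$ on $\vec{\mathscr{H}}$, not a Klein--Gordon propagator; a separate dispersive estimate for $\cos(t\sqrt{L_0+V})$ and $\sin(t\sqrt{L_0+V})/\sqrt{L_0+V}$ with a time-periodic $V$ would have to be established from scratch, and the Floquet structure of the linearization around $\cos(\mu t)\alpha_\mu$ is genuinely different from the autonomous NLS case. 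Second, the decay $|A(t)|^2 \sim (1+2\Gamma_{\mathrm{FGR}}t)^{-1}$ that comes out of the Fermi Golden Rule ODE is polynomial, not exponential; the ``exponentially long'' lifetime conjectured by Chen--Fr\"ohlich--Walcher presumably refers to smallness of $\Gamma_{\mathrm{FGR}}$ in the large-$\mu$ regime, so your step (i) should aim for an upper bound on $\Gamma_{\mathrm{FGR}}$ that is exponentially small in $\mu$, not merely a positivity lower bound. Third, the soliton equation in \cite{paper 02} and the rNLKG nonlinearity $u^p$ do not coincide for general $p$, so the existence and sharp-peak estimates for $\alpha_\mu$ would need to be re-derived or adapted. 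These are not fatal objections, but they mark the points at which your outline stops being a proof sketch and becomes a research program --- which is consistent with the paper's own assessment.
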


We seek to address compare and contrast the proofs of these conjectures in future work.

\appendix

\section{Kato's notion of norm resolvent convergence}

We will review the arguments for the proof of Kato's notion of norm resolvent convergence as is presented in \cite{Kato}. The components below refer to yet other components of \cite{Kato} not included here. The full span of arguments needed to fill in all of the requirements of the proof would be beyond the scope of this appendix. We have included what may be considered the most immediately necessary pieces. We note that all operators which we use in the body of this work are closable and therefore may be replaced with their closures, if necessary, without loss of generality.

\begin{unthm}[IV-2.25 of \cite{Kato}, p.\ 206]
Let $A \in \mathcal{C}(\mathscr{B})$ be a closed operator on a Banach space, $\mathscr{B}$, and have a non-empty resolvent set $\rho(A)$. In order that a sequence $A_{n}$ of closed operators converge to $A$ in the generalized sense, it is necessary that each $z \in \rho(A)$ belong to $\rho(A_{n})$ for sufficiently large $n$ and
\begin{align} \label{2.33}
	||R^{A_{n}}_{z} - R^{A}_{z}|| \to 0 ,
\end{align}
while it is sufficient that this be true for some $z \in \rho(A)$.
\end{unthm}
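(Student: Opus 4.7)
The plan is to prove both directions using the standard gap-metric characterization of generalized convergence of closed operators. Recall that $A_{n} \to A$ in the generalized sense means $\widehat{\delta}(G(A_{n}),G(A)) \to 0$, where $\widehat{\delta}$ is the gap metric on closed subspaces of $\mathscr{B} \oplus \mathscr{B}$ and $G(T) = \{(u,Tu) : u \in \mathcal{D}(T)\}$. Two elementary moves drive everything: (i) the shear $\Phi_{z} : (u,v) \mapsto (u,v-zu)$ is a bicontinuous automorphism of $\mathscr{B} \oplus \mathscr{B}$ (with norms bounded by $1+|z|$) that carries $G(T)$ to $G(T-z)$, so generalized convergence of $A_{n}$ to $A$ is equivalent to generalized convergence of $A_{n}-z$ to $A-z$; (ii) the coordinate swap $(u,v) \mapsto (v,u)$ is an isometry that sends $G(T)$ to $G(T^{-1})$ whenever $T$ is invertible, which interchanges generalized convergence of $A_{n}-z$ with generalized convergence of $R^{A_{n}}_{z}$, interpreted provisionally as the closed inverse when it exists.

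For necessity, fix $z \in \rho(A)$. By the two reductions above, $\widehat{\delta}(G(A_{n}),G(A)) \to 0$ is equivalent to $\widehat{\delta}(G(R^{A_{n}}_{z}),G(R^{A}_{z})) \to 0$. Because $R^{A}_{z}$ is a bounded, everywhere-defined operator of some norm $M$, its graph sits in the conic region $\{(v,u) : ||u|| \le M||v||\}$. A standard perturbation lemma (Lemma IV-2.20 in Kato) says that any closed subspace lying within sufficiently small gap of the graph of such a bounded operator is itself the graph of a bounded, everywhere-defined operator with norm controlled in terms of $M$ and the gap. This forces $z \in \rho(A_{n})$ for large $n$ and converts the gap bound directly into $||R^{A_{n}}_{z} - R^{A}_{z}|| \to 0$.

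For sufficiency I would first promote single-point norm resolvent convergence to norm convergence throughout $\rho(A)$ by a Neumann-series argument: if $||R^{A_{n}}_{z_{0}} - R^{A}_{z_{0}}|| \to 0$ then $||R^{A_{n}}_{z_{0}}||$ is uniformly bounded for large $n$, so the expansion
\begin{align*}
R^{A_{n}}_{z} = \sum_{k \ge 0}(z-z_{0})^{k}(R^{A_{n}}_{z_{0}})^{k+1}
\end{align*}
converges uniformly in $n$ on the disk $|z-z_{0}| < (1-\varepsilon)||R^{A}_{z_{0}}||^{-1}$ and the norm convergence at $z_{0}$ is inherited termwise. The set of $z \in \rho(A)$ at which convergence holds is therefore open in $\rho(A)$; propagating the uniform bound on $||R^{A_{n}}_{z}||$ around connected components by the same series argument shows that this set is also closed in $\rho(A)$, hence equals $\rho(A)$. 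Norm convergence of the resolvents at any such $z$ then implies gap convergence of their graphs, and reversing reductions (i) and (ii) gives $\widehat{\delta}(G(A_{n}),G(A)) \to 0$.

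The main obstacle is the quantitative perturbation lemma used in the necessity direction: showing that closed subspaces lying within small gap of the graph of a bounded everywhere-defined operator are themselves such graphs, with uniformly controlled norm, and in particular that this yields the extraction of ``$z \in \rho(A_{n})$ for sufficiently large $n$'' rather than merely a closed operator with possibly proper domain. Everything else — the shears, the swap, the Neumann-series propagation across $\rho(A)$ — is essentially mechanical once this lemma is in hand, which is why the result is typically presented in Kato as a direct corollary of that lemma rather than developed from scratch.
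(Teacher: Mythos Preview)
The paper does not actually prove this theorem. It states Theorem IV-2.25 as a direct citation from Kato, explicitly disclaims giving a full proof (``The full span of arguments needed to fill in all of the requirements of the proof would be beyond the scope of this appendix''), and then records only Remark IV-3.13 and Problem IV-3.14 from Kato --- which together amount to the explicit inequality
\begin{align*}
	\|R^{A}_{z} - R^{B}_{z}\| \le \left[ 1 - |z-z_{0}|\, \| (A-z)^{-1}(A-z_{0}) \|\,\|R^{A}_{z_{0}} - R^{B}_{z_{0}}\| \right]^{-1} \| (A-z)^{-1}(A-z_{0}) \|^{2}\, \|R^{A}_{z_{0}} - R^{B}_{z_{0}}\|
\end{align*}
showing that smallness of the resolvent difference propagates from one $z_{0}$ to all of $\rho(A)$. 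That is the entirety of the paper's argumentative content for this statement; there is nothing to compare your necessity argument against.

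Your sketch is a correct and faithful outline of the proof as it appears in Kato: the shear $\Phi_{z}$ and coordinate swap reduce graph-gap convergence of $A_{n}$ to graph-gap convergence of the resolvents, and Lemma IV-2.20 (closed subspaces within small gap of the graph of a bounded everywhere-defined operator are themselves such graphs) is exactly the ingredient Kato uses for necessity. Your Neumann-series propagation in the sufficiency direction is precisely what the paper's quoted Remark IV-3.13 and Problem IV-3.14 encode, though the paper's explicit formula is sharper than the open-and-closed argument you give. In short, you have supplied the proof the paper deliberately omitted, and your approach matches Kato's own.
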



\noindent ``Converge in the generalized sense'' means that the graphs of the two operators converge in the ``gap distance norm'', which is approximately the maximum geometric distance between the graphs of the operators as submanifolds of the extended Banach space.

\begin{rem}[IV-3.13 of \cite{Kato}, p.\ 211]
Theorem IV-3.12 shows explicitly that if $||R^{A}_{z} - R^{B}_{z}||$ is small for some $z$, then it is small for every $z$. More precisely, for any $A \in \mathcal{B}(\mathcal{\mathscr{B}})$ a bounded linear operator on a Banach space $\mathscr{B}$ and $z,z_{0} \in \rho(A)$, then there is a constant $c$ such that
\begin{align}\label{3.9}
	||R^{A}_{z} - R^{B}_{z}|| \le c ||R^{A}_{z_{0}} - R^{B}_{z_{0}}||
\end{align}
for any $B \in \mathcal{B}(\mathscr{B})$ for which $z_{0} \in \rho(B)$ and $ ||R^{A}_{z_{0}} - R^{B}_{z_{0}}||$ is sufficiently small (then $z \in \rho(B)$ is a consequence). This is  another proof of a remark given after Theorem IV-2.25.
\end{rem}

\textbf{Problem} (IV-3.14 of \cite{Kato}, p.\ 212). A more explicit formula than \eqref{3.9} is
\begin{align}
	||R^{A}_{z} - R^{B}_{z}|| &\le \left[ 1 - |z-z_{0}| || (A-z)^{-1}(A-z_{0}) ||\ ||R^{A}_{z_{0}} - R^{B}_{z_{0}}|| \right]^{-1} \\
		&\quad\quad \times || (A-z)^{-1}(A-z_{0}) ||^{2}\ ||R^{A}_{z_{0}} - R^{B}_{z_{0}}|| ,
\end{align}
which is valid if $||^{2}\ ||R^{A}_{z_{0}} - R^{B}_{z_{0}}||$ is so small that the denominator on the right is positive. Here $(A-z)^{-1}(A-z_{0})$ is a convenient expression for $(A-z)^{-1}(A-z_{0}) = (A - z_{0})R^{A}_{z} = 1 + (z - z_{0})R^{A}_{z}$.

By considering the above arguments one may then arrive at the statement of the propositon.

\begin{unprop}[Kato's notion of norm resolvent convergence \cite{Kato}, p. 427]\label{nrc}
Let $\{A_{n}\}_{n=0}^{\infty}$ be a sequence of closed operators on a Banach space $\mathscr{B}$. If $(A_{n} - z)^{-1}$ converges in norm to $(A - z)^{-1}$ as $n \nearrow \infty$ for $A$ closed and for some $z \in \rho(A)$, then the same is true for every $z \in \rho(A)$.
\end{unprop}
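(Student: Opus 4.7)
The plan is to chain together the explicit estimate from Problem IV-3.14 (quoted above) along paths in $\rho(A)$, leveraging openness of the resolvent set and analyticity of the resolvent.

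Fix $z_{0} \in \rho(A)$ with $||R^{A_{n}}_{z_{0}} - R^{A}_{z_{0}}|| \to 0$, and call $z \in \rho(A)$ \emph{good} if $z \in \rho(A_{n})$ for all sufficiently large $n$ and $||R^{A_{n}}_{z} - R^{A}_{z}|| \to 0$. The key local step I would establish is: if $z_{1}$ is good and $|z - z_{1}| < (2||R^{A}_{z_{1}}||)^{-1}$, then $z$ is also good. This comes from the Neumann expansion $R^{A}_{z} = R^{A}_{z_{1}}\sum_{k \ge 0}(z - z_{1})^{k}(R^{A}_{z_{1}})^{k}$, absolutely convergent on the disk $|z - z_{1}| < ||R^{A}_{z_{1}}||^{-1}$, together with the analogous expansion for $R^{A_{n}}_{z}$: since $||R^{A_{n}}_{z_{1}}||$ is eventually bounded by $2||R^{A}_{z_{1}}||$ by the triangle inequality, the series for $R^{A_{n}}_{z}$ converges uniformly in $n$ on the half-radius disk, placing it in $\rho(A_{n})$ eventually. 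Termwise subtraction then yields
\begin{align*}
||R^{A_{n}}_{z} - R^{A}_{z}|| \le C(z, z_{1})\,||R^{A_{n}}_{z_{1}} - R^{A}_{z_{1}}|| \to 0,
\end{align*}
with $C(z, z_{1})$ a finite geometric-series constant. Equivalently, one may invoke the closed-form bound of Problem IV-3.14 with $B = A_{n}$ directly, in which the denominator remains positive once $||R^{A_{n}}_{z_{1}} - R^{A}_{z_{1}}||$ is small enough.

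I would then finish globally by a compactness argument. Any $z \in \rho(A)$ in the same connected component as $z_{0}$ is joined to $z_{0}$ by a compact arc $\Gamma \subset \rho(A)$, along which $||R^{A}_{w}||^{-1}$ is uniformly bounded below. Cover $\Gamma$ by finitely many Neumann disks of the form $\{|w - z_{j}| < (2||R^{A}_{z_{j}}||)^{-1}\}$ centered at a chain $z_{0}, z_{1}, \ldots, z_{N} = z$ with each $z_{j+1}$ inside the $z_{j}$-disk; a finite induction using the local step transfers goodness from $z_{0}$ through $z_{N}$.

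The main obstacle is the possibility that $\rho(A)$ is disconnected, in which case the argument only reaches the component of $z_{0}$; Kato's statement implicitly assumes either that $\rho(A)$ is connected or that convergence is known at one point of each component. In the applications of this paper (Theorem \ref{snthm06}), $\sigma(H)$ and $\sigma(H_{2})$ lie in $\mathbb{R} \cup i\mathbb{R}$ with a single unbounded connected resolvent set, so no genuine obstruction arises and the proposition delivers norm resolvent convergence throughout $\rho(H)$ from convergence at a single off-axis $z_{0}$.
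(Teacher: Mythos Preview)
Your local Neumann step and chaining argument are correct, but the route is more circuitous than what the paper (following Kato) actually does, and your worry about connectedness of $\rho(A)$ is unnecessary. The paper does not chain; it invokes the estimate of Problem IV-3.14 (equivalently Remark IV-3.13) \emph{directly} for an arbitrary pair $z,z_{0}\in\rho(A)$, not just for nearby points. The point you are missing is that the quantity $(A-z)^{-1}(A-z_{0}) = 1+(z-z_{0})R^{A}_{z}$ is a bounded operator for \emph{every} $z\in\rho(A)$, so the denominator $1-|z-z_{0}|\,\|(A-z)^{-1}(A-z_{0})\|\,\|R^{A}_{z_{0}}-R^{A_{n}}_{z_{0}}\|$ becomes positive as soon as $\|R^{A}_{z_{0}}-R^{A_{n}}_{z_{0}}\|$ drops below a threshold depending only on $z,z_{0},A$. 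Concretely, factoring $1-(z-z_{0})R^{A_{n}}_{z_{0}} = [1-(z-z_{0})R^{A}_{z_{0}}]\bigl[1-(z-z_{0})K(R^{A_{n}}_{z_{0}}-R^{A}_{z_{0}})\bigr]$ with $K=(A-z)^{-1}(A-z_{0})$ shows in one stroke that $z\in\rho(A_{n})$ and yields the linear bound on $\|R^{A}_{z}-R^{A_{n}}_{z}\|$, with no path in $\rho(A)$ required. Thus Kato's statement holds across all components of $\rho(A)$, and your final caveat about disconnectedness can be dropped. Your Neumann-plus-compactness argument has the virtue of being entirely self-contained with the first resolvent identity alone; the paper's one-step approach is shorter and global, at the cost of needing the slightly less obvious identity $[1-(z-z_{0})R^{A}_{z_{0}}]^{-1}=1+(z-z_{0})R^{A}_{z}$.
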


\thanks{We thank Marius Beceanu for his helpful discussions. This work was supported in part by the NSF grant DMS 1201394.}

\end{document}